\def\BibTeX{{\rm B\kern-.05em{\sc i\kern-.025em b}\kern-.08em
		T\kern-.1667em\lower.7ex\hbox{E}\kern-.125emX}}
\Crefname{equation}{Eq.}{Eqs.}
\Crefname{figure}{Fig.}{Figs.}
\Crefname{tabular}{Tab.}{Tabs.}
\Crefname{remark}{Rem.}{Rems.}
\Crefname{section}{Sec.}{Secs.}
\Crefname{subsection}{Sec.}{Secs.}
\Crefname{theorem}{Thm.}{Thms.}
\Crefname{example}{Ex.}{Exs.}
\Crefname{lemma}{Lem.}{Lems.}
\Crefname{corollary}{Cor.}{Cors.}
\Crefname{definition}{Def.}{Defs.}
\Crefname{appendix}{App.}{Apps.}
\Crefname{algorithm}{Alg.}{Algs.}
\crefname{equation}{Eq.}{Eqs.}
\crefname{figure}{Fig.}{Figs.}
\crefname{tabular}{Tab.}{Tabs.}
\crefname{remark}{Rem.}{Rems.}
\crefname{section}{Sec.}{Secs.}
\crefname{subsection}{Sec.}{Secs.}
\crefname{theorem}{Thm.}{Thms.}
\crefname{example}{Ex.}{Exs.}
\crefname{lemma}{Lem.}{Lems.}
\crefname{corollary}{Cor.}{Cors.}
\crefname{definition}{Def.}{Defs.}
\crefname{appendix}{App.}{Apps.}
\crefname{algorithm}{Alg.}{Algs.}
\algnewcommand{\TRUE}{\textbf{True}}
\algnewcommand{\FALSE}{\textbf{False}}
\definecolor{lipicsYellow}{rgb}{0.98,0.77,0.06}
\definecolor{lipicsGray}{rgb}{0.61,0.61,0.61}
\tikzset{initial text={}}
\tikzstyle{state-circle}=[circle,draw=Black,inner sep=2pt,minimum size=0.35cm,outer sep=2pt]
\tikzstyle{state}=[ellipse,draw=Black,inner sep=2pt,outer sep=2pt]
\tikzstyle{state-box}=[rectangle,draw=Black,inner sep=2pt,minimum size=0.35cm,outer sep=2pt]
\tikzstyle{probability-state}=[circle,fill=Black,draw=Black,inner sep=0pt,minimum size=1pt]
\tikzstyle{stateText}=[rectangle,fill=White,draw=Black,inner sep=5pt,minimum height=0pt]
\tikzstyle{arrow-urgent}=[-latex,draw=Black]
\tikzstyle{arrow-non-urgent}=[-latex,draw=Black,dashed]
\tikzstyle{arrow-simple}=[-latex,draw=Black]
\tikzstyle{arrow-reset}=[-latex,dashed]
\tikzstyle{arrow-trigger}=[-latex,decoration={snake,segment length=4,amplitude=.6, post=lineto,post length=2pt},decorate]
\tikzstyle{arrow-segment-after-probabilities}=[-latex,draw=Black]
\tikzstyle{markovian}=[->,draw=Black,postaction={decorate},decoration={markings,mark=at position .5 with {\arrow{>}}}]
\tikzstyle{arrow-segment-before-probabilities-immediate}=[-,draw=Black]
\tikzstyle{arrow-segment-before-probabilities-timed}=[-,draw=Black]
\tikzstyle{non-urgent}=[densely dashed]
\tikzstyle{safe}=[fill={lipicsYellow},postaction={pattern=horizontal
\tikzstyle{reach}=[fill={lipicsGray},postaction={pattern=vertical lines,pattern color=lipicsGray!30}]
\tikzstyle{neutral}=[fill={Gray!30},postaction={pattern=crosshatch dots,pattern color=Gray!80},font=\bf]
\newcommand{\highlight}[1]{\textsf{\textbf{#1}}}
\newcommand*{\fancyrefapplabelprefix}{app}
  \providecommand*{\frefappname}{appendix}%
  \providecommand*{\Frefappname}{Appendix}%
\newcommand*{\fancyrefexlabelprefix}{ex}
  \providecommand*{\frefexname}{example}%
  \providecommand*{\Frefexname}{Example}%
\newcommand*{\fancyrefdeflabelprefix}{def}
  \providecommand*{\frefdefname}{definition}%
  \providecommand*{\Frefdefname}{Definition}%
\newcommand*{\fancyreftheolabelprefix}{theo}
  \providecommand*{\freftheoname}{theorem}%
  \providecommand*{\Freftheoname}{Theorem}%
\newcommand*{\fancyrefalglabelprefix}{alg}
  \providecommand*{\frefalgname}{algorithm}%
  \providecommand*{\Frefalgname}{Algorithm}%
\newcommand*{\fancyreflemlabelprefix}{lem}
  \providecommand*{\freflemname}{lemma}%
  \providecommand*{\Freflemname}{Lemma}%
\newcommand*{\fancyrefcorlabelprefix}{cor}
  \providecommand*{\frefcorname}{corollary}%
  \providecommand*{\Frefcorname}{Corollary}%
\renewcommand{\phi}{\upvarphi}
\renewcommand{\psi}{\uppsi}
\renewcommand{\rho}{\upvarrho}
\renewcommand{\upsilon}{\upupsilon}
\renewcommand{\chi}{\upchi}
\renewcommand{\max}{\mathop{\mathsf{max}}}
\renewcommand{\min}{\mathop{\mathsf{min}}}
\renewcommand{\sup}{\mathop{\mathsf{sup}}}
\renewcommand{\inf}{\mathop{\mathsf{inf}}}
\definecolor{myblue}{RGB}{1, 26, 110}
\newcommand\newlink[2]{{\protect\hyperlink{#1}{\color{myblue} #2}}}
\newcommand\newtarget[2]{\Hy@raisedlink{\hypertarget{#1}{}}#2}
\newcommand\BEC{\newlink{def:bec}{\text{BEC}}\xspace}
\newcommand\BECs{\newlink{def:bec}{\text{BECs}}\xspace}
\newcommand{\CSG}{\newlink{def:csg}{\text{CSG}}\xspace}
\newcommand{\CSGs}{\newlink{def:csg}{\text{CSGs}}\xspace}
\newcommand{\MDP}{\newlink{def:mdp}{\text{MDP}}\xspace}
\newcommand{\EC}{\newlink{def:ec}{\text{EC}}\xspace}
\newcommand{\ECs}{\newlink{def:ec}{\text{ECs}}\xspace}
\newcommand{\MEC}{\newlink{def:mec}{\text{MEC}}\xspace}
\newcommand{\MECs}{\newlink{def:mec}{\text{MECs}}\xspace}
\newcommand{\TSGs}{\newlink{def:tsg}{\text{TSGs}}\xspace}
\newcommand{\BVI}{\newlink{alg:bvi}{\text{BVI}}\xspace}
\newcommand{\DEFLATEALG}{\newlink{alg:deflate_mecs}{\procname{DEFLATE}}\xspace}
\newcommand{\VI}{VI\@\xspace}
\newenvironment{lemma*}[1]{\medskip\noindent\textbf{Recap of Lemma\ifthenelse{\isempty{#1}}{}{~#1}.}}{}
\newenvironment{theorem*}[1]{\medskip\noindent\textbf{Recap of Theorem\ifthenelse{\isempty{#1}}{}{~#1}.}}{}
\newcommand{\counter}{\newlink{def:counter}{\mathsf{Trap}}}
\newcommand{\exitingSG}{\newlink{def:exitingSubGame}{\mathsf{Z}^{\mathsf{exit}}_{\valuation}}}
\DeclareDocumentCommand{\variable}{O{} D<>{}}{\mathsf{v}_{#1}^{#2}}
\DeclareDocumentCommand{\variables}{O{} D<>{}}{\mathcal{V}_{#1}^{#2}}
\newcommand{\set}[1]{\{#1\}}
\newcommand{\powerset}[1]{2^{#1}}
\newcommand{\NN}{\mathbb{N}}
\newcommand{\QQ}{\mathbb{Q}}
\newcommand{\RR}{\mathbb{R}}
\DeclareDocumentCommand{\anySet}{O{} D<>{}}{\mathcal{X}_{#1}^{#2}}
\DeclareDocumentCommand{\bijection}{O{} D<>{} D(){}}{\mathsf{bi}_{#1}^{#2}\ifthenelse{\isempty{#3}}{}{(#3)}}
\DeclareMathOperator*{\argmax}{arg\,max}
\DeclareMathOperator*{\argmin}{arg\,min}
\DeclareDocumentCommand{\update}{O{} D<>{} D(){}}{\mathcal{U}_{#1}^{#2}\ifthenelse{\isempty{#3}}{}{(#3)}}
\newcommand{\proj}[2]{\mathsf{proj}_{#1}(#2)}
\newcommand{\deflBell}{(\deflop \circ \preop)}
\newcommand{\deflop}{\newlink{DefDeflop}{\mathfrak{D}}}
\newcommand{\textsll}[1]{\textsf{#1}}
\newcommand{\np}{\textsll{NP}}
\newcommand{\conp}{\textsll{co-NP}}
\newcommand{\pspace}{\textsll{PSPACE}}
\newcommand{\p}{\textsll{P}}
\DeclareDocumentCommand{\alg}{O{} D<>{} D(){}}{\mathcal{A}_{#1}^{#2}\ifthenelse{\isempty{#3}}{}{(#3)}}
\DeclareDocumentCommand{\event}{O{} D<>{} D(){}}{\mathcal{E}_{#1}^{#2}\ifthenelse{\isempty{#3}}{}{(#3)}}
\DeclareDocumentCommand{\distribution}{O{} D<>{}
  D(){}}{\mu_{#1}^{#2}\ifthenelse{\isempty{#3}}{}{(#3)}}
\DeclareDocumentCommand{\distributions}{O{} D<>{} D(){}}{\mathsf{Dist}_{#1}^{#2}\ifthenelse{\isempty{#3}}{}{(#3)}}
\DeclareDocumentCommand{\probability}{O{} D<>{} D(){} D(){}}{\mathbb{P}_{#1}^{#2}\ifthenelse{\isempty{#3}}{}{\big(#3\big)}\ifthenelse{\isempty{#4}}{}{\big(#4\big)}}
\DeclareDocumentCommand{\preop}{O{} D<>{} D(){} D(){}}{\newlink{def:pre}{\mathfrak{B}}_{#1}^{#2}\ifthenelse{\isempty{#3}}{}{(#3)}\ifthenelse{\isempty{#4}}{}{(#4)}}
\DeclareDocumentCommand{\expectation}{O{} D<>{}  D(){}}{\mathbb{E}_{#1}^{#2}\ifthenelse{\isempty{#3}}{}{[#3]}}
\DeclareDocumentCommand{\support}{O{} D<>{} D(){}}{\mathsf{Supp}_{#1}^{#2}\ifthenelse{\isempty{#3}}{}{(#3)}}
\DeclareDocumentCommand{\cylinder}{O{} D<>{} D(){}}{\mathsf{Cyl}_{#1}^{#2}\ifthenelse{\isempty{#3}}{}{\big(#3\big)}}
\DeclareDocumentCommand{\algebra}{O{} D<>{} D(){}}{\mathcal{F}_{#1}^{#2}\ifthenelse{\isempty{#3}}{}{\big(#3\big)}}
\DeclareDocumentCommand{\probspace}{O{} D<>{}}{\big(\Omega_{#1}^{#2},\mathcal{F}_{#1}^{#2},\probability\big)}
\DeclareDocumentCommand{\density}{O{} D<>{} D(){}}{\mathsf{f}_{#1}^{#2}\ifthenelse{\isempty{#3}}{}{(#3)}}
\DeclareDocumentCommand{\randomVariable}{O{} D<>{} D(){} D(){}}{\mathsf{C}_{#1}^{#2}\ifthenelse{\isempty{#3}}{}{(#3)}\ifthenelse{\isempty{#4}}{}{(#4)}}
\DeclareDocumentCommand{\valuation}{t* O{} D<>{} D(){}}{\IfBooleanTF{#1}{\eta}{\upsilon}_{#2}^{#3}\ifthenelse{\isempty{#4}}{}{(#4)}}
\DeclareDocumentCommand{\valuations}{O{} D<>{} D(){}}{\mathsf{Val}_{#1}^{#2}\ifthenelse{\isempty{#3}}{}{(#3)}}
\DeclareDocumentCommand{\matrix}{t* O{} D<>{} D(){}}{\IfBooleanTF{#1}{\mathsf{Z}}{\mathsf{Z}}_{#2}^{#3}\ifthenelse{\isempty{#4}}{}{(#4)}}
\DeclareDocumentCommand{\player}{O{} D<>{} D(){} t* O{}}{\mathsf{p}_{#1}^{#2}\ifthenelse{\isempty{#3}}{}{(#3)}\ifthenelse{\isempty{#5}}{}{[#5]}}
\DeclareDocumentCommand{\execution}{O{} D<>{} D(){} t* O{}}{\pi_{#1}^{#2}\ifthenelse{\isempty{#3}}{}{(#3)}\ifthenelse{\isempty{#5}}{}{[#5]}}
\DeclareDocumentCommand{\executions}{O{} D<>{} D(){\game}}{\mathsf{Play}_{#1}^{#2}\ifthenelse{\isempty{#3}}{}{(#3)}}
\DeclareDocumentCommand{\exampleExecution}{O{} D<>{} D(){} D||{n} t*}{
 \state[0]\IfBooleanTF{#5}{\state[1]\state[2]\cdots}{\cdots\state[#4]}}
\DeclareDocumentCommand{\last}{O{} D<>{} D(){}}{\mathsf{last}_{#1}^{#2}\ifthenelse{\isempty{#3}}{}{(#3)}}
\DeclareDocumentCommand{\path}{O{} D<>{} D(){}}{\pi_{#1}^{#2}\ifthenelse{\isempty{#3}}{}{(#3)}}
\DeclareDocumentCommand{\paths}{O{} D<>{} D(){\game}}{\mathsf{Paths}_{#1}^{#2}\ifthenelse{\isempty{#3}}{}{(#3)}}
\DeclareDocumentCommand{\examplePath}{O{} D<>{} D(){} D||{n}}{\path[#1]<#2>(#3) = \state[0]\dots\state[#4]}
\DeclareDocumentCommand{\position}{O{} D<>{} D(){}}{\mathcal{X}_{#1}^{#2}\ifthenelse{\isempty{#3}}{}{(#3)}}
\DeclareDocumentCommand{\state}{O{} D<>{} D(){}}{s_{#1}^{#2}\ifthenelse{\isempty{#3}}{}{(#3)}}
\DeclareDocumentCommand{\states}{O{} D<>{} D(){}}{\mathsf{S}_{#1}^{#2}\ifthenelse{\isempty{#3}}{}{(#3)}}
\DeclareDocumentCommand{\initialState}{O{} D<>{} D(){}}{\mathsf{s}_{0\ifthenelse{\isempty{#1}}{}{,#1}}^{#2}\ifthenelse{\isempty{#3}}{}{(#3)}}
\DeclareDocumentCommand{\action}{ O{} D(){} D<>{} }{%
  \ifstrequal{#1}{\reach}{\mathsf{a}}{%
    \ifstrequal{#1}{\safe}{\mathsf{b}}{\mathsf{p}}%
  }_{#2}^{#3}%
}
\DeclareDocumentCommand{\actions}{ O{} D<>{} }{%
  \ifstrequal{#1}{\reach}{\mathsf{A}}{%
    \ifstrequal{#1}{\safe}{\mathsf{B}}{\mathcal{A}}%
  }%
  \IfValueT{#2}{^{\mathsf{#2}}}%
}
\DeclareDocumentCommand{\badAction}{ O{} D(){} D<>{} }{%
  \ifstrequal{#1}{\reach}{\hat{\mathsf{a}}}{%
    \ifstrequal{#1}{\safe}{\hat{\mathsf{b}}}{\hat{\mathsf{p}}}%
  }_{#2}^{#3}%
}
\DeclareDocumentCommand{\badActionsPair}{O{} D<>{} D(){}}{(\hat{\action[\reach]}, \hat{\action[\safe]}){#1}^{#2}\ifthenelse{\isempty{#3}}{}{(#3)}}
\DeclareDocumentCommand{\agent}{O{} D<>{} D(){}}{\mathsf{p}_{#1}^{#2}\ifthenelse{\isempty{#3}}{}{(#3)}}
\DeclareDocumentCommand{\agents}{O{} D<>{} D(){}}{\set{\attacker,\defender}_{#1}^{#2}\ifthenelse{\isempty{#3}}{}{(#3)}}
\DeclareDocumentCommand{\transitions}{O{} D<>{} D(){} D(){} t*}{\mathsf{\delta}_{#1}^{#2}\ifthenelse{\isempty{#3}}{}{\big(#3\big)}\ifthenelse{\isempty{#4}}{}{\big(#4\big)}}
\DeclareDocumentCommand{\play}{O{} D<>{} D(){}}{\rho_{#1}^{#2}\ifthenelse{\isempty{#3}}{}{(#3)}}
\DeclareDocumentCommand{\plays}{O{} D<>{} D(){}}{\mathsf{Plays}_{#1}^{#2}\ifthenelse{\isempty{#3}}{}{(#3)}}
\DeclareDocumentCommand{\strategy}{O{} D<>{} D(){} D(){}
  t*}{\IfBooleanTF{#5}{\sigma}{\IfNoValueTF{#1}{\rho}{\ifstrequal{#1}{\safe}{\sigma}{\ifstrequal{#1}{\maximize}{\pi}{\ifstrequal{#1}{\minimize}{\pi}{\rho}}}}}_{\IfNoValueTF{#1}{}{\ifstrequal{#1}{\safe}{}{\ifstrequal{#1}{\reach}{}{#1}}}}^{#2}\ifthenelse{\isempty{#3}}{}{(#3)}\ifthenelse{\isempty{#4}}{}{(#4)}}
\newcommand{\rhoU}{\strategy[\upperBound]}
\newcommand{\rhoV}{\strategy[\val]}
\newcommand{\sigmaU}{\sigma_{\upperBound}}
\newcommand{\sigmaV}{\sigma_{\val}}
\DeclareDocumentCommand{\strategies}{O{} D<>{} D(){} D(){}
  t*}{\IfBooleanTF{#5}{\mathcal{S}}{\IfNoValueTF{#1}{\mathcal{R}}{\ifstrequal{#1}{\safe}{\mathcal{S}}{\ifstrequal{#1}{\maximize}{\Pi}{\ifstrequal{#1}{\minimize}{\Pi}{\mathcal{R}}}}}}_{\IfNoValueTF{#1}{}{\ifstrequal{#1}{\safe}{}{\ifstrequal{#1}{\reach}{}{#1}}}}^{#2}\ifthenelse{\isempty{#3}}{}{(#3)}\ifthenelse{\isempty{#4}}{}{(#4)}}
\DeclareDocumentCommand{\selector}{O{} D<>{} D(){}}{\chi_{#1}^{#2}\ifthenelse{\isempty{#3}}{}{(#3)}}
\DeclareDocumentCommand{\selectors}{O{} D<>{} D(){}}{\mathcal{X}_{#1}^{#2}\ifthenelse{\isempty{#3}}{}{(#3)}}
\DeclareDocumentCommand{\game}{O{} D<>{} D(){}}{\mathsf{G}_{#1}^{#2}\ifthenelse{\isempty{#3}}{}{(#3)}}
\DeclareDocumentCommand{\games}{O{} D<>{} D(){}}{\mathcal{G}_{#1}^{#2}\ifthenelse{\isempty{#3}}{}{(#3)}}
\DeclareDocumentCommand{\SG}{O{} D<>{} D(){}}{\mathsf{SG}_{#1}^{#2}\ifthenelse{\isempty{#3}}{}{(#3)}}
\DeclareDocumentCommand{\transitionProbability}{O{} D<>{}  D(){}}{\mathsf{P}_{#1}^{#2}\ifthenelse{\isempty{#3}}{}{(#3)}}
\DeclareDocumentCommand{\safe}{O{} D<>{} D(){}}{\mathscr{S}_{#1}^{#2}\ifthenelse{\isempty{#3}}{}{[#3]}}
\DeclareDocumentCommand{\reach}{O{} D<>{}  D(){}}{\mathscr{R}_{#1}^{#2}\ifthenelse{\isempty{#3}}{}{[#3]}}
\DeclareDocumentCommand{\fail}{O{} D<>{} D(){}}{\mathsf{F}_{#1}^{#2}\ifthenelse{\isempty{#3}}{}{[#3]}}
\DeclareDocumentCommand{\success}{O{} D<>{}  D(){}}{\mathsf{T}_{#1}^{#2}\ifthenelse{\isempty{#3}}{}{[#3]}}
\DeclareDocumentCommand{\val}{O{} D<>{} D(){} D(){}
  t*}{\mathsf{V}_{#1}^{#2}\ifthenelse{\isempty{#3}}{}{(#3)}\ifthenelse{\isempty{#4}}{}{(#4)}}
\DeclareDocumentCommand{\attractor}{O{} D<>{} D(){} D(){}
  t*}{\mathsf{Attr}_{#1}^{#2}\ifthenelse{\isempty{#3}}{}{(#3)}\ifthenelse{\isempty{#4}}{}{(#4)}}
\DeclareDocumentCommand{\rank}{O{} D<>{} D(){} D(){}
  t*}{\mathsf{rank}_{#1}^{#2}\ifthenelse{\isempty{#3}}{}{(#3)}\ifthenelse{\isempty{#4}}{}{(#4)}}
\DeclareDocumentCommand{\maximize}{O{} D<>{} D(){}}{\top_{#1}^{#2}\ifthenelse{\isempty{#3}}{}{[#3]}}
\DeclareDocumentCommand{\minimize}{O{} D<>{} D(){}}{\bot_{#1}^{#2}\ifthenelse{\isempty{#3}}{}{[#3]}}
\DeclareDocumentCommand{\winning}{O{\safe} D<>{} D(){}}{\mathsf{W}_{#1}^{#2}\ifthenelse{\isempty{#3}}{}{[#3]}}
\DeclareDocumentCommand{\differ}{O{} D<>{} D(){}}{\Delta_{#1}^{#2}\ifthenelse{\isempty{#3}}{}{(#3)}}
\DeclareDocumentCommand{\actionAssignment}{O{} D<>{} D(){}}{\Gamma_{#1}^{#2}\ifthenelse{\isempty{#3}}{}{(#3)}}
\DeclareDocumentCommand{\exampleGame}{O{} D<>{} D(){} t*}{\game[#1]<#2>(#3):=(\states[#1]<#2>\IfBooleanTF{#4}{(#3)}{},\actions[#1]\IfBooleanTF{#4}{(#3)}{},\actionAssignment[\reach\ifthenelse{\isempty{#1}}{}{,#1}]<#2>(#3),\actionAssignment[\safe\ifthenelse{\isempty{#1}}{}{,#1}]<#2>(#3),\transitions[#1]<#2>\IfBooleanTF{#4}{(#3)}{},\initialState,\success)}
\DeclareDocumentCommand{\destination}{O{} D<>{} D(){}}{\mathsf{Post}_{#1}^{#2}\ifthenelse{\isempty{#3}}{}{(#3)}}
\DeclareDocumentCommand{\endComponent}{O{} D<>{} D(){}}{\mathsf{C}_{#1}^{#2}\ifthenelse{\isempty{#3}}{}{(#3)}}
\DeclareDocumentCommand{\ecStates}{O{} D<>{} D(){}}{\mathcal{X}_{#1}^{#2}\ifthenelse{\isempty{#3}}{}{(#3)}}
\DeclareDocumentCommand{\ecActions}{O{} D<>{} D(){}}{\mathcal{B}_{#1}^{#2}\ifthenelse{\isempty{#3}}{}{(#3)}}
\DeclareDocumentCommand{\maxEC}{O{} D<>{} D(){}}{\mathcal{M}_{#1}^{#2}\ifthenelse{\isempty{#3}}{}{(#3)}}
\DeclareDocumentCommand{\bestExit}{O{} D<>{} D(){}}{\newlink{def:bestExit}{\mathsf{bestExit}}_{\ifstrequal{#1}{\reach}{}{#1}}^{#2}\ifthenelse{\isempty{#3}}{}{(#3)}}
\DeclareDocumentCommand{\exitVal}{O{} D<>{} D(){}}{\newlink{exitValDef}{\mathsf{exitVal}}_{\ifstrequal{#1}{\reach}{}{#1}}^{#2}\ifthenelse{\isempty{#3}}{}{(#3)}}
\DeclareDocumentCommand{\bestExitVal}{O{} D<>{} D(){}}{\newlink{def:bestExitVal}{\mathsf{bestExitVal}}_{\ifstrequal{#1}{\reach}{}{#1}}^{#2}\ifthenelse{\isempty{#3}}{}{(#3)}}
\DeclareDocumentCommand{\bestExits}{O{} D<>{} D(){}}{\newlink{def:bestExits}{\mathsf{bestExits}}_{\ifstrequal{#1}{\reach}{}{#1}}^{#2}\ifthenelse{\isempty{#3}}{}{(#3)}}
\DeclareDocumentCommand{\upperBound}{O{} D<>{} D(){}}{\mathsf{U}_{#1}^{#2}\ifthenelse{\isempty{#3}}{}{(#3)}}
\DeclareDocumentCommand{\lowerBound}{O{} D<>{} D(){}}{\mathsf{L}_{#1}^{#2}\ifthenelse{\isempty{#3}}{}{(#3)}}
\DeclareDocumentCommand{\exit}{O{} D<>{}  D(){}}{\mathsf{exit}\ifthenelse{\isempty{#2}}{}{(#2)}\ifthenelse{\isempty{#3}}{}{(#3)}}
\DeclareDocumentCommand{\stay}{O{} D<>{} D(){}}{\mathsf{stay}\ifthenelse{\isempty{#2}}{}{(#2)}\ifthenelse{\isempty{#3}}{}{(#3)}}
\DeclareDocumentCommand{\best}{O{} D<>{} D(){}}{\mathsf{best}_{#1}^{#2}\ifthenelse{\isempty{#3}}{}{(#3)}}
\DeclareDocumentCommand{\reasonable}{O{} D<>{} D(){}}{\mathsf{reasonable}_{#1}^{#2}\ifthenelse{\isempty{#3}}{}{(#3)}}
\DeclareDocumentCommand{\upperSet}{O{} D<>{} D(){}}{\mathfrak{U}_{#1}^{#2}\ifthenelse{\isempty{#3}}{}{(#3)}}
\DeclareDocumentCommand{\lowerSet}{O{} D<>{} D(){}}{\mathfrak{L}_{#1}^{#2}\ifthenelse{\isempty{#3}}{}{(#3)}}
\DeclareDocumentCommand{\onestep}{m D<>{} D(){}}{\mathsf{#1}^{#2}\ifthenelse{\isempty{#3}}{}{(#3)}}
\DeclareDocumentCommand{\leavingStrategies}{ O{} D(){} D<>{} }{%
  \ifstrequal{#1}{\reach}{\mathcal{R}_{L}}{%
    \ifstrequal{#1}{\safe}{\mathcal{S}_{L}}{\mathcal{R}_L}%
  }\ifthenelse{\isempty{#2}}{}{(#2)}%
}
\newcommand{\overbar}[1]{\mkern 1.5mu\overline{\mkern-1.5mu#1\mkern-1.5mu}\mkern 1.5mu}
\newcommand{\leaveStratsR}{\newlink{def:leavingStratR}{\mathcal{R}_{L}}}
\newcommand{\leaveStratsS}{\newlink{def:leavingStratS}{\mathcal{S}_{L}}}
\newcommand{\stayStratsR}{\newlink{def:stayingStratR}{\mathcal{R}_{\overbar{L}}}}
\newcommand{\stayStratsS}{\newlink{def:stayingStratS}{\mathcal{S}_{\overbar{L}}}}
\newcommand{\allStayStratsR}{\newlink{def:allStayingStratR}{\mathcal{R}_{S}}}
\newcommand{\allStayStratsS}{\newlink{def:allStayingStratS}{\mathcal{S}_{S}}}
\DeclareDocumentCommand{\leavingStrategy}{ O{} D(){} D<>{} }{%
  \ifstrequal{#1}{\reach}{\rho_L}{%
    \ifstrequal{#1}{\safe}{\sigma_L}{\rho_L}%
  }\ifthenelse{\isempty{#2}}{}{(#2)}%
}
\DeclareDocumentCommand{\stayingStrategies}{ O{} D(){} D<>{} }{%
  \ifstrequal{#1}{\reach}{\mathcal{R}_{S}}{%
    \ifstrequal{#1}{\safe}{\mathcal{S}_{S}}{\mathcal{R}_S}%
  }\ifthenelse{\isempty{#2}}{}{(#2)}
}
\DeclareDocumentCommand{\willStayStrategies}{ O{} D(){} D<>{} }{%
	\ifstrequal{#1}{\reach}{\mathcal{R}_{S}}{%
		\ifstrequal{#1}{\safe}{\mathcal{S}_{WS}}{\mathcal{R}_S}%
	}\ifthenelse{\isempty{#2}}{}{(#2)}
}
\DeclareDocumentCommand{\stayingStrategy}{ O{} D(){} D<>{} }{%
  \ifstrequal{#1}{\reach}{\rho_S}{%
    \ifstrequal{#1}{\safe}{\sigma_S}{\rho_S}%
  }\ifthenelse{\isempty{#2}}{}{(#2)}%
}
\newcommand{\valR}{\newlink{def:valR}{\mathsf{V}_{\reach}}}
\newcommand{\valS}{\newlink{def:valS}{\mathsf{V}_{\safe}}}
\DeclareDocumentCommand{\evaluate}{O{} D<>{} D(){\valuation[\old]} D(){\valuation[\basicEvents]}
  D(){}}{\mathsf{app}_{#1}^{#2}\ifthenelse{\isempty{#3}}{}{(#3,#4)}\ifthenelse{\isempty{#5}}{}{(#5)}}
\newcommand{\old}{\mathsf{o}}
\DeclareDocumentCommand{\side}{O{} D<>{} m m D(){}}{\mathsf{side}_{#1}^{#2}\ifthenelse{\isempty{#3}}{}{(#3,#4)}\ifthenelse{\isempty{#5}}{}{(#5)}}
\DeclareDocumentCommand{\exitActs}{O{} D<>{}  D(){}}{\mathsf{E}\ifthenelse{\isempty{#2}}{}{(#2)}\ifthenelse{\isempty{#3}}{}{(#3)}}
\DeclareDocumentCommand{\mecs}{O{} D<>{}  D(){}}{\mathsf{MEC}\ifthenelse{\isempty{#2}}{}{(#2)}\ifthenelse{\isempty{#3}}{}{(#3)}}
\newcommand{\dominates}{\newlink{def:dominant}{\,\prec\,}}
\newcommand{\dominateseq}{\newlink{def:dominant}{\,\preceq\,}}
\newcommand{\actionR}{\action[\reach]}
\newcommand{\actionS}{\action[\safe]}
\newcommand{\leaves}{\newlink{def:leaves}{\mathsf{leaves}}}
\newcommand{\stays}{\newlink{def:stays}{\mathsf{staysIn}}}
\newcommand{\deflStrats}{\newlink{def:deflStrats}{\mathsf{Defl}}}
\DeclareDocumentCommand{\matrixGame}{O{} D<>{} D(){}}{\mathsf{N}_{#1}^{#2}\ifthenelse{\isempty{#3}}{}{(#3)}}
\DeclareDocumentCommand{\matrixGameMatrix}{O{} D<>{} D(){}}{\mathsf{Z}_{#1}^{#2}\ifthenelse{\isempty{#3}}{}{(#3)}}
\DeclareDocumentCommand{\matrixGameMatrixEntry}{O{} D<>{} D(){}}{\mathsf{z}_{#1}^{#2}\ifthenelse{\isempty{#3}}{}{(#3)}}
\DeclareDocumentCommand{\matrixGamePlayers}{O{} D<>{} D(){}}{\mathit{N}_{#1}^{#2}\ifthenelse{\isempty{#3}}{}{(#3)}}
\DeclareDocumentCommand{\NFGaction}{ O{1} D(){} }{%
  \ifnum#1=1
    a_{#2}%
  \else
    b_{#2}%
  \fi
}
\DeclareDocumentCommand{\NFGstrategy}{ O{} D(){} D<>{} }{%
  \ifstrequal{#1}{\reach}{\chi}{%
    \ifstrequal{#1}{\safe}{\ypsilon}{\mathsf{p}}%
  }_{#2}^{#3}%
}
\DeclareDocumentCommand{\matrixGameActions}{O{} D<>{} D(){}}{\mathit{A}_{#1}^{#2}\ifthenelse{\isempty{#3}}{}{(#3)}}
\DeclareDocumentCommand{\matrixGameUtility}{O{} D<>{} D(){}}{\mathit{u}_{#1}^{#2}\ifthenelse{\isempty{#3}}{}{(#3)}}
\DeclareDocumentCommand{\matrixGameUtilities}{O{} D<>{} D(){}}{\mathit{u}_{#1}^{#2}\ifthenelse{\isempty{#3}}{}{(#3)}}
\DeclareDocumentCommand{\exampleMatrixGame}{O{} D<>{} D(){} t*}{\matrixGame[#1]<#2>(#3)=(\matrixGamePlayers[#1]<#2>\IfBooleanTF{#4}{(#3)}{},\matrixGameActions[#1]<#2>\IfBooleanTF{#4}{(#3)}{},\matrixGameUtilities[#1]<#2>\IfBooleanTF{#4}{(#3)}{})}
\newcommand{\badActs}{\newlink{def:badActs}{\mathsf{Hazard}}}
\DeclareDocumentCommand{\mdp}{O{} D<>{} D(){}}{\mathsf{P}_{#1}^{#2}\ifthenelse{\isempty{#3}}{}{(#3)}}
\DeclareDocumentCommand{\pmdp}{O{} D<>{} D(){}}{\mathfrak{P}_{#1}^{#2}\ifthenelse{\isempty{#3}}{}{(#3)}}
\DeclareDocumentCommand{\MDPtransitions}{O{} D<>{} D(){} D(){} t*}{\Delta_{#1}^{#2}}
\DeclareDocumentCommand{\MDPactions}{O{} D<>{} D(){}}{\mathsf{A}_{#1}^{#2}\ifthenelse{\isempty{#3}}{}{(#3)}}
\DeclareDocumentCommand{\MDPactionsAssignment}{O{} D<>{} D(){}}{\mathsf{Act}_{#1}^{#2}\ifthenelse{\isempty{#3}}{}{(#3)}}
\DeclareDocumentCommand{\parameter}{O{} D<>{} D(){}}{x_{#1}^{#2}\ifthenelse{\isempty{#3}}{}{(#3)}}
\DeclareDocumentCommand{\parameters}{O{} D<>{} D(){}}{X_{#1}^{#2}\ifthenelse{\isempty{#3}}{}{(#3)}}
\DeclareDocumentCommand{\MDPrewards}{O{} D<>{} D(){}}{\mathsf{rew}_{#1}^{#2}\ifthenelse{\isempty{#3}}{}{(#3)}}
\DeclareDocumentCommand{\inducedMDP}{O{} D<>{} D(){}}{\game\left[\strategy\right]_{#1}^{#2}\ifthenelse{\isempty{#3}}{}{(#3)}}
\DeclareDocumentCommand{\exampleMDP}{O{} D<>{} D(){} t*}{\mdp[#1]<#2>(#3):=(\states[#1],\MDPactions[#1],\MDPactionsAssignment[#1],\MDPtransitions[#1]<#2>\IfBooleanTF{#4}{(#3)}{}, \MDPrewards[#1]<#2>\IfBooleanTF{#4}{(#3)}{})}
\DeclareDocumentCommand{\examplepMDP}{O{} D<>{} D(){} t*}{\pmdp[#1]<#2>(#3)=(\states[#1]<#2>\IfBooleanTF{#4}{(#3)}{},\actions[#1]<#2>\IfBooleanTF{#4}{(#3)}{}, \parameters[#1]<#2>\IfBooleanTF{#4}{(#3)}{},\MDPtransitions[#1]<#2>\IfBooleanTF{#4}{(#3)}{})}
\newcolumntype{L}[1]{>{\hsize=#1\hsize\raggedright\arraybackslash}X}%
\newcolumntype{R}[1]{>{\hsize=#1\hsize\raggedleft\arraybackslash}X}%
\newcolumntype{C}[1]{>{\hsize=#1\hsize\centering\arraybackslash}X}%
\algrenewcommand{\algorithmiccomment}[1]{\hfill$\triangleright$ {\footnotesize #1}}
\newcommand{\procname}[1]{\textnormal{\textsf{#1}}}
\newcommand{\subparagraph}[1]{\smallskip\noindent\emph{#1}.}
\renewcommand{\argmax}{\mathop{\mathsf{arg \, max}}}
\renewcommand{\argmin}{\mathop{\mathsf{arg \, min}}}
\newcommand{\rest}{\mathit{rest}}
\newcommand{\hideorrun}{\texttt{Hide-Run-or-Slip}\xspace}
\newacronym{csg}{CSG}{Concurrent Stochastic Game}
\newacronym{lp}{LP}{Linear Program}
\newacronym{ec}{EC}{End Component}
\newacronym{sec}{SEC}{Simple End Component}
\newacronym{msec}{MSEC}{Maximal SEC}
\newacronym{qcqp}{QCQP}{Quadraticaly Constrained Quadratic Program}
\newacronym{sdp}{SDP}{Semidefinite Programming}
\newacronym{mdp}{MDP}{Markov Decision Process}
\newacronym{mC}{Mc}{Markov Chain}
\newacronym{tsg}{TSG}{Turn-based Stochastic Game}
\newacronym{nfg}{NFG}{Normal Form Game}
\newacronym{vi}{VI}{Value Iteration}
\newacronym{bvi}{BVI}{Bounded Value Iteration}
\newacronym{mec}{MEC}{Maximal End Component}
\newacronym{bec}{BEC}{Bad End Component}
\newacronym{qp}{QP}{Quadratic Program}
\tikzstyle{new style 0}=[fill=none, draw=none, shape=circle]
\tikzstyle{none}=[fill=none, draw=none, shape=circle]
\tikzstyle{left-align}=[fill=none, shape=circle, align=left]
\tikzstyle{rectangle}=[fill={rgb,255: red,232; green,232; blue,232}, draw=black, shape=rectangle, minimum width=4cm, minimum height=1cm]
\tikzstyle{rectangle-large}=[fill={rgb,255: red,232; green,232; blue,232}, draw=black, shape=rectangle, minimum width=4.8cm, minimum height=1cm]
\tikzstyle{center-align}=[fill=none, draw=none, shape=circle, align=center]
\tikzstyle{rectangle-white}=[fill=white, draw=black, shape=rectangle, align=center, minimum width=3.5cm, minimum height=1cm, rounded corners=1ex]
\tikzstyle{arrow}=[->]
\tikzstyle{line}=[-, fill=white]
\tikzstyle{green-box-line}=[-, fill={rgb,255: red,197; green,245; blue,186}]
\tikzstyle{thin line}=[-, line width=1pt]
\tikzstyle{yellow-box-line}=[-, fill={rgb,255: red,247; green,255; blue,170}]
\tikzstyle{arrow 2}=[->]
\tikzstyle{new edge style 0}=[->]
\tikzstyle{blue box}=[-, fill={rgb,255: red,208; green,241; blue,245}]
\tikzstyle{white-box}=[-, fill=white]
\tikzstyle{gray-box}=[-, fill={rgb,255: red,244; green,244; blue,244}]
\tikzstyle{arrow-uthick}=[->, ultra thick]
\tikzstyle{arrow-vthick}=[->, very thick]
\theoremstyle{plain}
\newtheorem{lemma}{Lemma}
\theoremstyle{definition}
\newtheorem{definition}[lemma]{Definition}
\theoremstyle{plain}
\newtheorem{theorem}[lemma]{Theorem}
\theoremstyle{remark}
\newtheorem{remark}[lemma]{Remark}
	\theoremstyle{remark}
	\newtheorem{example}[lemma]{Example}
	\declaretheoremstyle[
	headpunct={:}
	]{argumentstyle}
	\declaretheorem[name=Claim,style=argumentstyle]{argument}
\newcommand{\ifarxivelse}[2]{\iftoggle{arxiv}{#1}{#2}}
\begin{document}
	
\newcommand{\anonym}{1} %

\title{Stopping Criteria for Value Iteration on Concurrent Stochastic Reachability and Safety Games
	\ifthenelse{\anonym=1}{
	\thanks{This research was funded in part by the German Research Foundation (DFG) project 427755713 GOPro, the MUNI Award in Science and Humanities (MUNI/I/1757/2021) of the Grant Agency of Masaryk University, the European Union’s Horizon 2020 research and innovation programme under the Marie Sklodowska-Curie grant agreement No 101034413, and the ERC Starting Grant DEUCE (101077178).}
}
}

\ifthenelse{\anonym=0}{
\author{\bigskip
}
}
{
\author{
	\IEEEauthorblockN{
		Marta Grobelna\IEEEauthorrefmark{1}\orcidlink{0009-0003-3314-3358},
		Jan K\v{r}et\'insk\'y\IEEEauthorrefmark{2}\IEEEauthorrefmark{1}\orcidlink{0000-0002-8122-2881},
		Maximilian Weininger\IEEEauthorrefmark{3}\IEEEauthorrefmark{4}\IEEEauthorrefmark{1}\orcidlink{0000-0002-0163-2152}
	}
	\vspace{0.2cm}
	
	\IEEEauthorblockA{
		\begin{tabular}{cccc}
			\shortstack{\IEEEauthorrefmark{1}Technical University of Munich\\Munich, Germany} & 
			\shortstack{\IEEEauthorrefmark{2}Masaryk University\\Brno, Czech Republic} & 
			\shortstack{\IEEEauthorrefmark{3}Ruhr-University Bochum\\ Bochum, Germany} &
			\shortstack{\vspace{0.05cm}\IEEEauthorrefmark{4}Institute of Science and\\ Technology Austria }	
			 \\
			 marta.grobelna@tum.de & jan.kretinsky@fi.muni.cz &maximilian.weininger@rub.de & Klosterneuburg, Austria
		\end{tabular}
	}
}
}

\pagestyle{plain}

\maketitle

\begin{abstract}
We consider two-player zero-sum concurrent stochastic games (CSGs) played on graphs with reachability and safety objectives. These include degenerate classes such as Markov decision processes or turn-based stochastic games, which can be solved by linear or quadratic programming; however, in practice, value iteration (VI) outperforms the other approaches and is the most implemented method. Similarly, for CSGs, this practical performance makes VI an attractive alternative to the standard theoretical solution via the existential theory of reals.

VI starts with an under-approximation of the sought values for each state and iteratively updates them, traditionally terminating once two consecutive approximations are $\epsilon$-close. However, this stopping criterion lacks guarantees on the precision of the approximation, which is the goal of this work. We provide \emph{bounded} (a.k.a. interval) VI for CSGs: it complements standard VI with a converging sequence of over-approximations and terminates once the over- and under-approximations are $\epsilon$-close.   
\end{abstract}

\begin{IEEEkeywords}
	Formal methods, foundations of probabilistic systems and games, verification, model checking
\end{IEEEkeywords}

\section{Introduction}
\label{sect:intro}

\newcommand{\para}[1]{\paragraph*{#1}}

\para{Concurrent stochastic games (CSGs, e.g., \cite{alfarothesis})} We consider two-player zero-sum games played on a graph. Every vertex represents a \emph{state}. Edges are directed, originating from one state and leading to one or several other states. An edge is associated with a \emph{probability distribution} over the successor states.
A \emph{play} proceeds through the graph as follows: starting from an initial state, both players simultaneously and independently choose an \emph{action}, determining the edge to follow. Then, the next state is sampled according to the probability distribution, and the process is repeated in the successor.

We focus on infinite-horizon reachability and safety objectives \cite{alfarothesis}. The goal of reachability is to maximize the probability of reaching a given goal state. In contrast, the safety objective aims to maximize the probability of staying within a given set of states. The two objectives are dual, as instead of maximizing the probability of reaching a set of target states, one can minimize the probability of staying within the set of non-target states. Thus, we refer to both types collectively as CSGs.
Popular subclasses of CSGs include turn-based stochastic games (TSGs), where the players make decisions in turns, or Markov decision processes (MDPs), which involve only one player.

\para{The value problem} In CSGs, memoryless (a.k.a.~stationary) strategies suffice for both players, meaning they yield the same supremum probability as history-dependent strategies. 
However, unlike in TSGs, the strategies for CSGs require \emph{randomization}, meaning players choose distributions over actions rather than single actions. 
Additionally, while the safety objective player, Player $\safe$, can attain optimal strategies \cite{Parthasarathy1973}, the reachability objective  player, Player $\reach$, only possesses \emph{$\varepsilon$-optimal} strategies for a given $\varepsilon>0$~\cite{Everett1957}.  
As a result, the problem of deciding whether the supremum probability (a.k.a.\ \emph{the value}) is at least~$p$ for $p \in [0,1]$, is thus more subtle than for the mentioned subclasses.
While for MDPs the value problem is in \p, and for TSGs it is known to be in $\np\cap\conp$, for CSGs it can be elegantly encoded into the \emph{existential theory of reals (ETR)}, which is only known to be decidable in \pspace{} (although not known to be complete for it) \cite{etessamiRecursiveConcurrentStochastic2008}. 
Unfortunately, algorithms for ETR are practically even worse than the more general, doubly exponential methods for the first-order theory of reals \cite{DBLP:conf/mkm/PassmoreJ09}.
\emph{``Finding a practical algorithm remains a very interesting open problem''} \cite{hansenComplexitySolvingReachability2014}. 

\para{Practical approximation} We focus on algorithms \emph{approximating} the value with a predefined precision $\varepsilon>0$.
Both for MDPs and TSGs, dynamic programming techniques such as \emph{value iteration (\VI)} or strategy iteration (SI) are practically more efficient than mathematical programming (linear or quadratic, respectively) \cite{hartmannsPractitionerGuideMDP2023, kretinskyComparisonAlgorithmsSimple2022}. 
Thus, VI algorithms are prevalently used and implemented in popular tools such as \textsc{prism-games} \cite{kwiatkowskaPRISMgamesStochasticGame2020d}, motivating the focus on VI here.

\para{Problem and our contribution} In VI, the lowest possible value is initially assigned to each state and then iteratively improved, computing an under-approximation of the value, converging to it in the limit. 
The algorithm (in practical implementations) terminates once two consecutive approximations are $\varepsilon$-close.  
However, the result can then be arbitrarily imprecise \cite{Haddad2018}.
In this work, we introduce \emph{bounded value iteration for CSGs}, following its previous success for MDPs \cite{Brazdil2014,ashokValueIterationLongRun2017a} or TSGs \cite{eisentrautValueIterationSimple2022}.
Its main idea is to enhance standard VI by introducing an over-approximation of the values computed in parallel with the under-approximation. Once the upper and lower bounds are $\varepsilon$-close, VI terminates, ensuring that the true value is at most $\varepsilon$ away from the obtained approximation. 
Since the na\"ive formulation of an upper bound does not converge to the value in general, 
previous approaches, notably including \cite{Chatterjee2009}, have attempted to fix this but have failed.
In this paper, we finally provide a valid solution. %

\para{Technical challenge} The fundamental technical difficulties arise from the following. The non-convergence of upper-bound approximations is primarily due to cyclic components, so-called \emph{end components (ECs)}, see \cite{alfarothesis, eisentrautValueIterationSimple2022, kretinskyStoppingCriteriaValue2023a}. 
Notably, non-convergence, for this reason, is an issue present already in MDPs and TSGs; see \cite{eisentrautValueIterationSimple2022, kretinskyStoppingCriteriaValue2023a}. 
Solutions have been developed over the past decade for these two subclasses. Indeed, for MDPs with reachability objective, these end components can effectively be removed from the graph without altering the value \cite{brazdilVerificationMarkovDecision2014, haddadIntervalIterationAlgorithm2018}. Other objectives were considered in \cite{baierEnsuringReliabilityYour2017, ashokValueIterationLongRun2017a}. TSGs with reachability objectives already require more careful analysis, decomposing the end components into sub-parts, so-called \emph{simple ECs} \cite{eisentrautValueIterationSimple2022}. A comprehensive framework for various quantitative objectives was proposed in \cite{kretinskyStoppingCriteriaValue2023a}.
Unfortunately, the idea of simple ECs is not easily extendable from TSGs to CSGs due to the absence of optimal strategies.

\para{Summary of our contribution}
We provide a stopping criterion for VI on CSGs, solving an open problem with erroneous solution attempts in the literature (see the related work in Subsec.~\ref{rw} below).
To this end, we unravel the recursive hierarchical structure of end components in CSGs (see Rem.~\ref{rem:struct}) and adapt the bounded VI algorithm.

\subsection{Related Work}

\para{Available approaches}
The \pspace-algorithm introduced in \cite{etessamiRecursiveConcurrentStochastic2008} for deciding whether the value of a given game is at least $p$, for $p \in [0,1]$, allows for a trivial stopping criterion by iteratively executing this algorithm for a suitable sequence of $(p_i)_{i \in \NN}$ (intuitively, we choose $p_i$ such that alternatingly, the value of the game is above and below, while the distance between two consecutive $p_i$'s monotonically decreases). However, this criterion is impractical since it uses the existential theory of reals \cite{etessamiRecursiveConcurrentStochastic2008}. 
The best known complexity upper bound comes from \cite{frederiksenApproximatingValueConcurrent2013} and states that the problem of approximating the value of a CSG is in TFNP[NP], i.e.\ total function from NP with an oracle for NP. However, the proposed algorithm is not practical, as it relies on guessing a floating point representation of the value and optimal strategies for both players.
A recursive bisection algorithm was introduced in \cite{hansenExactAlgorithmsSolving2011}, which is also impractical as its time complexity is best-case doubly exponential in the number of states. 
For the algorithms commonly employed for in the non-concurrent case, VI and SI, \cite{hansenComplexitySolvingReachability2014} provide doubly exponential lower and upper bounds on the number of iterations that VI requires in the worst-case for computing an $\varepsilon$-approximation. 
Their counter-example uses a CSG where all states have value 1. Thus, the worst-case complexity of our approach is the same since an additional over-approximation does not speed up convergence in this example. Nonetheless, these results are worst-case bounds, i.e.\ they hold a priori for all games; earlier termination is possible, but necessarily requires a stopping criterion, which has so far been elusive.
Finally, in~\cite{oliu-bartonNewAlgorithmsSolving2019}, an algorithm is provided that, unlike all other known algorithms, only has a \emph{single}-exponential dependency on the number of states. 
A practical comparison of our value iteration and~\cite{oliu-bartonNewAlgorithmsSolving2019} is an interesting future step, as better worst-case complexity of an algorithm need not translate to better practical performance on typical instances; for example, in MDPs, worst-case exponential VI and SI typically outperform the polynomial approach of linear programming~\cite{hartmannsPractitionerGuideMDP2023}.

\para{Previous attempts at stopping criterion}\label{rw}
A stopping criterion for SI and VI on CSGs was first presented in \cite{Chatterjee2009}, but later found to contain an irreparable mistake \cite{Chatterjee2013}. Specifically, the algorithm returned over-approximations smaller than the actual values in certain situations, as detailed in \cite{Chatterjee2013}. We analyze the counter-example from \cite{Chatterjee2013} in \ifarxivelse{\cref{counter-example:chaterjee}}{\cite[App.~XX]{arxivversion}} %
Later, \cite{eisentrautStoppingCriteriaValue2019} proposed a stopping criterion for VI, which also contains a fundamental flaw: it fails to converge for CSGs with ECs. We analyze the counter-example to this approach in \ifarxivelse{\cref{counter-example:eisentraut}}{\cite[App.~XX]{arxivversion}}. %
\emph{Our work thus delivers the first stopping criterion in this context.}

\para{Further directions of related work}  Variants of CSGs have appeared very early, under the names of Everett, Gilette, or Shapley game. See \cite{hansenExactAlgorithmsSolving2011} for an explanation of all game types, their relations, and algorithms to solve them. 
These games also consider discounted payoff or limit-average payoff, generalizing the reachability and safety CSGs we consider here.
A generalization of CSGs to $\omega$-regular objectives has been
considered in \cite{Alfaro2000, chatterjeeQualitativeConcurrentParity2011}. 
An insightful characterization of optimal strategies in concurrent games with various objectives can be found in \cite{bordaisSubgameOptimalStrategies2023}.

\section{Preliminaries}
\label{sec:preliminaries}

\subsection{Concurrent Stochastic Games}

\para{Probability Distributions} For a countable set $X$, a function $\mu \colon X \to [0,1]$ is called a \emph{distribution} over~$X$ if $\sum_{x \in X} \mu(x) = 1$.  The \emph{support} of $\mu$ is $\support(\mu) := \set{x \mid \mu(x)>0}$. The set of all distributions over $X$ is denoted by $\distributions(X)$. 

\para{Concurrent Stochastic Games} 
A \emph{concurrent stochastic game}~\newtarget{def:csg}{(CSG)} \cite{Alfaro2007} is a tuple $\newtarget{def:G}{\exampleGame}$,
where $\states$ is a finite set of \emph{states}, 
$\actions:= \actions[\reach] \times \actions[\safe]$ is a finite set of \emph{actions} 
with $\actions[\reach]: = \{ \action[\reach](1), \ldots, \action[\reach](l)\}$ 
and $\actions[\safe] := \{ \action[\safe](1), \ldots, \action[\safe](m)\}$ the sets of actions available for player $\reach$ and $\safe$, respectively,
$\actionAssignment[\reach] \colon \states \to (\powerset{\actions[\reach]} \setminus \emptyset)$ and $\actionAssignment[\safe] \colon \states \to (\powerset{\actions[\safe]} \setminus \emptyset)$ are two \emph{enabled actions} assignments and $\transitions \colon \states \times \actions[\reach] \times \actions[\safe] \to \distributions(\states)$ is a \emph{transition function}, where $\transitions(\state,\action[\reach],\action[\safe])(\state<\prime>)$ gives the \emph{probability of a transition} from state~$\state$ to
  state~$\state<\prime>$ when player~$\reach$ chooses action~$\action[\reach]\in
  \actionAssignment[\reach](\state)$ and player~$\safe$ action~\mbox{$\action[\safe]\in \actionAssignment[\safe](\state)$},
  $\initialState\in\states$ is an \emph{initial state}, and $\success\subseteq\states$ is a set of \emph{target states}. 
A \CSG is \emph{turn-based} if for every state~$\state$ only one player has a meaningful choice, i.e.\ either $\actionAssignment[\reach](\state)$ or $\actionAssignment[\safe](\state)$ is a singleton; we call such game a turn-based stochastic game (TSG)\newtarget{def:tsg}.

\begin{figure}
	\centering
	\resizebox{0.4\textwidth}{!}{\begin{tikzpicture}[
state/.style={draw,circle, minimum size = 1.3cm, align=center},
stochSt/.style={draw,circle, fill=black, minimum size = 0.1cm, inner sep=0pt},
transition/.style={->, black, -{Stealth[length=2mm]}, align=center},
invalid/.style={draw,rectangle,fill=black!30,minimum width=2cm,minimum height=1cm},
every label/.append style = {font=\small}
]

	\node (Start) at (-1.5, 0) {};
    \node[state] (S0) at (0,0) {\scalebox{1.3}{$\state[\mathsf{hide}]$}};
    \node[stochSt, right of=S0, xshift=3cm] (stoch) {};
    \node[state, fill=lipicsYellow ,right of=stoch, xshift=2cm, yshift=1cm] (S1)  {\scalebox{1.3}{$\state[\mathsf{home}]$}};
    \node[state,right of=stoch, xshift=2cm, yshift=-1.5cm] (S2)  {\scalebox{1.3}{$\state[\mathsf{wet}]$}};

    \draw[transition] (S0) edge[bend left=25] node[above]{\scalebox{1.3}{$(\mathsf{hide},\mathsf{throw})$}} (S1);
    \draw[transition] (S0) edge[bend right=25] node[below,yshift=-0.15cm] {\scalebox{1.3}{$(\mathsf{run},\mathsf{throw})$}} (S2);
    \draw[transition] (stoch) edge[bend right=20] (S0);
    \draw[-] (S0) to node[below,yshift=0cm]{\scalebox{1.3}{$(\mathsf{run},\mathsf{wait})$}} (stoch);
    \draw[transition] (stoch) to (S1); 
    \draw[transition] (stoch) to (S2);
    \draw[transition] (S0) edge[in=150,out=100,looseness=5] node[above]{\scalebox{1.3}{$(\mathsf{hide},\mathsf{wait})$}} (S0);
    \draw[transition] (S1) edge[in=30,out=70,looseness=5] node[above]{\scalebox{1.3}{$(\square, \square)$}} (S1);
    \draw[transition] (S2) edge[in=30,out=70,looseness=5] node[above,yshift=0cm]{\scalebox{1.3}{$(\square, \square)$}} (S2);
    \draw[transition] (Start) to (S0);
\end{tikzpicture}}
	\caption{Example CSG called \hideorrun.}
	\label{fig:single_state_example}
\end{figure}

\begin{example}[\CSGs]\label{ex:runningExampleIntroduction}
		Consider the \CSG \hideorrun depicted in \Cref{fig:single_state_example} (an adaption of the \texttt{Hide-or-Run} game in~\cite{kumarExistenceValueRandomized1981a,Alfaro2007,Everett1957}).
		Circles represent states and black dots depict a probabilistic transition with uniform distribution. Each edge is labeled with a pair of actions, the left for player $\reach$ and the right for player $\safe$; $\square$ is a placeholder for an arbitrary action. We have $\states \coloneqq \{\state[\mathsf{hide}], \state[\mathsf{home}], \state[\mathsf{wet}]\}$, with $\actionAssignment[\reach](\state[\mathsf{hide}]) \coloneqq \{\mathsf{hide},\mathsf{run}\}$ and $\actionAssignment[\safe](\state[0]) \coloneqq \{\mathsf{wait},\mathsf{throw}\}$, $\state[0]$ is the initial state denoted by the arrow with no predecessor state, and $\success \coloneqq\{\state[\mathsf{home}]\}$.
		The game has the following intuitive interpretation: Player $\reach$ wants to get home without getting wet. Player $\safe$ has a single snowball and can make player $\reach$ wet by throwing it at player $\reach$. If player $\reach$ runs and player $\safe$ throws the ball, player $\reach$ gets wet. If player $\reach$ runs but player $\safe$ waits, with a probability of $\frac{1}{3}$ the player reaches home or the player slips instead and with a probability of $\frac{1}{3}$ does not move at all or with a probability of $\frac{1}{3}$ falls on the ground and gets wet.
\end{example}

\para{Plays} A \emph{play}~$\execution$ of a CSG $\newlink{def:G}{\game}$ is an infinite sequence of states $\exampleExecution*$, such that for all $i \in \NN$ there are actions $\action[\reach] \in \actionAssignment[\reach](\state[i])$ and $\action[\safe] \in \actionAssignment[\safe](\state[i])$ with $\transitions(\state[i],\action[\reach],\action[\safe])(\state[i+1]) > 0$. $\executions$ is the set of all plays and $\executions[\state]$ the set of all plays $\exampleExecution*$ with $\state[0]=\state$. 

\para{Strategies}
A \emph{strategy}~for player~$\reach$ (or $\safe$) is a function $\strategy[\reach] \colon \states \to \distributions(\actions[\reach])$ (or $\strategy[\safe] \colon \states \to \distributions(\actions[\safe])$) that assigns a distribution over actions available to player~$\reach$ (or $\safe$) to each state, i.e., for all $\state \in \states$,  $\support(\strategy(\state))
\subseteq \actionAssignment[\reach](\state)$ (or $\support(\strategy*(\state))
\subseteq \actionAssignment[\safe](\state)$).\footnote{Since memoryless strategies are sufficient for the objectives considered in this paper, we do not introduce general history-dependent strategies to avoid clutter. We refer to \cite{Alfaro2007} for more details.} We call a Player $\reach$ (or Player $\safe$) strategy $\strategy$ \emph{pure} if all distributions it returns are Dirac distributions, i.e., at each $\state \in \states$ we have a unique action $\actionR \in \actionAssignment[\reach](\state)$ (or  $\actionS \in \actionAssignment[\safe](\state)$) such that $\strategy(\state)(\actionR) = 1$ (or $\strategy*(\state)(\actionS) = 1$). Otherwise, the strategy is \emph{mixed}. For player $\reach$ (or $\safe$) we denote the set of strategies by $\strategies$ (or $\strategies*$) and a single strategy by $\strategy$ (or $\strategy*$).

\para{Markov Decision Processes} Given a \CSG $\game$, if we fix a strategy $\strategy \in \strategies$ of player $\reach$, the game becomes a $\safe$-\emph{Markov Decision Process} (\newtarget{def:mdp}{MDP},~\cite{putermanMarkovDecisionProcesses2009}) $\game[\strategy]$ with the transition function
\begin{align*}
	\transitions[\strategy](\state, \actionS)(\state<\prime>) \coloneqq \sum_{\actionR \in \actionAssignment[\reach](\state)} \transitions(\state, \actionR, \actionS)(\state<\prime>) \cdot \strategy(\state)(\actionR),
\end{align*}
for all $\state \in \states$ and $\actionR \in \actionAssignment[\reach](\state)$. The \MDP induced by a fixed strategy $\strategy* \in \strategies*$ is defined analogously. 

\para{Markov Chains} Similarly, by fixing a pair of strategies $(\strategy,\strategy*) \in \strategies \times \strategies*$, we obtain a \emph{Markov chain} $\game[\strategy, \strategy*]$ with the same state space $\states$, the initial state $\initialState$, and the transition probabilities $P$ given by 
\[
\transitions[\strategy, \strategy*] (\state)(\state<\prime>)\coloneqq\sum_{(\actionR,\actionS) \in \actions} 
\transitions(\state,\action[\reach],\action[\safe])(\state<\prime>) \cdot
\strategy[\reach](\state)(\action[\reach]) \cdot \strategy[\safe](\state)(\action[\safe]).
\]
Thus, a pair of strategies ($\strategy[\reach],\strategy[\safe]$) induces a unique probability measure $\probability[\initialState]<\strategy[\reach],\strategy[\safe]>$ over plays in the Markov chain as usual, see~\cite[Chap. 10.1]{baierPrinciplesModelChecking2008a}, where the set of paths starting in $\initialState$ has measure~1.

\para{Objectives} We partition $\states$ into $\success$, denoting the set of states player~$\reach$ wants to reach, and $\fail:=\states\setminus\success$, denoting the set of states player~$\safe$ wants to confine the game in. We denote the \emph{reachability} objective by $\Diamond \success \coloneqq \set{ \exampleExecution* \mid \exists i \in \NN: \state[i] \in \success}$ and the \emph{safety} objective by $\Box \fail \coloneqq \set{\exampleExecution* \mid \forall i \in \NN: \state[i] \in \fail}$. The \emph{value} of the objective $\Diamond \success$, i.e. $\newtarget{def:valR}{\valR(\state)}$, and the objective $\Box\fail$, i.e. $\newtarget{def:valS}{\valS(\state)}$, at state~$\state$ are given by
\begin{align*}
	 &\valR(\state) \coloneqq \adjustlimits\sup_{\strategy[\reach] \in \strategies[\reach]} \inf_{\strategy[\safe]\in\strategies[\safe]} \probability[\state]<\strategy[\reach],\strategy[\safe]>(\Diamond \success) \\ 
	 &\valS(\state) \coloneqq \adjustlimits\sup_{\strategy[\safe] \in \strategies[\safe]} \inf_{\strategy[\reach]\in\strategies[\reach]} \probability[\state]<\strategy[\reach],\strategy[\safe]>(\Box \fail).
\end{align*}%

By the determinacy of \CSGs and the duality of these objectives~\cite{Everett1957,martinDeterminacyBlackwellGames1998}, it holds that $\valR(\state)+\valS(\state)=1$.
Consequently, the task of approximating $\valR(\state)$ with a given precision is equivalent to approximating $\valS(\state)$.
Further, the objective of minimizing the reachability for $\success$ is equivalent to the objective of maximizing safety for $\fail$ for the \emph{same} player.
Consequently, in the following we only focus on maximizing reachability as both minimization and the safety objectives can be reduced to it.

\begin{example}[Optimal Strategies Need Not Exist]\label{ex:2-no-opt-strat}
	In \CSGs, an optimal strategy for player $\reach$ might not exist \cite{kumarExistenceValueRandomized1981a}, meaning that at some states, the value is attainable only in the limit. 
	Consider our running example from \cref{fig:single_state_example}, \hideorrun. 
	Assume for the moment that there is no chance of slipping, i.e.\ upon playing $\mathsf{run}$ and $\mathsf{wait}$, the target state is reached.
	To win, Player $\reach$ has to run eventually. 
	However, Player $\safe$ can utilize a strategy that throws with positive probability at all points in time.
	Thus, Player $\reach$ cannot win almost surely.
	
	However, Player $\reach$ has the possibility of \emph{limit-sure winning} in~\cite{Alfaro2007}: By running with vanishingly low probability $\varepsilon$ in every round, the probability of winning is $1-\varepsilon$.
	This is because Player $\safe$ has the highest probability $\varepsilon$ of hitting Player $\reach$ by throwing in the first round; throwing in a later round $n$ only has hitting probability $\varepsilon^n$.
	For any $\varepsilon>0$, this strategy of Player $\reach$ achieves $1-\varepsilon$.	
	The value, being the supremum over all strategies, is 1.
	
	This notion of obtaining a value only in the limit is not restricted to sure winning:
	By adding the chance of slipping, the value of the game becomes 0.5. 
	However, by the same argument as above, Player $\reach$ cannot win with probability 0.5, but only with a probability $0.5-\varepsilon$ for all $\varepsilon>0$.
\end{example}

For $\state \in \states$, $\action[\reach] \in \actionAssignment[\reach](\state)$ and $\action[\safe] \in \actionAssignment[\safe](\state)$, the \emph{set of potential successors} of~$\state$ is denoted by $\destination(\state,\action[\reach],\action[\safe]) \coloneqq \support(\transitions(\state, \action[\reach], \action[\safe]))$. We lift the notation to strategies~$\strategy[\reach] \in \strategies$ and $\strategy[\safe] \in \strategies*$ by \[ \destination(\state,\strategy[\reach], \strategy[\safe]) = \bigcup_{\action[\reach] \in \support(\strategy[\reach](\state))} \bigcup_{\action[\safe] \in \support(\strategy[\safe](\state))} \destination(\state,\action[\reach],\action[\safe]).\]

We denote by $\winning \coloneqq \set{ \state \in \states \mid \valR(\state) = 0}$ the \emph{sure winning} region of player~$\safe$. It can be computed in at most $|\states|$ steps by iteration $\winning<0> \coloneqq (\states \setminus \success)$ and $\winning<k+1> \coloneqq \set{s \in \states \setminus
  \success \mid \exists \action[\safe] \in
  \actionAssignment[\safe](\state): \forall \action[\reach] \in
  \actionAssignment[\reach](\state):
  \destination(\state,\action[\reach],\action[\safe]) \subseteq
  \winning<k>}$ for all $k \in \NN$~\cite{Alfaro2000}. Consequently, we can assume without loss of generality that $\success$ and $\winning$ are both singletons and absorbing.
  
\begin{example}[The Sets $\success$, $\fail$, and $\winning$]\label{ex:running-example-2}
In \Cref{fig:single_state_example}, player $\reach$ wants to reach $\success = \{\state[1]\}$, while player $\safe$ aims to stay in  $\fail = \{\state[0], \state[2]\}$. Since $\state[2]$ is absorbing, $\winning = \{\state[2]\}$.
\end{example}

\para{Matrix Games} 
At each state of a \CSG the players $\reach$ and $\safe$ play a two-player zero-sum \emph{matrix game}~\cite{maschlerGameTheory2020,santosAutomaticVerificationStrategy2020}. In general, a matrix game is a tuple $\mathsf{Z} \coloneqq (N, A, u)$ \cite{santosAutomaticVerificationStrategy2020}  where, $N \coloneqq  \{ 1, \ldots, n \}$ is a finite set of players, $A \coloneqq  \{ \alpha_1, \ldots, \alpha_m\}$ is a finite set of actions available to each player, and $\matrixGameUtility \colon A \to \mathbb{Q}$  is a utility function. In \CSGs, the matrix game played at a specific state $\state$ can be represented by a matrix $\matrixGameMatrix[\valR](\state) \in \mathbb{Q}^{l\times m}$, where $\actionAssignment[\reach](\state) = \{ \NFGaction[1](1), \ldots, \NFGaction[1](l)  \}$ and $\actionAssignment[\safe](\state) = \{ \NFGaction[2](1), \ldots, \NFGaction[2](m)\}$. The entries of the matrix correspond to the utility, i.e., the value attainable upon choosing a pair of actions $(\actionR(i), \actionS(j)) \in A$. Thus, the $i$-th row and the $j$-th column is given by $\matrixGameMatrix[\valR](\state)(i,j) \coloneqq \sum_{\state<\prime> \in \states} \transitions(\state,\NFGaction[1](i),\NFGaction[2](j))(\state<\prime>) \cdot \valR(\state<\prime>)$. 

\begin{example}[Matrix Game]
	Consider the CSG in \cref{fig:single_state_example}. The matrix game played at state $\state[\mathsf{hide}]$ is given by the following matrix.
		\begin{align}\label{eq:matrixGameMatrixVr}
		\matrixGameMatrix[\valR](\state[\mathsf{hide}]) = \begin{blockarray}{ccc}
			\scriptstyle \mathsf{throw} & \scriptstyle \mathsf{wait}\\
			\begin{block}{(cc)c}
				0 & \frac{1}{3} \cdot \valR(\state[\mathsf{hide}]) + \frac{1}{3}  & \scriptstyle \mathsf{run}\\
				1 & \valR(\state[\mathsf{hide}]) & \scriptstyle \mathsf{hide}\\
			\end{block}
		\end{blockarray}
	\end{align}\label{eq:matrixGameTrivialEC}
Player $\reach$ is the so called \emph{row player} while Player $\safe$ is the \emph{column player}.
\end{example}

In a matrix game, a player's \emph{strategy} is a distribution over the available actions at a specific state. To distinguish between strategies of a \CSG and strategies of a matrix game, we refer to strategies of a matrix game as \newtarget{def:local}{\emph{local strategies}} and strategies of a \CSG as \emph{global} strategies. The set of all local strategies at a state $\state$ is denoted by $\strategies(\state)$ or $\strategies*(\state)$ for player $\reach$ or $\safe$, respectively. 
The existence of optimal (local) strategies in a matrix game for both players is guaranteed by Nash's Theorem \cite{nashEquilibriumPointsNperson1950a,nashNonCooperativeGames1951a}. The payoff that is attainable with an optimal local strategy is called \emph{value} that we denote by $\val(\mathsf{Z}_{\valR})$ for a matrix game $\mathsf{Z}_{\valR}$. It can be calculated using linear programming (e.g., \cite{hillierIntroductionOperationsResearch2010}, see \ifarxivelse{\cref{apx:LPforMatrixGames}}{\cite[\cref{apx:LPforMatrixGames}]{arxivversion}}).%

\para{End Components} 
A non-empty set of states $\endComponent \subseteq \states$ is called an \emph{end component} \newtarget{def:ec}{(\EC)} if (i)~there exists a pair of strategies $(\strategy,\strategy*) \in \strategies \times \strategies*$ such that for each $\state \in \endComponent$ it holds that $\destination(\state,\strategy,\strategy*) \subseteq \endComponent$; and (ii)~for every pair of states $\state, \state<\prime> \in \endComponent$ there is a play $\state[0]\state[1]\cdots$ such that $\state[0] = \state$ and $\state[n] = \state<\prime>$ for some $n$, and for all $0 \leq i < n$, it holds $\state[i] \in \endComponent$ and $\state[i+1] \in \destination(\state[i], \strategy,\strategy*)$.

Intuitively, an \EC is a set of states where a play can stay forever under some pair of strategies. In other words, the players can cooperate to keep the play inside the \EC (this is the usual way to lift the definition of~\cite{alfarothesis} from MDP to games). Thus, we can compute \ECs in a \CSG by computing \ECs in the corresponding \MDP with both players unified, i.e.,\ every pair of actions is interpreted as an action in the \MDP.
Efficient algorithms for this exist \cite{baierPrinciplesModelChecking2008a, courcoubetisComplexityProbabilisticVerification1995, wijsEfficientGPUAlgorithms2016}. An \EC $\endComponent$ is called \newtarget{def:mec}{\emph{inclusion maximal}} (short maximal) if there exists no \EC $\endComponent<\prime>$ such that $\endComponent \subsetneq \endComponent<\prime>$.

\subsection{Value Iteration} \label{sec:value-iteration}
Value iteration (\VI, e.g.~\cite{chatterjeeValueIteration2008a}) assigns an initial value estimate to each state and then iteratively updates it. 
In classical \VI, which approximates the reachability value from below, the initial estimates are 1 for states in $\success$ and below the actual value otherwise, e.g.~$0$. Each iteration backpropagates the estimate by maximizing the expectation of the value player~$\reach$ can ensure with respect to the previous estimate.

Formally, we capture estimates as valuations, where a \emph{valuation}~$\valuation \colon \states \to [0,1]$ is a function mapping each state~$\state$ to a real number 
representing the (approximate or true) value of the state.
For two valuations $\valuation, \valuation<\prime>$, we write $\valuation \leq \valuation<\prime>$ if $\valuation(\state) \leq \valuation<\prime>(\state)$ for every $\state \in \states$. 

To compute the expected value at a state~$\state$, the matrix game $\mathsf{Z}_{\valuation}(\state)$ has to be solved, meaning its value, $\val(\mathsf{Z}_{\valuation}(\state))$, has to be estimated. This computation is, especially in the turn-based setting, also referred to as \emph{Bellman update}. Formally,
\begin{align*}
\val(\mathsf{Z}_{\valuation}(\state)) \coloneqq \newtarget{def:pre}{\preop}(\valuation)(\state)\coloneqq \adjustlimits\sup_{\strategy \in \strategies(\state)} \inf_{\strategy* \in \strategies*(\state)} \preop(\valuation)(\state, \strategy, \strategy*),
\end{align*}
where $\preop(\valuation){}(\state, \strategy, \strategy*) \coloneqq$
\begin{align*}
	\smashoperator[l]{\sum_{(\action[\reach],\action[\safe]) \in \actions}} \sum_{\state<\prime> \in
		\states} 
	\strategy[\reach](\action[\reach]) \cdot
	\strategy[\safe](\action[\safe])\cdot
	\transitions(\state,\action[\reach],\action[\safe])(\state<\prime>) \cdot
	\valuation(\state<\prime>).
\end{align*}

\para{Convergent Under-approximation}
We recall \emph{VI from below} as in~\cite{Chatterjee2012}:
starting from the initial valuation $\lowerBound<0>$, we perform the Bellman update on every state to obtain a new valuation.
We denote by $\lowerBound<k>$ the valuation obtained in the $k$-th iteration. Formally:
\begin{align} 
   &\lowerBound<0>(\state) \coloneqq 
\begin{cases}
1,\; \mathsf{if} \ \state \in  \success;\\
0,\; \mathsf{else},
\end{cases} \hspace{1cm}\lowerBound<k+1>(\state) \coloneqq 
   \preop(\lowerBound<k>)(\state).
\end{align}

Since $\success$ and $\winning$ are absorbing, for all $k\in\NN$ we have $\lowerBound<k>(\state) = 1$ for all $\state \in \success$, and $\lowerBound<k>(\state) = 0$ for all $\state \in \winning$.
The updated valuation, i.e.\ $\lowerBound<k+1>(\state)$, is computed by solving the corresponding matrix game. 
\begin{theorem}[VI converges from below \protect{\cite[Thm.~1]{Alfaro2004}}]\label{theo:convergentUnderApprox}
VI from below converges to the value, i.e.  $\displaystyle{\lim_{k \to \infty}} \lowerBound<k> = \valR$.
\end{theorem}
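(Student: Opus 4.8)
The plan is to establish convergence of VI from below by working with the monotone sequence of valuations $(\lowerBound<k>)_{k \in \NN}$ and identifying its limit as the least fixpoint of the Bellman operator $\preop$, which must coincide with $\valR$. First I would verify that $\preop$ is monotone: if $\valuation \leq \valuation<\prime>$, then $\preop(\valuation) \leq \preop(\valuation<\prime>)$, since increasing the payoff entries of each matrix game $\matrixGameMatrix[\valuation](\state)$ can only increase its value $\val(\matrixGameMatrix[\valuation](\state))$. Because $\lowerBound<0>$ assigns the minimal possible values (1 on $\success$, which is forced and stays fixed, and $0$ elsewhere), monotonicity gives $\lowerBound<0> \leq \lowerBound<1>$, and then inductively $\lowerBound<k> \leq \lowerBound<k+1>$ for all $k$. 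Since every valuation is bounded above by $1$, the sequence is nondecreasing and bounded, hence converges pointwise to some limit valuation $\lowerBound<\ast> \coloneqq \lim_{k \to \infty} \lowerBound<k>$.

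Next I would argue that $\lowerBound<\ast>$ is a fixpoint of $\preop$. The operator $\preop$ is continuous with respect to pointwise limits of monotone sequences: the value of a matrix game depends continuously on its entries (it is the optimum of a linear program whose data vary continuously, or equivalently a min-max of bilinear payoffs over the compact simplices of local strategies), and each entry $\matrixGameMatrix[\valuation](\state)(i,j) = \sum_{\state<\prime>} \transitions(\state,\NFGaction[1](i),\NFGaction[2](j))(\state<\prime>) \cdot \valuation(\state<\prime>)$ is a finite linear combination of the valuation's values over the finite state space. Passing to the limit in the recurrence $\lowerBound<k+1> = \preop(\lowerBound<k>)$ therefore yields $\lowerBound<\ast> = \preop(\lowerBound<\ast>)$, so $\lowerBound<\ast>$ is a fixpoint.

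The crux is to show that this particular fixpoint equals $\valR$. One direction is a soundness argument: I would show by induction on $k$ that $\lowerBound<k> \leq \valR$, using that $\valR$ itself satisfies the Bellman optimality equation $\valR = \preop(\valR)$ (the standard characterization of the value via one-step games, which for CSGs follows from determinacy and the existence of optimal local strategies guaranteed by Nash's theorem) together with monotonicity; this gives $\lowerBound<\ast> \leq \valR$. For the reverse inequality $\lowerBound<\ast> \geq \valR$, the point is that $\lowerBound<\ast>$, being the \emph{least} fixpoint reachable from the all-but-target-zero initialization, cannot overshoot but also cannot fall short of the true value: intuitively, $\valR$ is characterized as the least fixpoint of $\preop$ with the correct boundary condition on $\success$, because any fixpoint below $\valR$ would contradict the existence of $\varepsilon$-optimal Player~$\reach$ strategies achieving nearly $\valR(\state)$ from each state.

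I expect the main obstacle to be precisely the reverse inequality $\lowerBound<\ast> \geq \valR$, i.e.\ ruling out that VI from below converges to a fixpoint strictly smaller than the value. In many stochastic settings the Bellman operator has multiple fixpoints, and spurious fixpoints typically arise exactly from the cyclic behavior in end components emphasized in the introduction; the honest way to close this gap is to exhibit, for each state $\state$ and each $\varepsilon > 0$, a Player~$\reach$ strategy whose guaranteed reachability probability is at least $\lowerBound<k>(\state)$ for suitable $k$ and approaches $\valR(\state)$, or to invoke the least-fixpoint characterization of $\valR$ directly. Since the statement is quoted from \cite[Thm.~1]{Alfaro2004}, I would lean on that characterization rather than reprove it from scratch, but the conceptual heart of the argument is understanding why the $0$-initialization forces convergence to the \emph{correct} (least) fixpoint rather than a larger or smaller one.
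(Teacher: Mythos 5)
You should first note that the paper contains no proof of this statement at all: \cref{theo:convergentUnderApprox} is imported verbatim from \cite[Thm.~1]{Alfaro2004}, and everywhere the paper needs it (e.g.\ in \cref{lem:fundamentals}) it simply invokes that reference in the form ``$\valR$ is the least fixpoint of $\preop$.'' Your outline is therefore \emph{more} detailed than what the paper offers, and it is sound: monotonicity of $\preop$, the monotone-bounded-sequence argument, and continuity of the matrix-game value in its entries (a min-max of bilinear payoffs over compact simplices, hence Lipschitz in the valuation) correctly give that $\lowerBound<k>$ converges pointwise to some fixpoint $\lowerBound<\ast>$ with $\lowerBound<\ast> \leq \valR$; this mirrors, in dual form, the CPO/Kleene machinery the paper does spell out for the \emph{upper} bound in \cref{lem:Ustar-is-fixpoint}. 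You also correctly isolate the one genuinely non-trivial step --- ruling out convergence to a fixpoint strictly below $\valR$ --- and, exactly like the paper, you source it from the cited least-fixpoint characterization, so there is no gap relative to what the paper itself establishes. Two refinements are worth making. First, your intuitive justification of that step (``a fixpoint below $\valR$ would contradict $\varepsilon$-optimal strategies'') is the right idea but to be made precise it needs the standard supermartingale argument: from any fixpoint $f \in [0,1]^{|\states|}$ with $f = 1$ on $\success$, Player $\safe$'s locally optimal strategies make $f(X_n)$ a bounded supermartingale under every Player $\reach$ strategy, and optional stopping yields $\valR \leq f$; applied to $f = \lowerBound<\ast>$ this closes the argument. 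Second, your worry that end components might create spurious fixpoints \emph{below} the value is misplaced in direction: for reachability, the multiplicity of fixpoints caused by ECs lies strictly \emph{above} $\valR$ (which is precisely why the paper's upper iteration fails to converge and needs deflation), whereas the iteration from below is automatically driven to the least fixpoint, which is the value.
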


\para{Bounded Value Iteration}

\begin{algorithm}[tb]\newtarget{alg:bvi}
	\usetikzlibrary{fit,calc}
	\newcommand{\boxit}[2]{
		\tikz[remember picture,overlay] \node (A) {};\ignorespaces
		\tikz[remember picture,overlay]{\node[yshift=4.5pt,fill=#1,opacity=.3,fit={($(A)+(0,0.15\baselineskip)$)($(A)+(.78\linewidth,-{#2}\baselineskip - 0.25\baselineskip)$)}] {};}\ignorespaces
	}
	
	\begin{algorithmic}[1]
		\Function{BVI}{\CSG $\game$, threshold $\varepsilon > 0$}
		\State $\winning \gets \set{ \state \in \states \mid \valR(\state) = 0}$ \Comment{Winning region for $\safe$}
		\State $\lowerBound<0>,  \upperBound<0>$ initialized by Eq. (1) and (2), respectively 
		\State $\mecs \gets \procname{FIND\_MECs}(\game)$ \Comment{Find all \MECs in the game}
		\State $k \gets 0$ 
		\DoUntil
		\For{$\state \in \states$}\Comment{Standard Bellman update of both bound}
		\State $\lowerBound<k+1>(\state) \gets \preop(\lowerBound<k>)(\state)$
		\State $\upperBound<k+1>(\state) \gets \preop(\upperBound<k>)(\state)$\label{alg:bellman}
		\EndFor
		\For{$\endComponent \in \mecs$}
		\State \boxit{yellow}{0.1}$\upperBound<k+1> \gets$  \DEFLATEALG$(\game, \upperBound<k+1>, \endComponent)$ \label{alg:deflate}
		\EndFor
		\State $k \gets k+1$
		\EndDoUntil{$\upperBound<k+1> - \lowerBound<k+1> \leq \varepsilon$}
		\EndFunction
	\end{algorithmic}
	\caption{Bounded value iteration procedure for CSGs.}
	\label{alg:bvi}
\end{algorithm}

While \cref{theo:convergentUnderApprox} proves that \VI from below converges in the limit, this limit may not be reached in finitely many steps and can be irrational~\cite{Alfaro2004}. Thus, we do not know when to stop the algorithm to guarantee certain precision of the approximation, as there is no practical bound how close any valuation $\lowerBound<k>$ is to the actual value.
We merely have the worst-case bound of~\cite{hansenComplexitySolvingReachability2014}: Running for a number of iterations that is doubly-exponential in the number of states allows to conclude that the lower bound is $\varepsilon$-close to the value.

To obtain a practical stopping criterion, we use the approach of Bounded Value Iteration (BVI), shown in \cref{alg:bvi}.
In addition to the lower bound $\lowerBound$, it maintains an upper bound on the value $\upperBound$ that is meant to converge to the value from above.
Na\"ively, this upper bound is defined as follows:
\begin{align} 
	\upperBound<0>(\state) \coloneqq \begin{cases}
		0,\; \mathsf{if} \ \state \in  \winning;\\
		1,\; \mathsf{else},
	\end{cases} & \hspace{0.1cm} &
	\upperBound<k+1>(\state) \coloneqq \preop(\upperBound<k>)(\state). \label{eq:upperBoundRec}
\end{align}
Given a precision $\varepsilon>0$, the algorithm terminates once the under- and the over-approximations are $\varepsilon$-close, i.e.,\ when both approximations are at most $\varepsilon$-away from the actual value.
However, applying Bellman updates does not suffice for the over-approximation to converge in the presence of \ECs,
as the following example shows:

\begin{example}[Non-convergent Over-approximations]\label{ex:potentialProblemECs}
	Consider the \CSG \hideorrun in \Cref{fig:single_state_example}. To compute $\upperBound<k+1>(\state[\mathsf{hide}])$ according to \Cref{eq:upperBoundRec}, in each iteration we solve the matrix game $\matrixGameMatrix[\upperBound<k>](\state[\mathsf{hide}])$ is given by \cref{eq:matrixGameMatrixVr} where the unknown $\valR$ is replaced by $\upperBound<k>$, i.e.:
	\begin{align}\label{eq:matrixGameMatrix}
		\matrixGameMatrix[\upperBound<k>](\state[\mathsf{hide}]) = \begin{blockarray}{ccc}
			\scriptstyle \mathsf{throw} & \scriptstyle \mathsf{wait}\\
			\begin{block}{(cc)c}
				0 & \frac{1}{3} \cdot \upperBound<k>(\state[\mathsf{hide}]) + \frac{1}{3}  & \scriptstyle \mathsf{run}\\
				1 & \upperBound<k>(\state[\mathsf{hide}]) & \scriptstyle \mathsf{hide}\\
			\end{block}
		\end{blockarray}
	\end{align}\label{eq:matrixGameTrivialEC}
\cref{tab:VI_game_1} shows the updates of the lower and upper bounds, $\lowerBound<k>(\state[\mathsf{hide}])$ and $\upperBound<k>(\state[\mathsf{hide}])$, respectively. While the lower bound converges to 0.5, the upper bound stays at 1.
This is because for player~$\reach$, action $\mathsf{hide}$ always ``promises'' a valuation of 1 in the next step, as the lower row of the matrix yields 1 for all player~$\safe$ strategies.
\end{example}

\begin{table}[b]
	\centering
	\caption{\BVI for the game $\hideorrun$ (\Cref{fig:single_state_example}), where the over-approximations do not converge.}\label{tab:VI_game_1}
	\begin{tabular}{cccccc}\toprule
		$k$ & 0 & 1 & 2 & $\cdots$ & $\infty$ \\
		\midrule
		$\lowerBound<k>(\state[\mathsf{hide}])$ & 0.0 & 0.25 & 0.36 & $\cdots$ & 0.5\\ [0.13cm]
		$\upperBound<k>(\state[\mathsf{hide}])$ & 1.0 & 1.0 & 1.0 & $\cdots$ & 1.0\\ 
		\bottomrule
	\end{tabular} 
\end{table} 

\section{Convergent Over-approximation: Overview}\label{sec:3-vi-title}
Here, we describe the structure of our solution.
\Cref{ex:potentialProblemECs} shows that the na\"ive definition of the over-approximation need not converge to the true value. 
In particular, in the presence of \ECs, Bellman updates do not have a unique fixpoint.
Thus, our goal is to define a function~\procname{DEFLATE} (usage highlighted in \Cref{alg:bvi}, definition in \Cref{alg:deflate_mecs}) that, intuitively, decreases the \enquote{bloated} upper bounds inside each \EC to a realistic value substantiated by a value promised \emph{outside} of this \EC.
Formally, we ensure that \cref{alg:bvi} produces a monotonically decreasing sequence of valuations (i)~over-approximating the reachability value and (ii)~converging to it in the limit. 
This idea has been successfully applied for \TSGs~\cite{eisentrautValueIterationSimple2022, kretinskyStoppingCriteriaValue2023a}.

\begin{remark}[Inflating for Safety]
	Since over-approximations need not converge for $\valR$, dually under-approximations need not converge for $\valS$ (as is the case in \TSGs, see \cite{kretinskyStoppingCriteriaValue2023a}). 
	Thus, to directly solve a safety game, one needs an inflating operation dual to deflating.
	As described when introducing the objectives, we take the conceptually easier route of reducing everything to maximizing reachability objectives.
\end{remark}

We proceed in two steps. First, in \Cref{sec:3-characterize} we prove that indeed \ECs are the source of non-convergence, in particular what we call \emph{Bloated End Components}. 
Intuitively, these are \ECs where at each state both players prefer local strategies which all successor states belong to the \EC.
Second, in \Cref{sec:3-resolve-BEC}, we define the \procname{DEFLATE} algorithm, which essentially ensures that we focus on player $\reach$ strategies that do not make the game stuck in a \EC but rather progress towards the target. To this end, we lift the notion of \emph{best exit} from \TSGs~\cite[Definition 3]{eisentrautValueIterationSimple2022} to \CSGs. 

\section{The Core of the Problem: Characterizing Bloated End Components}\label{sec:3-characterize}

A locally optimal strategy of Player $\reach$ does not coincide with an optimal global strategy of Player $\reach$ because the latter must eventually leave \ECs, while the former is under the illusion that staying is optimal.
Thus, in this section, we want to find properties that local strategies (of both players) must fulfill in order to leave an \EC in a way that is globally optimal. 
Thus, since this section mainly concerns \emph{local} strategies, we use the word strategy to speak about local strategies and explicitly make clear when we talk about global ones.

\begin{remark}
	Throughout the technical sections and the appendix, we always fix a \CSG $\exampleGame$.
\end{remark}

\subsection{Convergence without ECs}
As a first step, we prove that \ECs are the only source of non-convergence, and without them, the na\"ive \BVI using only Bellman updates converges.

\begin{restatable}[Convergence without \ECs~--- Proof in 
	\ifarxivelse{\cref{apx:proof_no_ecs}}{\cite[\cref{apx:proof_no_ecs}]{arxivversion}}]{theorem}{theoBVInoEC}\label{theo:BVInoEC}
	Let $\game$ be a \CSG where all \ECs are trivial, i.e. for every \EC $\endComponent$ we have $\endComponent \subseteq \winning \cup \success$.
	Then, the over-approximation using only \Cref{eq:upperBoundRec} converges, i.e.  $\displaystyle{\lim_{k\to\infty}} \upperBound<k> = \valR$.
\end{restatable}
\begin{proof}[Proof sketch]
	This proof is an extension of the proof of~\cite[Theorem 1]{eisentrautValueIterationSimple2022} for turn-based games to the concurrent setting.
	The underlying idea is the same, and can be briefly summarized as follows:
	We assume towards a contradiction that $\displaystyle{\lim_{k\to\infty}} \upperBound<k> \eqqcolon \upperBound<\star> \neq \valR$, and find a set $\ecStates$ that maximizes the difference between upper bound and value.
	We show that every pair of strategies leaving the set $\ecStates$ decreases the difference $\upperBound<\star> - \valR$.
	However, $\valR$ and $\upperBound<\star>$ are fixpoints of the Bellman update, from \cite[Theorem~1]{Alfaro2004} and \cref{lem:Ustar-is-fixpoint}, respectively.
	Consequently, optimal strategies need to remain in the set.
	However, in the absence of \ECs, optimal strategies have to leave the set, which yields a contradiction and proves that $\upperBound<\star> = \valR$.
	
	The key difference to the proof of~\cite[Theorem 1]{eisentrautValueIterationSimple2022} is that we cannot argue about actions anymore, but have to consider mixed strategies.
	This significantly complicates notation.
	Additionally, and more importantly, the former proof crucially relied on the fact that for a state of Player $\reach$, we know that its valuation is at least as large as that of any action, and dually for a state of Player $\safe$, its valuation is at most as large as that of any action.
	In the concurrent setting, this is not true. The optimal strategies need not be maximizing nor minimizing the valuation and, moreover, they can be maximizing for one valuation and minimizing for another.
	Thus, we found a more general, and in fact simpler, way of proving that \enquote{no state in $\ecStates$ can depend on the outside}~\cite[Statement 5]{eisentrautValueIterationSimple2022} and deriving the contradiction.	
\end{proof}

Interestingly, not all \ECs cause non-convergence of \Cref{eq:upperBoundRec} as the following example illustrates.
\begin{example}[Unproblematic \EC]\label{ex:convergenceWithEC}
	We modify the \CSG \hideorrun (\Cref{fig:single_state_example}) such that the matrix game played at $\state[\mathsf{hide}]$ is $\matrixGameMatrix[\upperBound<k>]<\prime>(\state[\mathsf{hide}])$ below.
	\begin{align*}
		\matrixGameMatrix[\upperBound<k>]<\prime>(\state[\mathsf{hide}]) = \begin{blockarray}{ccc}
			\scriptstyle \mathsf{throw} & \scriptstyle \mathsf{wait}\\
			\begin{block}{(cc)c}
				1 & \frac{1}{3} \cdot \upperBound<k>(\state[\mathsf{hide}]) + \frac{1}{3}  & \scriptstyle \mathsf{run}\\
				0 & \upperBound<k>(\state[\mathsf{hide}]) & \scriptstyle \mathsf{hide}\\
			\end{block}
		\end{blockarray}
	\end{align*}
	The difference is that $\matrixGameMatrix[\upperBound<k>]<\prime>(\state[\mathsf{hide}])(\mathsf{run},\mathsf{throw}) = 1$ and $\matrixGameMatrix[\upperBound<k>]<\prime>(\state[\mathsf{hide}])(\mathsf{hide},\mathsf{throw}) = 0$, switching the values as compared to the original \CSG (see \Cref{eq:matrixGameMatrix}).
	Here, both bounds converge to 0.5 despite the presence of the~\EC $\{\state[\mathsf{hide}]\}$, as shown in \Cref{tab:VI_converges_game_2}.
\end{example}
\begin{table}[tb]
	\centering
	\caption{\BVI for the \CSG in \Cref{ex:convergenceWithEC}, where the over-approximations converge.}\label{tab:VI_converges_game_2}
	\begin{tabular}{ccccccc}\toprule
		$k$ & 0 & 1 & 2 & 3 & $\cdots$ & $\infty$ \\
		\midrule
		$\lowerBound<k>(\state[\mathsf{hide}])$ & 0.0 & $\frac{1}{3}$ & $\frac{4}{9}$ & 0.4815 & $\cdots$ & 0.5\\ [0.1cm]
		$\upperBound<k>(\state[\mathsf{hide}])$ & 1.0 & $\frac{2}{3}$ & $\frac{5}{9}$ &0.5185 & $\cdots$ & 0.5\\ 
		\bottomrule
	\end{tabular} 
\end{table}

\subsection{Towards Characterizing Bloated End Components}
\para{Intuition}
In \Cref{ex:convergenceWithEC}, the best strategy of Player $\reach$ leaves the \EC almost surely against all counter-strategies of Player $\safe$, and hence \BVI converges. In contrast, in \Cref{ex:potentialProblemECs}, the best strategy of Player $\reach$ is one where Player $\safe$ has a counter-strategy that forces the play to stay inside the \EC; this causes non-convergence.
Generalizing these ideas, we see that a problem occurs if Player $\reach$ has a strategy that is locally optimal but non-leaving, i.e.\ Player $\safe$ has a counter-strategy that keeps the play inside an $\EC$.

\para{Outline}
We formalize these ideas in the following definitions:
First, \Cref{def:dominant-set-of-strats-main-body} formalizes optimal (local) strategies using \emph{weakly dominant strategies} in matrix games, extending the standard definition (e.g.~\cite{maschlerGameTheory2020}) %
to sets of strategies.
This extension is not straightforward, and there are several technical intricacies that we comment on.
Next, \Cref{def:leave-and-stay-strats} captures leaving and staying strategies.
We differentiate strategies that are leaving (irrespective of the opponent's strategy), staying (irrespective of the opponent's strategy), and non-leaving (where there exists an opponent's strategy that leads to staying) with respect to a given set of states.
Based on this, we formally describe hazardous strategies in \cref{def:badAction}, which are (locally) optimal strategies of Player $\reach$ that are non-leaving; additionally, to be problematic for convergence, they are better than all leaving strategies. 
Using these, we can precisely characterize the Bloated End Components (\BECs, \Cref{defBec}) that cause non-convergence.

\begin{remark}[Additional Challenges Compared to \TSGs]
	The core problem is the same as in \TSGs: Player $\reach$ is under the illusion that staying inside an \EC yields a better valuation than leaving.
	However, in \TSGs, the definitions of optimality and leaving are straightforward, since every state belongs to a single player and pure strategies are optimal; the definitions of hazardous and trapping strategies are not even necessary. 
	In contrast, the definitions in \CSGs are technically involved, as we have to take into account the interaction of the players and the possibility of optimal mixed strategies.
	In particular,~\cite{eisentrautStoppingCriteriaValue2019} defined a straightforward extension of leaving based only on actions, not strategies. This is incorrect, as we demonstrate in \ifarxivelse{\cref{counter-example:eisentraut}}{\cite[\cref{counter-example:eisentraut}]{arxivversion}}. 
\end{remark}

\begin{definition}[Dominating Sets of Strategies]
	\newtarget{def:dominant}{}\label{def:dominant-set-of-strats-main-body}
	Let $\valuation$ be a valuation, $\state \in \states$ a state, $\strategies[1], \strategies[2], \strategies[\reach]<\prime> \subseteq \strategies(\state)$ and $\mathcal{S}_1, \mathcal{S}_2, \strategies[\safe]<\prime> \subseteq \strategies[\safe](\state)$ sets of local strategies. 
	We now define two notions of domination for sets of strategies, namely \emph{weak domination} and being \emph{not worse}. Both of these depend on the player.
	
	\noindent\textit{Definition for Player $\reach$:}
	We write $\strategies[2] \dominates_{\valuation, \strategies[\safe]<\prime>} \strategies[1]$ to denote that
	$\strategies[1]$ \emph{weakly dominates} $\strategies[2]$ 
	under the set of counter-strategies $\strategies[\safe]<\prime>$ with respect to $\valuation$.
	Formally, $\exists \strategy[1] \in \strategies[1]. \forall \strategy[2] \in \strategies[2]:$
	\begin{enumerate}
		\item[(i)] 	$\inf_{\strategy* \in \strategies[\safe]<\prime>}\preop(\valuation)(\state,\strategy[2], \strategy*) \leq \inf_{\strategy* \in \strategies[\safe]<\prime>}\preop(\valuation)(\state, \strategy[1], \strategy*)$, and
		\item[(ii)] $\exists \sigma^{\prime} \in \strategies[\safe]<\prime>$ such that $\preop(\valuation)(\state, \strategy[2], \sigma^{\prime}) < \preop(\valuation)(\state, \strategy[1], \sigma^{\prime})$.
	\end{enumerate}
	If only Condition (i) is satisfied, we write $\strategies[2] \dominateseq_{\valuation, \strategies[\safe]<\prime>} \strategies[1]$ to denote that the set $\strategies[1]$ is \emph{not worse} than $\strategies[2]$ under $\strategies[\safe]<\prime>$ with respect to $\valuation$.

	\noindent \textit{Definition for Player $\safe$:}
	Dually, we write  $\mathcal{S}_2 \dominates_{\valuation, \strategies<\prime>} \mathcal{S}_1$ to denote that set $\mathcal{S}_1$ \emph{weakly dominates}  $\mathcal{S}_2$ under $ \strategies<\prime>$ with respect to $\valuation$.
	Formally, $\exists \sigma_1 \in \mathcal{S}_1. \forall \sigma_2 \in \mathcal{S}_2:$
	\begin{enumerate}
		\item[(i)] 	$\sup_{\strategy \in \strategies<\prime>}\preop(\valuation)(\state,\strategy, \sigma_2) \geq \sup_{\strategy \in \strategies<\prime>}\preop(\valuation)(\state, \strategy, \sigma_1)$, and
		\item[(ii)] $\exists \strategy<\prime> \in \strategies<\prime>$ such that $\preop(\valuation)(\state, \strategy<\prime>, \sigma_2) > \preop(\valuation)(\state, \strategy<\prime>, \sigma_1)$.
	\end{enumerate}

	If only Condition (i) is satisfied, we write $\mathcal{S}_2 \dominateseq_{\valuation, \strategies<\prime>} \mathcal{S}_1$ to denote that the set $\mathcal{S}_1$ is \emph{not worse} than $\mathcal{S}_2$ under $\strategies<\prime>$ with respect to $\valuation$.

\end{definition}

\begin{example}[Dominating Sets of Strategies]\label{ex:4-dominating-sets-of-strats}
	Consider the matrix game defined in \Cref{eq:matrixGameMatrix} and the valuation $\upperBound<k>(\state[\mathsf{hide}]) = \upperBound<k>(\state[\mathsf{home}]) = 1$ and $\upperBound<k>(\state[\mathsf{wet}]) =0$. 
	Here, for Player $\reach$, the pure strategy $\{\mathsf{hide} \mapsto 1\}$ dominates the pure strategy $\{\mathsf{run} \mapsto 1\}$:
	\[
	\{\mathsf{run} \mapsto 1\} \dominates_{\upperBound<k>, \strategies*(\state[\mathsf{hide}])} \{\mathsf{hide} \mapsto 1\}.
	\]
	This is because when Player $\safe$ throws the ball, hiding yields 1 while running yields 0.
	Note that this is in fact independent of the valuation, so it also holds for $\valR$.
	
	Let $\mathsf{RunPositive} \coloneqq \{(\mathsf{run} \mapsto \varepsilon, \mathsf{hide} \mapsto 1-\varepsilon) \mid \varepsilon>0\}$ be the set of all strategies that put positive probability on running. We have 
	\[
	\mathsf{RunPositive} \dominates_{\upperBound<k>, \strategies*(\state[\mathsf{hide}])} \{\mathsf{hide} \mapsto 1\}.
	\]
	Again, this is true even when using $\valR$ as valuation.
	Note that we have this weak dominance even though the supremum over the set yields the optimum valuation, namely
	$	\sup_{\rho\in\mathsf{RunPositive}} 
	\inf_{\strategy* \in \strategies*(\state[\mathsf{hide}])}\preop(\upperBound<k>)(\state,\strategy, \strategy*) = 1$. This exemplifies the strictness of our notion of dominance.
	It is crucial that our notion of domination can distinguish these sets of strategies:
	The set $\mathsf{RunPostive}$ contains all strategies that leave the \EC. However, none of them is optimal (even though the supremum over all of them is), which is exactly the reason why \VI chooses the staying strategy $\{\mathsf{hide} \mapsto 1\}$ for updating the valuation, and thus is stuck.
\end{example}

We remark on several technicalities of \cref{def:dominant-set-of-strats-main-body}:
\begin{itemize}
	\item \enquote{Weak} dominance: The term \enquote{weak} might be misleading. 
	We choose to use the word for consistency with~\cite[Def. 4.12]{maschlerGameTheory2020}. 
	There, \emph{weak} domination concerns Condition (ii), only requiring that there \emph{exists} a counter-strategy where the inequality is strict; strict domination requires Condition~(ii) \emph{for all} counter-strategies.
	One might be tempted to use weak domination to denote what we call \enquote{not worse}, i.e.\ only require that there is a strategy in the first set that has optimal valuation at least as good as all in the other set; or to think it only refers to the numerical comparators, e.g.\ $\geq$ and $>$ (as is sometimes the case when only comparing single strategies).
	\item Set-related challenges: The definition is challenging since we cannot speak about actions, but have to consider sets of --- possible mixed --- strategies. The exact quantification of the strategies is relevant. Further, it depends not only on the two sets we are comparing, but also on the counter-strategies of the opponent.
	Thus, we provide the definition explicitly for both players, to avoid confusion that could arise from just saying that they are analogous.
	\item All-quantification instead of optima: The definition uses all-quantification instead of optima. 
	Concretely, weak dominance for Player $\reach$ uses $\forall \strategy[2] \in \strategies[2]$ instead of writing $\sup_{\strategy[2] \in \strategies[2]}$.
	The latter definition cannot sufficiently distinguish sets of strategies, since the supremum of a set need not be contained in it, as we exemplified in \cref{ex:4-dominating-sets-of-strats}.
	This fact is extremely important, as in the proof of \cref{thm:nonConvImpliesBEC}, we pick the maximum from a set of strategies, and the existence of this maximum is guaranteed only because of the correct definition of dominance.
	\item Locally optimal strategies are not worse than any other: Formally, this claim is that for all locally optimal strategy $\rho\in\strategies(s)$ with respect to $\valuation$, we have $\strategies(s)\dominateseq_{\valuation,\strategies*(s)} \{\rho\mapsto 1\}$ (and dually for Player $\safe$).
	This is immediate from \cref{def:dominant-set-of-strats-main-body}, since a locally optimal strategy maximizes $\inf_{\strategy* \in \strategies[\safe]<\prime>}\preop(\valuation)(\state,\strategy, \strategy*)$, and thus satisfies Condition (i) when compared to all other strategies.
	We will use this fact throughout the paper.
	\item Notation: When the valuation is clear from the context, we omit it for the sake of readability. Further, if we say that a strategy $\strategy[1]$ weakly dominates another strategy $\strategy[2]$ with respect to a counter-strategy $\strategy*$, then we mean that $\{\strategy[2]\}\dominates_{\{\strategy*\}}\{\strategy[1]\}$.
\end{itemize}

We prove a lemma about the relation of weak domination and not being worse that is useful and instructive.
The proof in \Cref{apx:convergence_no_bad_ecs} works by straightforward unfolding of definitions and rewriting.
\begin{restatable}[Negating Weak Domination ---  Proof in \Cref{apx:convergence_no_bad_ecs}]{lemma}{negateWeakDom}\label{lem:negate-weak-dominance}
	Let $\valuation$ be a valuation, $\state \in \states$ a state, $\strategies[1], \strategies[2], \strategies[\reach]<\prime> \subseteq \strategies(\state)$ and $\mathcal{S}_1, \mathcal{S}_2, \strategies[\safe]<\prime> \subseteq \strategies[\safe](\state)$ sets of local strategies. 
	
	If for some sets of strategies we do \emph{not} have $\strategies[2] \dominates_{\valuation, \strategies[\safe]<\prime>} \strategies[1]$, then we have $\strategies[1] \dominateseq_{\valuation, \strategies[\safe]<\prime>} \strategies[2]$.
	Analogously, not $\mathcal{S}_2 \dominates_{\valuation, \strategies<\prime>} \mathcal{S}_1$ implies $\mathcal{S}_1 \dominateseq_{\valuation, \strategies<\prime>} \mathcal{S}_2$.
\end{restatable}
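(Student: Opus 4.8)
The plan is to prove both claims of the lemma by directly negating the definition of weak domination in \Cref{def:dominant-set-of-strats-main-body} and pushing the negation through the quantifiers; the only non-syntactic ingredient is the monotonicity of the infimum (resp.\ supremum) under pointwise comparison. I would carry out the argument for Player~$\reach$ in full and obtain the Player~$\safe$ claim by the dual argument.

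For Player~$\reach$, the statement $\strategies[2] \dominates_{\valuation, \strategies[\safe]<\prime>} \strategies[1]$ unfolds to $\exists \strategy[1] \in \strategies[1].\, \forall \strategy[2] \in \strategies[2]$ such that both (i) and (ii) hold. I would first write out its negation, namely $\forall \strategy[1] \in \strategies[1].\, \exists \strategy[2] \in \strategies[2]$ such that $\lnot$(i) or $\lnot$(ii) holds, where $\lnot$(i) reads $\inf_{\strategy* \in \strategies[\safe]<\prime>}\preop(\valuation)(\state,\strategy[1], \strategy*) < \inf_{\strategy* \in \strategies[\safe]<\prime>}\preop(\valuation)(\state, \strategy[2], \strategy*)$ and $\lnot$(ii) reads $\forall \sigma' \in \strategies[\safe]<\prime>:\ \preop(\valuation)(\state, \strategy[1], \sigma') \leq \preop(\valuation)(\state, \strategy[2], \sigma')$.

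The crux is the observation that \emph{each} of the two disjuncts already entails the single weak inequality $\inf_{\strategy* \in \strategies[\safe]<\prime>}\preop(\valuation)(\state,\strategy[1], \strategy*) \leq \inf_{\strategy* \in \strategies[\safe]<\prime>}\preop(\valuation)(\state, \strategy[2], \strategy*)$, which is exactly condition (i) of \Cref{def:dominant-set-of-strats-main-body} with the roles of $\strategy[1]$ and $\strategy[2]$ interchanged. For $\lnot$(i) this is immediate (indeed strict). For $\lnot$(ii) it follows by taking the infimum over $\sigma' \in \strategies[\safe]<\prime>$ on both sides of the pointwise inequality, using that the infimum is monotone with respect to pointwise domination. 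Hence the negation supplies, for every $\strategy[1] \in \strategies[1]$, some $\strategy[2] \in \strategies[2]$ satisfying condition (i) with $\strategy[2]$ in the dominating role, which is precisely the assertion $\strategies[1] \dominateseq_{\valuation, \strategies[\safe]<\prime>} \strategies[2]$, i.e.\ that $\strategies[2]$ is not worse than $\strategies[1]$.

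As the paper indicates, this is a straightforward unfolding, so I do not anticipate a genuine obstacle; the one step that must be stated with care is the passage from the pointwise inequality $\lnot$(ii) to the inequality of infima, which relies on monotonicity of $\inf$ (and, dually, of $\sup$). The Player~$\safe$ case is entirely symmetric: negating $\mathcal{S}_2 \dominates_{\valuation, \strategies<\prime>} \mathcal{S}_1$ yields, for every $\sigma_1 \in \mathcal{S}_1$, some $\sigma_2 \in \mathcal{S}_2$ for which $\lnot$(i)$_{\safe}$ or $\lnot$(ii)$_{\safe}$ holds, and both disjuncts collapse --- the second one via monotonicity of the supremum over $\strategies<\prime>$ --- to the single inequality $\sup_{\strategy \in \strategies<\prime>}\preop(\valuation)(\state, \strategy, \sigma_2) \leq \sup_{\strategy \in \strategies<\prime>}\preop(\valuation)(\state, \strategy, \sigma_1)$ defining $\mathcal{S}_1 \dominateseq_{\valuation, \strategies<\prime>} \mathcal{S}_2$.
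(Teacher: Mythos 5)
Your unfolding of the negation and your observation that each negated condition yields the weak inequality $\inf_{\strategy* \in \strategies[\safe]<\prime>}\preop(\valuation)(\state,\strategy[1],\strategy*) \leq \inf_{\strategy* \in \strategies[\safe]<\prime>}\preop(\valuation)(\state,\strategy[2],\strategy*)$ (immediately for $\lnot$(i), via monotonicity of $\inf$ for $\lnot$(ii)) match the paper's proof exactly. The gap is in your last step: what you have derived is the statement $\forall \strategy[1] \in \strategies[1].\, \exists \strategy[2] \in \strategies[2]$ satisfying that inequality, and you declare this to be ``precisely'' $\strategies[1] \dominateseq_{\valuation, \strategies[\safe]<\prime>} \strategies[2]$. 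It is not: by \Cref{def:dominant-set-of-strats-main-body}, ``not worse'' inherits the quantifier prefix of weak domination, so the conclusion reads $\exists \strategy[2] \in \strategies[2].\, \forall \strategy[1] \in \strategies[1]$ --- a \emph{single} strategy of $\strategies[2]$ must witness the inequality against all of $\strategies[1]$. A $\forall\exists$ statement does not in general imply the corresponding $\exists\forall$ statement, and this quantifier order is exactly the subtlety the paper insists on elsewhere (see the item ``All-quantification instead of optima'' and \cref{ex:4-dominating-sets-of-strats}, where the supremum over a set of strategies need not be attained inside the set).

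The paper closes this gap with one additional argument that your proposal is missing: the witness can be chosen uniformly, independently of $\strategy[1]$, by picking $\strategy[2]$ as a strategy of $\strategies[2]$ that maximizes $\inf_{\strategy* \in \strategies[\safe]<\prime>}\preop(\valuation)(\state,\strategy[2],\strategy*)$; for such a maximizer the inequality then holds against every $\strategy[1] \in \strategies[1]$ simultaneously, which legitimizes exchanging the quantifiers. (Note that this step itself rests on the existence of such a maximizer in $\strategies[2]$, i.e., on attainment of the supremum, which is why the quantifier exchange cannot simply be asserted.) The same repair is needed in your dual Player-$\safe$ argument: there, too, the conclusion requires one $\sigma_2 \in \mathcal{S}_2$ that works against all $\sigma_1 \in \mathcal{S}_1$, not a $\sigma_2$ depending on $\sigma_1$.
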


To complete our intuitive understanding of the definition of domination, we point out a connection to the standard definition of weak domination: \enquote{A rational player does not use a dominated strategy.}~\cite[Asm. 4.13]{maschlerGameTheory2020}
If a strategy is not dominated, by \cref{lem:negate-weak-dominance} it is not worse than any other strategy. This is exactly what we argued above: Locally optimal strategies are not worse than any other.
We often use this fact throughout the paper.

Next, we formally define \emph{leaving and staying strategies}. 
Given a set of states, a leaving strategy ensures that the set of successor states contains states outside the given set of states for all given counter-strategies of the opponent player. 
A strategy is staying if all successor states belong to the given set of states for all given counter-strategies of the opponent player. 
Note that a strategy can be neither leaving nor staying, if the set is exited for some, but not all counter-strategies of the opponent.

\begin{definition}[Leaving and Staying Strategies]\newtarget{def:leavingStratR}{}\newtarget{def:leavingStratS}{}\newtarget{def:allStayingStratR}{}\newtarget{def:allStayingStratS}{}\newtarget{def:stayingStratR}{}\newtarget{def:stayingStratS}{}
	\label{def:leave-and-stay-strats}
	Consider a set of states $\ecStates \subseteq \states$ and a state $\state \in \ecStates$. 
	Let  $\strategies<\prime> \subseteq \strategies(\state)$ and $\strategies[\safe]<\prime> \subseteq \strategies*(\state)$ be sets of strategies of Player $\reach$ and $\safe$, respectively. The set of (local) \emph{leaving strategies} for Player $\reach$ with respect to $\strategies[\safe]<\prime>$, is given by 
	\begin{align*}
		&\leaveStratsR(\strategies[\safe]<\prime>,\ecStates,\state) \coloneqq \{ \strategy \in \strategies(\state) \mid \forall \strategy* \in \strategies[\safe]<\prime>.(\state, \strategy, \strategy*) \; \leaves \; \ecStates\},\\
		&\text{and for Player~$\safe$ with respect to $\strategies<\prime>$ by}\\
		&\leaveStratsS(\strategies<\prime>, \ecStates, \state)\coloneqq \{ \strategy* \in \strategies*(\state) \mid \forall \strategy \in \strategies<\prime>.(\state, \strategy, \strategy*) \; \leaves \; \ecStates\}.
	\end{align*}
	A strategy that is not leaving is called \emph{non-leaving}. The set of all non-leaving Player $\reach$ strategies is denoted by $\stayStratsR(\strategies[\safe]<\prime>, \ecStates, \state)$ (or $\stayStratsS$ for Player $\safe$).
	
	In contrast, the set of \emph{staying} strategies at a state $\state \in \ecStates$ for Player $\reach$ with respect to $\strategies[\safe]<\prime>$, is given by
	\begin{align*}
		&\allStayStratsR(\strategies[\safe]<\prime>,\ecStates,\state) \coloneqq \{ \strategy \in \strategies(\state) \mid \forall \strategy* \in \strategies[\safe]<\prime>.(\state, \strategy, \strategy*) \; \stays \; \ecStates\},\\
		&\text{and for Player~$\safe$ with respect to $\strategies<\prime>$ by}\\
		&\allStayStratsS(\strategies<\prime>, \ecStates, \state)\coloneqq \{ \strategy* \in \strategies*(\state) \mid \forall \strategy \in \strategies<\prime>.(\state, \strategy, \strategy*) \; \stays \; \ecStates\}.
	\end{align*}
\end{definition}

\para{Notation}
If we consider leaving (or staying) strategies with respect to all counter-strategies, then we omit the set of counter-strategies, i.e. instead of $\leaveStratsR(\strategies[\safe](\state),\ecStates,\state)$ (or $\allStayStratsR(\strategies[\safe](\state),\ecStates,\state)$) we write $\leaveStratsR(\ecStates,\state)$ (or $\allStayStratsR(\ecStates,\state)$). We use the same shorthand notion for leaving (staying) strategies of Player $\safe$.

We often speak about a leaving/staying pair of local strategies, so we provide the following shorthand notations: For a tuple $(\state, \strategy, \strategy*) \in \states \times \strategies \times \strategies*$, we say that $(\state, \strategy, \strategy*)\; \newtarget{def:leaves}{\leaves} \;\ecStates$ if and only if $\destination(\state, \strategy, \strategy*) \cap (\states \setminus \ecStates) \neq \emptyset$. Analogously, we say that $(\state, \strategy, \strategy*)\; \newtarget{def:stays}{\stays} \;\ecStates$ if and only if $\destination(\state, \strategy, \strategy*) \cap (\states \setminus \ecStates) = \emptyset$ (or, equivalently, $\destination(\state, \strategy, \strategy*) \subseteq \ecStates$).

\para{Intuition of Hazardous Strategies}
Using the definitions of dominance and leaving or staying, we can now classify strategies of Player $\reach$ that can lead to non-convergence.
Intuitively, a hazardous strategy is one that Player $\reach$ chooses, even though it can be staying for some counter-strategies.
Thus, such a strategy (i) is non-leaving (i.e.\ there exist counter-strategies that make it staying), and (ii) it is not worse than any other strategy so that Player $\reach$ may choose it.
Moreover, to be problematic for convergence, (iii) the strategy weakly dominates all leaving strategies, i.e.\ leaving strategies are not chosen for the update.

\begin{definition}[Hazardous Strategy]\label{def:badAction}
	Let $\ecStates \subseteq \states \setminus (\success \cup \winning)$ be as set of states, $\valuation$ a valuation, and $\state \in \ecStates$. A strategy $\strategy \in \strategies(\state)$ is called hazardous with respect to $\valuation$ if it satisfies:
	\begin{enumerate}
		\item[(i)] $\strategy \in \stayStratsR(\ecStates, \state)$,
		\item[(ii)] $\strategies(\state) \setminus \{\strategy\} \dominateseq_{\strategies*(\state)} \{\strategy\}$, and
		\item[(iii)] $\leaveStratsR(\ecStates, \state) \dominates_{\strategies*(\state)} \{\strategy\}$. %
	\end{enumerate}
	$\newtarget{def:badActs}{\badActs_{\valuation}(\ecStates, \state)}$ denotes the set of all hazardous strategies at state $\state$ with respect to a set of states $\ecStates$ and a valuation $\valuation$.
\end{definition}

We mention a corner case: In a state where Player $\reach$ possesses no leaving strategies, all optimal strategies are hazardous (note in particular that Condition (iii) is trivially satisfied, since the dominated set of strategies $\leaveStratsR(\ecStates, \state)$ is empty).

\begin{example}[Hazardous strategy]\label{ex:hazardousStrats}
	Consider again the matrix game defined in \Cref{eq:matrixGameMatrix} and the initial valuation $\upperBound<0>(\state[\mathsf{hide}]) = \upperBound<0>(\state[\mathsf{home}]) = 1$ and $\upperBound<0>(\state[\mathsf{wet}]) =0$. 
	The strategy $\strategy<\prime> \coloneqq \{\mathsf{hide} \mapsto 1\}$ is hazardous because:
	(i) It is non-leaving.
	(ii) It is an optimal strategy, i.e.\ it is not worse than any other strategy.
	(iii) It weakly dominates the set of all leaving strategies, see \cref{ex:4-dominating-sets-of-strats}. 
\end{example}

\begin{definition}[Bloated End Component (\BEC)]\label{defBec}
	An \EC $\ecStates \subseteq \states \setminus (\success \cup \winning)$ is called \emph{bloated end component} \newtarget{def:bec}{(BEC)} with respect to a valuation $\valuation$ if for all $\state \in \ecStates$ it holds that $\badActs_{\valuation}(\ecStates, \state) \neq \emptyset$.
\end{definition}

\begin{example}[Bloated End Component]\label{ex:bec}
	Consider the \CSG \hideorrun from \cref{fig:single_state_example}.
	As discussed in \cref{ex:hazardousStrats}, there exists a hazardous strategy in state $\state[\mathsf{hide}]$. Moreover, $\{\state[\mathsf{hide}]\}$ is an \EC, since under the pair of strategies that plays $\mathsf{hide}$ and $\mathsf{wait}$, the play stays in it.
	Consequently, $\{\state[\mathsf{hide}]\}$ is a \BEC and therefore \VI does not converge in this state, see \cref{ex:potentialProblemECs}.
\end{example}

We provide a lemma that captures the intuition of what it means to (not) be a \BEC, and that is also useful in several proofs:
\begin{restatable}[Negating Bloated ---  Proof in
	\ifarxivelse{\cref{apx:convergence_no_bad_ecs}}{\cite[\cref{apx:convergence_no_bad_ecs}]{arxivversion}}
	]{lemma}{negateBEC}\label{lem:negate-BEC}
	If an \EC $\ecStates \subseteq \states \setminus (\success \cup \winning)$ is not bloated for a valuation $\valuation$, then there exists a state $s\in\ecStates$ that has a locally optimal strategy that is leaving, formally $\exists \rho \in \leaveStratsR(\ecStates, \state). \strategies(s) \dominateseq_{\valuation,\strategies*(s)} \{\rho\}$.
\end{restatable}

\subsection{Convergence in the Absence of \BECs}
Now we can prove that \BECs indeed are the 
reason that \VI does not converge for over-approximations.

\begin{restatable}[Non-convergence implies \BECs~--- Proof in
	\ifarxivelse{\cref{apx:convergence_no_bad_ecs}}{\cite[\cref{apx:convergence_no_bad_ecs}]{arxivversion}} 
	]{theorem}{nonConvImpliesBEC}\label{thm:nonConvImpliesBEC}
Let $\mathsf{U}^\star: = \lim_{k \rightarrow \infty} \upperBound<k>$ be the limit of the na\"ive upper bound iteration (\cref{eq:upperBoundRec}) on the \CSG $\game$.
If \VI from above does not converge to the value in the limit, i.e.\ $\mathsf{U}^\star > \valR$, then the \CSG $\game$ contains a \BEC in $\states \setminus (\success \cup \winning)$ with respect to $\mathsf{U}^\star$.
\end{restatable}
\begin{proof}[Proof sketch]
	This proof builds on the proof of \cref{theo:BVInoEC}.
	There, we constructed a set $\mathcal{X}$ maximizing the difference between $\upperBound<\star>$ and $\valR$ and showed that if there is a pair of optimal strategies leaving $\mathcal{X}$, then we can derive a contradiction: The upper bound decreases, which contradicts the fact that it is a fixpoint.
	In the context of the other proof, that allowed us to show that without \ECs, \VI converges, because without \ECs it is impossible to have a set of states where all optimal strategies stay in that set.
	
	In the presence of \ECs, states can indeed have a positive difference between $\upperBound<\star>$ and $\valR$, see e.g.\ \cref{ex:potentialProblemECs}. 
	Our goal is to prove that at least one of these \ECs is bloated.
	Thus, we assume for contradiction that no \EC is bloated under $\upperBound<\star>$.
	Then, by \cref{lem:negate-BEC}, there is an optimal leaving strategy for Player $\reach$.
	Using that, we can repeat the argument from \cref{theo:BVInoEC}, showing that in this case $\upperBound<\star>$ would decrease. Again, this is a contradiction because it is a fixpoint of applying Bellman updates (\cref{lem:Ustar-is-fixpoint}).
	Thus, the initial assumption that no \EC is bloated is false, and we can conclude that there exists a \BEC.
\end{proof}

\begin{remark}[Relation to~\cite{eisentrautValueIterationSimple2022}]
	\cref{defBec} of \BEC is more general than the definition of BEC for \TSGs in~\cite[Definition 4]{eisentrautValueIterationSimple2022}.
	The differences are that in~\cite{eisentrautValueIterationSimple2022}, an \EC is only called bloated if it is bloated with respect to $\valR$, whereas we extended that definition to speak about a concrete valuation, similar to~\cite[Def.~3]{kretinskyStoppingCriteriaValue2023a}. 
	Further, the definition for \TSGs speaks about the best exit value, which is the optimum among the available actions; in contrast, in \CSGs the definition of best exit is technically involved and dependent on hazardous strategies, see \cref{def:exitVal}.
	Thus, our definition of \BEC does not analyze the exit values, but instead uses a fundamental analysis of the strategies.
\end{remark}

\para{Key Contribution} 
The key novelty of this section is the correct definition of \BEC that captures 
when \VI from above does not converge.  
We highlight that this definition contains many technical intricacies: 
Lifting the notion of an exiting action~\cite[Section 2.2]{eisentrautValueIterationSimple2022} from \TSGs to \CSGs requires considering sets of local strategies that leave against all opponent-strategies (\Cref{def:leave-and-stay-strats}), and considering the additional complication that strategies can be neither leaving nor staying. 
Further, the exact definition of dominance is very important, as the supremum over all leaving strategies can be a staying strategy, see \cref{ex:4-dominating-sets-of-strats} and the related discussion in the item \enquote{All-quantification instead of optima} after the example.

\section{Resolving Bloated End Components}\label{sec:3-resolve-BEC}
\subsection{Solution in TSGs}
We have identified \BECs as the cause of non-convergence.
Our method for fixing the over-approximation is again based on the ideas for \TSGs.
We explain the intuition of their solution:
Firstly, staying actions yield the valuation that is bloated; thus, we need an additional update of the over-approximation that depends only on leaving actions. The valuation to which we reduce the over-approximation is the \emph{best exit} from the \EC, which in \TSGs simply is the leaving action attaining the highest value over all states of Player $\reach$~\cite[Definition 3]{eisentrautValueIterationSimple2022}.
Secondly, not all states in an \EC need to have the same value, since Player $\safe$ can prevent Player $\reach$ from reaching the state that attains the best exit valuation.
Hence, an \EC is decomposed into parts that share the same value, called \emph{simple} \ECs (SECs) in~\cite[Definition 5]{eisentrautValueIterationSimple2022}.
Repeatedly finding these SECs and \emph{deflating} their valuation by setting it to the best exit from the SEC suffices for convergence in \TSGs.

When generalizing these ideas to \CSGs, we encounter the following problems:
Firstly, the definition of best exit is more involved, since staying and leaving depends not only on actions, but on strategies. 
Additionally, the supremum over all leaving strategies can be a non-leaving strategy, see \cref{ex:4-dominating-sets-of-strats}.
(This is also the reason why globally optimal strategies need not exist in \CSGs, see \cref{ex:2-no-opt-strat}).
This was the fundamental reason why the solution proposed in~\cite{eisentrautStoppingCriteriaValue2019} did not work, as it was based on actions.
Secondly, we need to decompose the \EC into parts. For this, we use a recursive approach, removing states that have been successfully deflated and checking whether there are further problematic states in the remainder of the \EC.

\para{Outline}
In \cref{sec:findingBestExit}, we develop a strategy-based definition of best exit (\Cref{def:bestExitVal}), which relies on identifying the \emph{trapping strategies} (\Cref{def:badCounterStrategy}) that Player $\safe$ uses to keep the play inside the \BEC; and the \emph{deflating strategies} (\cref{def:deflStrats}), the best response of Player $\reach$, namely the leaving strategies that should be played with arbitrarily small probability $\varepsilon$. %
In \cref{sec:findingMBECs}, we provide the \procname{FIND\_MBECs} algorithm that finds all maximal \BECs that are present in a given \MEC. A maximal \BEC is a set of states $\ecStates \subseteq \states \setminus (\success \cup \winning)$ that is a \BEC and there exists no another set of states $\ecStates<\prime> \subseteq \states \setminus (\success \cup \winning)$ that is a \BEC and $\ecStates \subsetneq  \ecStates<\prime>$.
Finally, \cref{sec:convBVI} provides the full deflating procedure for \CSGs in \cref{alg:deflate_mecs}.
In particular, it uses a recursive call to decompose a given \MEC and deflate all relevant parts of it.

\subsection{Defining the Best Exit}\label{sec:findingBestExit}
The key problem of a \BEC is that in all states, all leaving strategies of Player $\reach$ are dominated by hazardous strategies. Player $\safe$ can play a trapping strategy and thereby make the Bellman update self-dependent.
If the current valuation is too high, this prevents convergence.
Intuitively, the \enquote{pressure} inside the \BEC is too high, and we want to \enquote{deflate} it, by adjusting it to the pressure, i.e.\ valuation, outside of the \BEC.
Since non-trivial \BECs neither contain target states nor belong to the winning region of Player $\safe$, there has to exist a state in a \BEC where the supremum over leaving and staying strategies is equal.
To \enquote{equalize the pressure} between the \BEC and the rest of the states, we need to estimate the pressure outside the \BEC. 
To do so, we estimate the valuation attainable upon leaving the \BEC at every state of the \BEC, called exit value of the state. 
The best exit value is the maximum of all exit values.
Reducing the upper bounds of the states inside the \BEC to the best exit value, in case the best exit is smaller than the current valuation, \enquote{decreases the pressure}. 
However, since the valuations of the exiting strategies can depend on the valuations of the states that belong to the \BEC, this procedure may still only converge in the limit, already in \TSGs.
We provide an illustrative example:

\begin{example}[Deflating \BECs]\label{ex:deflateBEC}
	Consider again the \CSG \hideorrun (\cref{fig:single_state_example}). Under the initial valuation (see \Cref{eq:upperBoundRec}) the matrix game played at $\state[\mathsf{hide}]$ is given by 
		\begin{align*}
		\matrixGameMatrix[\upperBound<0>](\state[\mathsf{hide}]) = \begin{blockarray}{ccc}
			\scriptstyle \mathsf{throw} & \scriptstyle \mathsf{wait}\\
			\begin{block}{(cc)c}
				0 & \frac{2}{3}  & \scriptstyle \mathsf{run}\\
				1 & 1 & \scriptstyle \mathsf{hide}\\
			\end{block}
		\end{blockarray}
	\end{align*}
	Due to the hazardous strategy $\{\mathsf{hide} \mapsto 1\}$, the Bellman update cannot improve the initial upper bound of $\state[\mathsf{hide}]$, but remains at 1.
	However, by the same argument as in \cref{ex:2-no-opt-strat}, Player $\reach$ can use the strategy of running with a probability $\varepsilon>0$ that is arbitrarily small.
	This yields a value arbitrarily close to $\frac 2 3$.
	Consequently, we can deflate, i.e.\ decrease the upper bound of $\state[\mathsf{hide}]$ to $\frac{2}{3}$ which is the valuation attainable upon leaving the \BEC at $\state[\mathsf{hide}]$. After the Bellman update, the matrix game played at $\state[\mathsf{hide}]$ is given by
			\begin{align*}
		\matrixGameMatrix[\upperBound<1>](\state[\mathsf{hide}]) = \begin{blockarray}{ccc}
			\scriptstyle \mathsf{throw} & \scriptstyle \mathsf{wait}\\
			\begin{block}{(cc)c}
				0 & \frac{5}{9}  & \scriptstyle \mathsf{run}\\ [0.2em]
				1 & \frac{2}{3}& \scriptstyle \mathsf{hide}\\
			\end{block}
		\end{blockarray}
	\end{align*}
	 The strategy $\{\mathsf{hide} \mapsto 1\}$ is still hazardous, but we can deflate the upper bound of $\state[\mathsf{hide}]$ to $\frac{5}{9}$. 
	 By repeating these steps, the upper bound converges to $\frac{1}{2}$ in the limit, and only in the limit, similar to the lower bound in \cref{tab:VI_game_1}. 
\end{example}
\para{How can we find the best exit value in general?}
In \cref{ex:deflateBEC}, we used an argument about playing a leaving action with vanishingly small probability to figure out which entry in the matrix we choose for deflating.
We provide an alternative intuition:
Player $\reach$ plays a hazardous strategy most of the time. The best response of Player $\safe$ is to play a weakly dominant strategy that stays in the \EC, trapping the play.
Thus, we can ignore the other strategies of Player $\safe$ and consider only the columns of the matrix that correspond to \emph{trapping strategies}.
Now, we need to select a leaving strategy of Player $\reach$ which then is played with vanishingly low probability. Thus, we restrict the matrix further to use only rows corresponding to leaving strategies.
In the example, we end up with the top right entry.
We now formalize how we can construct this sub-matrix, called the exiting sub-game.

\begin{definition}[Trapping Strategy]\newtarget{def:counter}\label{def:badCounterStrategy}
	Let $\ecStates \subseteq \states \setminus (\success \cup \winning)$ be as set of states, $\valuation$ a valuation, and $\state \in \ecStates$. A strategy $\strategy[\safe] \in \strategies[\safe](\state)$ is called \emph{trapping strategy} if two conditions are satisfied:
	\begin{enumerate}
		\item[(i)] $\strategy[\safe] \in \argmin_{\strategy[\safe]<\prime> \in \strategies[\safe](\state)} \max_{\strategy \in \strategies(\state)}\preop(\valuation)(\state, \strategy, \strategy[\safe]<\prime>)$, and
		\item[(ii)] $\forall \strategy \in \badActs_{\valuation}(\ecStates,\state):(\state, \strategy, \strategy[\safe])\;\stays\;\ecStates$.
	\end{enumerate}
		$\counter_{\valuation}(\ecStates, \state)$ denotes the set of all trapping strategies at state $\state$ with respect to a set of states $\ecStates$ and a valuation $\valuation$.
\end{definition}

\begin{definition}[Deflating Strategies]\newtarget{def:deflStrats}\label{def:deflStrats}
	Let $\ecStates \subseteq \states \setminus (\success \cup \winning)$ be a set of states.
	A Player $\reach$ strategy $\strategy \in \strategies(\state)$ is called \emph{deflating} if two conditions are satisfied:
	\begin{enumerate}
		\item[(i)]$\exists \strategy* \in \counter_{\valuation}(\ecStates, \state)$ such that $(\state, \strategy, \strategy*) \; \leaves \; \ecStates$, and
		\item[(ii)] $\support(\strategy) \cap \bigcup_{\strategy<\prime>\in\badActs_{\valuation}(\ecStates, \state)} \support(\strategy<\prime>)= \emptyset$.
	\end{enumerate}
	$\deflStrats_{\valuation}(\ecStates, \state)$ denotes the set of all deflating strategies at state $\state$ with respect to a set of states $\ecStates$ and a valuation $\valuation$.
\end{definition}

\begin{definition}[Exiting sub-game]\label{def:exitingSG}\newtarget{def:exitingSubGame}
		Let $\ecStates \subseteq \states \setminus (\success \cup \winning)$ be a set of states, $s\in\ecStates$ a state, and $\valuation$ a valuation. Further, let $\mathsf{Z}_{\valuation}(\state)$ be the matrix game played at state $\state \in \ecStates$. If $\counter_{\valuation}(\ecStates,\state) \neq \emptyset$
		 then, the \emph{exiting sub-game} played at state $\state$, denoted by $\exitingSG(\state)$, is the matrix game where Player $\reach$ has the actions in $\bigcup_{\strategy \in \deflStrats_{\valuation}(\ecStates, \state)}\support(\strategy)$ 
		  and Player $\safe$ has the actions in $\bigcup_{\strategy* \in \counter_{\valuation}(\ecStates, \state)}\support(\strategy*)$. 
		  The value of the exiting sub-game is given by $\val(\exitingSG(\state)) \coloneqq \max(0, \sup_{\strategy \in \deflStrats_{\valuation}(\ecStates, \state)}\inf_{\strategy* \in \counter_{\valuation}(\ecStates, \state)}\preop(\valuation)(\state, \strategy, \strategy*))$.
\end{definition}

We explain how this exiting sub-game and its value are well-defined:
The set of deflating strategies can be empty, namely in a state which has no leaving strategies. In this case, the value of the exiting sub-game is the supremum over an empty set, i.e.\ the smallest possible value, commonly minus infinity and 0 in our case.
We highlight this possibility by explicitly taking the maximum of 0 and the value of the exiting subgame when we compute it.
If the set of deflating strategies is non-empty, then the set of trapping strategies necessarily is non-empty, too, since the Item (i) of \cref{def:deflStrats} requires existence of a trapping strategy.

\begin{definition}[Exit value]\label{def:exitVal}\newtarget{exitValDef}
	Let $\ecStates \subseteq \states \setminus (\success \cup \winning)$ be a set of states, $s\in\ecStates$ a state, and $\valuation$ an over-approximation.
	Then, the \emph{exit value} from $\ecStates$ attainable at state $\state$ is given by
	\begin{align*}
	&\exitVal[\valuation](\ecStates, \state) \coloneqq\\ 
	&\begin{cases}
		\adjustlimits\sup_{\strategy \in \strategies(\state)}\inf_{\strategy* \in \strategies*(\state)}\preop(\valuation)(\state,\strategy,\strategy*), &\text{ if }\badActs_{\valuation}(\ecStates, \state) = \emptyset \\
		&\quad\hspace{5pt}\lor\, \counter_{\valuation}(\ecStates, \state) = \emptyset;\\ \val(\exitingSG(\state)), &\text{ else.}
	\end{cases}
\end{align*}
\end{definition}

For \BECs that consists of more than one state, the exit values attainable at different states within the \BEC may differ. Consequently, to ensure that deflating does not reduce the value of any of the states in the \BEC below its actual value we estimate the exit values at each state of the \BEC, and finally select the maximal exit value, i.e. the best exit value, for deflating. 
\begin{definition}[Best Exit]\newtarget{defBestExit}\label{def:bestExitVal}\newtarget{def:bestExitVal}
	Let $\ecStates \subseteq (\states \setminus (\success \cup \winning))$ be a set of states and $\valuation$ a valuation.
	The \emph {best exit value} with respect to a set $\ecStates$ and a valuation $\valuation$ is given by
	\begin{align*}
		&\bestExitVal[\valuation](\ecStates) := \max_{\state \in \ecStates} \exitVal[\valuation](\ecStates, \state).
	\end{align*}
	The \emph{best exit}, denoted by $\bestExit[\valuation](\ecStates)$\newtarget{def:bestExits}, is a state obtaining $\bestExitVal[\valuation](\ecStates)$, and the set of all best exits is denoted by $\bestExits[\valuation](\ecStates)$\newtarget{def:bestExits}.
\end{definition}

\begin{lemma}[$\bestExitVal$ is sound]\label{lem:bestExitValSound}
	Let $\ecStates \subseteq \states \setminus (\success \cup \winning)$ be an \EC, and $\upperBound \in [0,1]^{\mid \states \mid}$ be a valuation with $\upperBound \geq \valR$.
	Then, for all states $\state\in\ecStates$, we have $\bestExitVal[\upperBound](\ecStates) \geq \valR(s)$.
\end{lemma}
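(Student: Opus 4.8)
The plan is to prove the equivalent statement $\bestExitVal[\upperBound](\ecStates)\ge v^\star$, where $v^\star:=\max_{\state\in\ecStates}\valR(\state)$; this gives the claim since $\valR(\state)\le v^\star$ for every $\state\in\ecStates$, and we note $v^\star>0$ because $\ecStates\cap\winning=\emptyset$. The backbone is monotonicity of $\preop$ together with the fact that $\valR$ is a fixpoint of $\preop$ \cite{Alfaro2004}: from $\upperBound\ge\valR$ we obtain $\preop(\upperBound)(\state,\strategy,\strategy*)\ge\preop(\valR)(\state,\strategy,\strategy*)$ for all $\state,\strategy,\strategy*$, and in particular $\preop(\upperBound)(\state)\ge\preop(\valR)(\state)=\valR(\state)$. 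Hence, whenever the exit value at a state falls into the first branch of \cref{def:exitVal} (i.e.\ $\badActs_{\upperBound}(\ecStates,\state)=\emptyset$ or $\counter_{\upperBound}(\ecStates,\state)=\emptyset$), I immediately get $\exitVal[\upperBound](\ecStates,\state)=\preop(\upperBound)(\state)\ge\valR(\state)$; so if a value-maximal state lies in this branch, the lemma already follows.

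It remains to handle the exiting-subgame branch, where $\exitVal[\upperBound](\ecStates,\state)=\max\bigl(0,\sup_{\strategy\in\deflStrats_{\upperBound}(\ecStates,\state)}\inf_{\strategy*\in\counter_{\upperBound}(\ecStates,\state)}\preop(\upperBound)(\state,\strategy,\strategy*)\bigr)$. By the monotonicity above it suffices to exhibit \emph{some} state at which a deflating strategy secures $v^\star$ already under $\valR$ (replacing $\upperBound$ inside $\preop$). To locate such a state I would pass to the true value: Player $\safe$ has a memoryless optimal global strategy $\sigma^\dagger$ \cite{Parthasarathy1973}, and in the induced MDP $\game[\sigma^\dagger]$ the Player~$\reach$ reachability value of every state equals $\valR(\state)$. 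Since $\ecStates$ contains no target, the standard MDP fact that inside a target-free set the maximal value equals the value of a best leaving transition produces a state $\hat\state\in\ecStates$ and a leaving Player~$\reach$ strategy $\strategy$ with $(\hat\state,\strategy,\sigma^\dagger)\;\leaves\;\ecStates$ and $\preop(\valR)(\hat\state,\strategy,\sigma^\dagger)\ge v^\star$: any optimal action at a value-maximal state either stays, forcing its successors to also attain $v^\star$ and propagating the argument through the finite set $\ecStates$, or it leaves, and the absence of targets together with $v^\star>0$ forbids staying forever.

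The main obstacle is reconciling this witness with the $\upperBound$-dependent sets of \cref{def:badCounterStrategy,def:deflStrats}: I must show that $\strategy$ is \emph{deflating} (its support avoiding that of the hazardous strategies) and that $\sigma^\dagger$ corresponds to an actual \emph{trapping} strategy in $\counter_{\upperBound}(\ecStates,\hat\state)$, even though $\sigma^\dagger$ is minimax-optimal for $\valR$ rather than for $\upperBound$. This is exactly where the non-existence of optimal leaving strategies in \CSGs (\cref{ex:2-no-opt-strat}) bites: the exit value is in general attained only in the limit, by playing a leaving strategy with vanishing probability against the trapping response, as illustrated in \cref{ex:deflateBEC}. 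I therefore expect the core of the argument to be a careful limiting step showing that, against every trapping strategy, a deflating strategy recovers at least what the value-maximal state gains by leaving. Concretely, a trapping $\strategy*$ is minimax-optimal for $\mathsf{Z}_{\upperBound}(\hat\state)$ and keeps every hazardous (hence staying) strategy inside $\ecStates$, so the staying part of any Player~$\reach$ response contributes at most $v^\star$ while the guaranteed one-step value is at least $v^\star$; splitting off the leaving part then yields $\val(\exitingSG(\hat\state))\ge v^\star$, and taking the maximum over $\ecStates$ gives $\bestExitVal[\upperBound](\ecStates)\ge v^\star$, as required.
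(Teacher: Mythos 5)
Your high-level skeleton is in fact the same as the paper's: reduce the claim to $\bestExitVal[\upperBound](\ecStates)\ge v^\star$ with $v^\star=\max_{\state\in\ecStates}\valR(\state)$, settle the first branch of \cref{def:exitVal} by monotonicity of $\preop$ together with the fixpoint property $\preop(\valR)=\valR$, and then lift a $\valR$-based exit witness to $\upperBound$ by monotonicity; this is exactly the paper's decomposition into \cref{lem:exitPossibleUnderVr} and \cref{lem:eVUgeqeVV}. Two ingredients of your sketch are sound and worth keeping: locating a value-maximal state with an optimal leaving action via the MDP induced by a memoryless optimal safety strategy $\sigma^\dagger$ is a correct, elementary substitute for part of \cref{lem:exitPossibleUnderVr}; and your parenthetical \enquote{hazardous (hence staying)} contains the right key fact, namely that an action which leaves against a trapping strategy cannot lie in the support of any hazardous strategy (Condition (ii) of \cref{def:badCounterStrategy} plus monotonicity of $\destination$ in supports), so \enquote{leaving parts} of responses to trapping strategies are automatically deflating.

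The gap is the final step, and it is precisely the content of \cref{lem:eVUgeqeVV}. Your witness certifies leaving \emph{against $\sigma^\dagger$}, but $\val(\exitingSG(\hat{s}))$ quantifies over $\counter_{\upperBound}(\ecStates,\hat{s})$, and $\sigma^\dagger(\hat{s})$ is in general not trapping, since trapping requires minimax-optimality for $\matrixGameMatrix[\upperBound](\hat{s})$, not for $\matrixGameMatrix[\valR](\hat{s})$. Against a trapping $\sigma$, the $\valR$-optimal responses at $\hat{s}$ can be \emph{fully staying}, in which case there is no leaving part to split off and your one-step bound collapses; worse, the MDP-selected state can simply be the wrong state. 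Concretely, take $\ecStates=\{s_1,s_2\}$, both of value $v^\star\in(0,1)$: at $s_1$, Player $\reach$ has $\{r,h\}$ and Player $\safe$ has $\{t,w\}$, where $(h,\cdot)$ moves to $s_2$ under both columns, $(r,t)$ exits into $\winning$, and $(r,w)$ exits into a lottery worth $v^\star$; at $s_2$, $(r_2,\cdot)$ exits into a lottery worth $v^\star$ and $(h_2,\cdot)$ moves back to $s_1$. With $\upperBound=1$ on $\ecStates$ and $\upperBound=\valR$ elsewhere, the row of $h$ in $\matrixGameMatrix[\upperBound](s_1)$ is constantly $1$, so \emph{every} $\sigma$ is trapping --- in particular the Dirac on $t$ --- while $\badActs_{\upperBound}(\ecStates,s_1)=\{h\}$ and $\deflStrats_{\upperBound}(\ecStates,s_1)=\{r\}$; hence $\val(\exitingSG(s_1))=\max\bigl(0,\inf_{\sigma}\preop(\upperBound)(s_1,r,\sigma)\bigr)=0<v^\star$, the infimum being attained at $t$. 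Yet \enquote{always $w$} is an optimal safety strategy, and in its induced MDP the leaving action $r$ at $s_1$ is optimal, so your procedure may legitimately return $\hat{s}=s_1$, where the inequality you need is false. The lemma survives only because $\exitVal[\upperBound](\ecStates,s_2)=v^\star$: the witness must be chosen with respect to the $\valR$-exit values and propagated through the \EC, which is what the case analyses in the paper (Cases (ii.a.1) and (ii.b.1) of \cref{lem:exitPossibleUnderVr}, Case (III.a) of \cref{lem:eVUgeqeVV}) accomplish. A second, smaller debt: even where splitting works, your deflating response depends on the trapping strategy $\sigma$, so you obtain only the $\inf\sup$ bound, whereas $\val(\exitingSG(\hat{s}))$ is defined as $\sup\inf$; a uniformization or minimax argument over these (convex) strategy sets is still owed.
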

	\begin{proof}
	This proof relies on the technical \ifarxivelse{\cref{lem:exitPossibleUnderVr} }{{arxivversion}} stating that
	\begin{itemize}
		\item[(i)] $\ecStates<\prime> \coloneqq \{\state \in \ecStates \mid \valR(\state) \leq \exitVal[\valR](\ecStates, \state)\} \neq \emptyset$, and
		\item[(ii)] $\max_{\state \in \ecStates<\prime>}\valR(\state) \geq \max_{\state \in \ecStates \setminus \ecStates<\prime>} \valR(\state)$.
	\end{itemize} 
	Let $\ecStates<\prime> \subseteq \ecStates$ be a set of states satisfying Condition (i).
	Further, choose $e \in \argmax_{t \in \ecStates<\prime>}\valR(t)$ as one of the exits from $\ecStates<\prime>$.
	By Item (ii) of \ifarxivelse{\cref{lem:exitPossibleUnderVr} }{{arxivversion}}, we have that for all $s\in \ecStates$: 
	$\valR(s) \leq \valR(e)$.
	It remains to show $\bestExitVal[\upperBound](\ecStates) \geq \valR(e)$.
	We conclude as follows:
	\begin{align*}
		\valR(e) &\leq \exitVal[\valR](\ecStates,e) \tag{Since $e\in\ecStates<\prime>$}\\
		&=  \bestExitVal[\valR](\ecStates) \tag{By the choice of $e$}\\
		&\leq  \exitVal[\upperBound](\ecStates,e) \tag{By \ifarxivelse{\cref{lem:eVUgeqeVV} }{{arxivversion}}}\\
		&\leq \bestExitVal[\upperBound](\ecStates). \tag{By \cref{def:bestExitVal}}
	\end{align*}	
	We want to highlight that the statement of \ifarxivelse{\cref{lem:eVUgeqeVV} }{{arxivversion}} is non-trivial and the proof is technically involved.
\end{proof}

\subsection{Finding Maximal BECs}\label{sec:findingMBECs} Since a \BEC might contain other \BECs, we want to find \emph{maximal} \BECs. A \BEC $\ecStates$ is maximal if there exists no \BEC $\ecStates<\prime>$ such that $\ecStates \subsetneq \ecStates<\prime>$. The existence of maximal \BECs is proven in \ifarxivelse{\cref{apx:soundness_deflate}}{\cite[App. C-C]{arxivversion}}.

Given a \CSG $\game$, a \MEC $\endComponent$ and the current upper bound estimate $\upperBound$,
\Cref{alg:find_bec} finds all maximal \BECs within $\endComponent$ as follows: The set $B$ contains all states for which hazardous strategies exist with respect to the set $\endComponent$. We distinguish two base cases. First, if $B$ is empty, then no \BEC can exist in $\endComponent$. Second, if $B$ contains all states contained in $\endComponent$, then $B$ is a maximal \BEC. In both cases, the algorithm returns the set $B$. If neither case applies, $B$ might contain multiple disjoint \MECs, which in turn might be \BECs. Therefore, for each \MEC contained in $B$, \procname{FIND\_MBECs} is called recursively.

\begin{algorithm}[tb]\newtarget{alg:find_bec}
	\begin{algorithmic}[1]
		\Function{\procname{FIND\_MBECs}}{CSG $\game$, \MEC $\endComponent$,  upper bound estimate $\upperBound$}
		\State $B \coloneqq \{\state \in \endComponent \mid \badActs_{\upperBound}(\endComponent, \state) \neq \emptyset\}$ 
		\If{$B == \endComponent$ or $B == \emptyset$}\label{findBEC:baseCase} 
		\Return $\{B\}$ %
		\Else
		\State \Return $\{\endComponent<\prime> \in  \procname{FIND\_MECs}(\game, B) \mid \procname{FIND\_MBECs}(\game, \endComponent<\prime>, \upperBound)\}$\label{findMBECs:findallMaxMECS}
		\EndIf
		\EndFunction
	\end{algorithmic}
	\caption{Algorithm for finding maximal \BECs.}
	\label{alg:find_bec}
\end{algorithm}

\begin{restatable}[\procname{FIND\_MBECs} is correct--- Proof in
	\ifarxivelse{\cref{apx:soundness_deflate}}{\cite[App. C-C]{arxivversion}}] 
	{lemma}{findingMBECScorrect}\label{lem_findingMBECScorrect}
	For a \CSG, a \MEC $\endComponent$ and a valid upper bound $\upperBound$, for a non-empty set of states $\ecStates \subseteq \states$, it holds that 
	$\ecStates \in  \procname{FIND\_MBECs}(\game, \endComponent, \upperBound)$ if and only if (i) $\ecStates \subseteq \endComponent$, (ii) $\ecStates$ is a \BEC, and (iii) there exists no $T \subseteq \endComponent$ that is a \BEC and $\ecStates \subsetneq T$.
\end{restatable}

\begin{proof}[Proof sketch]
We prove the ``$\Rightarrow$" direction via structural induction. First, for the two basis cases, we show that if for a non-empty set $\ecStates \subseteq \endComponent$ it holds that $\ecStates \in \procname{FIND\_MBECs}(\game, \endComponent, \upperBound)$, then it has to be a maximal \BEC; otherwise $\ecStates$ would be the empty set. Next, in the recursive step, we show that within the set $B$ there must exist a \MEC $\endComponent<\prime> \subset B$ such that $\endComponent<\prime> = \ecStates$.

To prove the opposite direction, i.e., ``$\Leftarrow$", we assume towards a contradiction that $\ecStates \notin \procname{FIND\_MBECs}(\game, \endComponent, \upperBound)$. Since $\ecStates$ is a maximal \BEC, a non-empty set of states must exist where each state possesses at least one hazardous strategy (concerning $\endComponent$). However, as $\ecStates \notin \procname{FIND\_MBECs}(\game, \endComponent, \upperBound)$, there cannot exist a \MEC $\endComponent<\prime> \subseteq B$ where each state has a hazardous strategy with respect to $\endComponent<\prime>$. Consequently, $\ecStates$ is not an \MEC, which contradicts the assumption that $\ecStates$ is a maximal \BEC.
\end{proof}

\subsection{Convergent Bounded Value Iteration for CSGs and the Recursive Structure of ECs}\label{sec:convBVI} 

\begin{algorithm}[tb]\newtarget{alg:deflate_mecs}
	\begin{algorithmic}[1]
		\Function{\procname{DEFLATE}}{CSG $\game$, upper bound estimate $\upperBound$, \MEC $\endComponent$}
		\For{$\ecStates \in \newlink{alg:find_bec}{\procname{FIND\_MBECs}}(\game,  \endComponent,  \upperBound)$}\label{alg:whileBEC} 
		\State $\mathsf{u} \gets \bestExitVal[\upperBound](\ecStates)$ \label{alg:bestexit}
		\For{$\state \in \ecStates$}\label{alg:for_start}%
		\State $\upperBound(\state ) \gets \min( \upperBound(\state),\mathsf{u})$ \label{line:updateU}
		\EndFor\label{alg:for_end}
		\For{$\mathcal{E} \in \procname{FIND\_MECs}(\ecStates \setminus \bestExits[\upperBound](\ecStates))$}\label{alg:recurs_start}
			\State $\upperBound \gets$ \procname{DEFLATE}$(\game, \upperBound, \mathcal{E})$
			\Statex\Comment{Recursively deflate sub-\BECs}
		\EndFor\label{alg:recurs_end}
		\EndFor\label{alg:end_outer_for}
		\State \Return $\upperBound$
		\EndFunction
	\end{algorithmic}
	\caption{Algorithm for deflating \BECs.}
	\label{alg:deflate_mecs}
\end{algorithm}

Finally, \cref{alg:deflate_mecs} describes our main goal: the deflating procedure for \BECs, to be plugged into \cref{alg:bvi}. The algorithm takes a \CSG $\mathsf{G}$, the current upper bound estimate $\upperBound$, and a \MEC $\endComponent$ as input. 
First, the algorithm searches for all maximal \BECs that might be contained in the \MEC. 
The current upper bound estimate is returned if no \BEC exists in the \MEC. 
Otherwise, at least one \BEC exists and must be deflated. 
If multiple \BECs are found, each maximal \BEC is deflated separately (see the for-loop in Line \ref{alg:whileBEC}). 

To deflate a maximal \BEC $\ecStates$, first, the \emph{best} exit value $\bestExitVal[\upperBound](\ecStates)$ is estimated. Next, each state of the \BEC is considered and if the best exit value is smaller than the current upper bound estimate at that state, then it is reduced to the best exit value (as nothing better can be reached). 

However, notice that within a \BEC $\ecStates$ there might exist another \BEC $\ecStates<'>$.
From $\ecStates<'>$, Player $\reach$ might not be able to get to the best exit of $\ecStates$ (the globally best one in $\ecStates$) but only to a worse one, locally optimal for $\ecStates<'>$.
Hence, for such a sub-\BEC $\ecStates<'>$, more aggressive deflating to $\bestExitVal[\upperBound](\ecStates<'>)$ is due.
In other words, after deflating to $\bestExitVal[\upperBound](\ecStates)$ we have handled all states where $\reach$ can ensure reaching this best exit, and we can ignore these states for the moment;
we can also ignore this best exit and have a fresh look at which states are \emph{now} in a \BEC and can reach the \emph{second best} exit, i.e., the best exit in this remainder.
Subsequently, we continue with the third best option etc.

Consequently, on Lines \ref{alg:recurs_start}-\ref{alg:recurs_end}, \DEFLATEALG is called \emph{recursively} on all \MECs that are contained in $\ecStates \setminus \bestExits[\upperBound](\ecStates)$, i.e. after removing all best exits. 
The procedure is repeated independently for each maximal \BEC contained in $\endComponent$. A full example showing how \DEFLATEALG works on a more complex \BEC is included in \ifarxivelse{\cref{BVIfullExample}}{\cite[App. B-B]{arxivversion}}.

\begin{remark}[Structure of ECs]\label{rem:struct}
	This elucidates the hierarchical structure of \ECs and their corresponding best exits.
	The recursive call of \DEFLATEALG exposes the partial order over the \ECs, their sub-\ECs, and ``internally'' transient states (those not within sub-ECs after removing the best exit since they are bound to the just removed exit or another sub-\EC that is a \BEC with a lower value).
	This hierarchy captures (i) the ordering of exits by their values and (ii) the ``independence'' of exits with possibly different values when visiting one from another cannot be ensured.
\end{remark}

Our goal for the remainder of this work is to show that \cref{alg:bvi} with deflation is correct and converges, i.e.\ that complementing Bellman updates $\preop$ with deflation results in a sequence of upper bounds $\upperBound$ that converges to the value from above. 
For the sake of simplicity, we denote by $\newtarget{DefDeflop}{\deflop}:~[0,1]^{\mid \states\mid} \rightarrow [0,1]^{\mid \states\mid}$ the operator that performs \procname{DEFLATE} on a given valuation for all \MECs in the \CSG (which is reasonable since \cref{alg:bvi} performs \procname{DEFLATE} on all \MECs (in an arbitrary but fixed ordering)).

\begin{remark}[Valid upper bonds]
	In the following, whenever we quantify an upper bound (a.k.a.~over-approximation), we require it to be \emph{valid};
	meaning that it was obtained by iteratively applying deflating and the Bellman update on the initial over-approximation $\upperBound<0>$ from \cref{eq:upperBoundRec}, i.e. $\upperBound = (\deflop \circ \preop)^k(\upperBound<0>)$ for some $k \in \NN$. 
	The reason for this restriction is that for convergence, we require $\deflop$ to be order-preserving.
	While $\deflop$ is order-preserving on valid upper bounds (see \ifarxivelse{\cref{DBorderPreserving}}{\cite[\cref{DBorderPreserving}]{arxivversion}}), in general it is not monotonic for arbitrary $\upperBound \in \RR^{\mid \states \mid}$. 
	We illustrate this in \ifarxivelse{\cref{app:ex-defl-not-monotonic}}{\cite[App. B-C]{arxivversion}}.
\end{remark}

\begin{definition}[Valid upper bound]\label{def:validUpperBound}
	An upper bound $\upperBound \in [0,1]^{\mid \states \mid}$ is called \emph{valid} if there exists $k \in \NN$ such that $\upperBound = (\deflop \circ \preop)^k(\upperBound<0>)$, or if $\upperBound = \valR$.
\end{definition}

We proceed as follows: After proving fundamental properties of both operators $\preop$ and $\deflop$ in \cref{lem:fundamentals}, we use these properties to show that valid upper bounds are indeed upper bounds, i.e.\ they are always greater or equal than the value in \cref{lemValidDB}.
With correctness established, \cref{DBorderPreserving} shows that on valid upper bounds, $\deflBell$ is order-preserving, which is a necessary ingredient for convergence.
Finally, \cref{theo:soundness_completeness} concludes by proving soundness and completeness of the full algorithm.

\begin{restatable}[Properties of $\deflop$ and $\preop$]{lemma}{fundamentals}\label{lem:fundamentals}
	Let $\valuation\in [0,1]^{\mid \states \mid}$ be a valuation.
	If $\valuation \geq \valR$, then:
	\begin{enumerate}
		\item[(i)] $\preop(\valuation) \leq \valuation$ and $\deflop(\valuation) \leq \valuation$.
		\item[(ii)] $\preop(\valuation) \geq \valR$ and $\deflop(\valuation) \geq \valR$.
	\end{enumerate}
\end{restatable}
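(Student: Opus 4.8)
The plan is to treat the four claims separately, grouping them into the two easy ``$\geq \valR$'' statements of part~(ii), the trivial ``$\leq \valuation$'' statement for $\deflop$, and the genuinely delicate statement $\preop(\valuation) \leq \valuation$. The basic tool I would set up first is that $\preop$ is \emph{order-preserving}: if $\valuation \leq \valuation<\prime>$ then $\preop(\valuation) \leq \preop(\valuation<\prime>)$, because for each fixed pair of local strategies $(\strategy,\strategy*)$ the quantity $\preop(\valuation)(\state,\strategy,\strategy*)$ is a non-negative linear combination of the entries of $\valuation$, and applying $\sup_{\strategy}\inf_{\strategy*}$ preserves the pointwise order. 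I would also recall that, by the convention fixed in \cref{def:validUpperBound}, every over-approximation we quantify over is \emph{valid}, i.e.\ $\valuation = (\deflop \circ \preop)^k(\upperBound<0>)$ for some $k$ (or $\valuation = \valR$); this is what makes $\preop(\valuation) \leq \valuation$ true at all, and it is the hypothesis I will lean on there.

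\textbf{Part (ii).} For $\preop$ the argument is short: $\valR$ is a fixpoint of the Bellman operator, $\preop(\valR) = \valR$ \cite[Thm.~1]{Alfaro2004}, so $\valuation \geq \valR$ and order-preservation give $\preop(\valuation) \geq \preop(\valR) = \valR$. For $\deflop$ I would argue by induction on the recursion depth of \procname{DEFLATE} (\cref{alg:deflate_mecs}). At the top level \procname{DEFLATE} modifies only states $\state$ of some maximal \BEC $\ecStates$ returned by \procname{FIND\_MBECs}, replacing $\valuation(\state)$ by $\min(\valuation(\state), \bestExitVal[\valuation](\ecStates))$ (\cref{line:updateU}). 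Since each such $\ecStates$ is an \EC in $\states \setminus (\success \cup \winning)$ and $\valuation \geq \valR$, \cref{lem:bestExitValSound} gives $\bestExitVal[\valuation](\ecStates) \geq \valR(\state)$ for all $\state \in \ecStates$; combined with $\valuation(\state) \geq \valR(\state)$ this yields $\min(\valuation(\state), \bestExitVal[\valuation](\ecStates)) \geq \valR(\state)$. States outside every \BEC are untouched and stay $\geq \valR$, and the recursive calls act on sub-\ECs with a valuation that is still $\geq \valR$ by the induction hypothesis, so the same estimate applies and we conclude $\deflop(\valuation) \geq \valR$.

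\textbf{Part (i).} The inequality $\deflop(\valuation) \leq \valuation$ is immediate, since every update in \procname{DEFLATE} has the form $\valuation(\state) \gets \min(\valuation(\state),\cdot)$ and so never increases an entry. For $\preop(\valuation) \leq \valuation$ I would induct on $k$ in $\valuation = (\deflop \circ \preop)^k(\upperBound<0>)$ (the case $\valuation = \valR$ being trivial). For $k=0$: on every $\state \notin \winning$ we have $\upperBound<0>(\state) = 1 \geq \preop(\upperBound<0>)(\state)$, and on the absorbing state of $\winning$ both sides are $0$. For the step, write $\valuation = (\deflop \circ \preop)(\valuation<\prime>)$ with $\valuation<\prime>$ valid; the induction hypothesis $\preop(\valuation<\prime>) \leq \valuation<\prime>$ together with $\deflop$ being decreasing gives $\valuation \leq \preop(\valuation<\prime>) \leq \valuation<\prime>$, hence $\preop(\valuation) \leq \preop(\valuation<\prime>)$ by order-preservation. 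For a state $\state$ \emph{not} modified by the final $\deflop$ we then have $\valuation(\state) = \preop(\valuation<\prime>)(\state) \geq \preop(\valuation)(\state)$, as desired.

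\textbf{Main obstacle.} The hard case will be states lying inside a deflated \BEC: there $\valuation(\state)$ has been lowered to (at most) a best exit value, and monotonicity alone relates $\preop(\valuation)(\state)$ only to $\preop(\valuation<\prime>)(\state)$, which is too large. To close this I would use the structure of deflation directly: since all states of the \BEC are capped at the best exit value, every staying (hazardous) strategy of Player $\reach$ evaluates under $\valuation$ to at most that value, while by the exiting sub-game construction (\cref{def:exitingSG,def:bestExitVal}) no leaving strategy beats it against Player~$\safe$'s trapping strategies; hence the matrix-game value $\preop(\valuation)(\state)$ does not exceed the capped value $\valuation(\state)$. This is precisely where the best-exit machinery of \cref{sec:findingBestExit} is indispensable, and I expect it to carry the bulk of the technical effort.
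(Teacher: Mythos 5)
Your handling of part (ii) and of $\deflop(\valuation)\leq\valuation$ matches the paper's proof exactly: monotonicity of $\preop$ plus $\preop(\valR)=\valR$ for the Bellman operator, and the $\min$-shape of the update in \cref{alg:deflate_mecs} together with \cref{lem:bestExitValSound} for $\deflop$. The genuine divergence is on $\preop(\valuation)\leq\valuation$. The paper disposes of this in one sentence from the fact that $\valR$ is the \emph{least} fixpoint of $\preop$: \enquote{given a valuation greater than the value, it cannot increase.} You instead decline to prove it for arbitrary $\valuation\geq\valR$ and restrict to valid upper bounds (\cref{def:validUpperBound}), arguing by induction on $k$ in $\valuation=\deflBell^k(\upperBound<0>)$. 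Your caution is warranted, and your version is the one that survives scrutiny: lying above the least fixpoint of a monotone operator does not make a valuation a pre-fixpoint (Park induction gives the converse implication, not this one). Concretely, in an MDP where $s$ moves deterministically to $s'$ and $s'$ reaches the target with probability $\tfrac12$, the valuation with $\valuation(s)=\tfrac12=\valR(s)$ and $\valuation(s')=1>\valR(s')$ satisfies $\valuation\geq\valR$, yet $\preop(\valuation)(s)=1>\valuation(s)$. Since the paper only ever applies item (i) to valid upper bounds (\cref{lemValidDB}, \cref{DBorderPreserving}), your induction covers every actual use; in that sense your proof is not merely different but a repair of the paper's one-line argument.

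One step of your sketch is still open, however. For a state inside a freshly deflated \BEC you claim that \enquote{by the exiting sub-game construction no leaving strategy beats the cap against Player $\safe$'s trapping strategies.} That is the right intermediate statement, but it does not follow directly from \cref{def:exitingSG}: the value of the exiting sub-game quantifies only over \emph{deflating} strategies (\cref{def:deflStrats}), i.e.\ leaving strategies whose support avoids all hazardous supports, whereas a leaving strategy may mix hazardous-support actions with other actions and is then not covered by that sup-inf. To close the gap, evaluate such a $\rho$ against a trapping $\sigma$: every hazardous-support action stays inside the (capped) \BEC against $\sigma$ and hence contributes at most the cap, while the renormalized remainder of $\rho$ has support disjoint from all hazardous strategies and leaves against $\sigma$, so it is itself deflating and contributes at most the exit value; convexity then bounds $\preop(\valuation)(\state)$ by the cap. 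This decomposition, applied at each level of the recursion in \cref{alg:deflate_mecs}, is what actually finishes your induction step; as written, the inequality at deflated states is asserted rather than derived.
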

\begin{proof}
	For the Bellman operator, both properties follow from the fact that the value $\valR$ is the least fixpoint of the Bellman operator, see \cite[Thm.~1]{Alfaro2004}. Thus, given a valuation greater than the value, it cannot increase, and it cannot become smaller than the value.
	
	For $\deflop$, observe that deflation only updates the valuation in Line~\ref{line:updateU}  of \cref{alg:deflate_mecs} when setting the valuation of a state to $\min(\upperBound(s),\bestExitVal[\upperBound](\ecStates))$ for some \EC $\ecStates$.
	Item (i) holds because of taking the minimum with the current valuation, so it can only decrease.
	For (ii), we show in \ifarxivelse{\cref{lem:bestExitValSound} }{\cite[\cref{lem:bestExitValSound}]{arxivversion}} that for every \EC $\ecStates$ and state $s\in\ecStates$, we have that $\bestExitVal[\upperBound](\ecStates) \geq \valR(s)$.
	This proves our goal, as the only update of \procname{DEFLATE} keeps the valuation greater than $\valR$ in every state.
	The proof of \ifarxivelse{\cref{lem:bestExitValSound} }{\cite[\cref{lem:bestExitValSound}]{arxivversion}} is technically involved, having to unfold many definitions to show the following intuitive fact: No state can have a larger value than that of some exit from the \EC it is contained in.
\end{proof}

\begin{lemma}[Soundness of valid upper bounds]\label{lemValidDB}
	For all $k \in \NN$ it holds that $\deflBell^k(\upperBound<0>) \geq \valR$.
\end{lemma}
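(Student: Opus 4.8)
The plan is to prove the statement by induction on $k$, using \cref{lem:fundamentals} as the only non-trivial ingredient. The invariant to maintain is exactly the claim: every valuation of the form $\deflBell^k(\upperBound<0>)$ stays above $\valR$. This works because both $\preop$ and $\deflop$ preserve the property of being an over-approximation of the value, which is precisely what \cref{lem:fundamentals}(ii) records.

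For the base case $k=0$, I would observe that $\deflBell^0(\upperBound<0>) = \upperBound<0>$ and verify $\upperBound<0> \geq \valR$ directly from the definition in \cref{eq:upperBoundRec}: for $\state \in \winning$ we have $\upperBound<0>(\state) = 0 = \valR(\state)$ by the definition of the winning region, and for $\state \notin \winning$ we have $\upperBound<0>(\state) = 1 \geq \valR(\state)$ since all values lie in $[0,1]$. For the inductive step, I would set $\valuation := \deflBell^k(\upperBound<0>)$ and assume $\valuation \geq \valR$. Then \cref{lem:fundamentals}(ii) applied to $\valuation$ gives $\preop(\valuation) \geq \valR$; applying the same item once more, now to the valuation $\preop(\valuation)$, which is itself $\geq \valR$, yields $\deflop(\preop(\valuation)) \geq \valR$. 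Since $\deflBell^{k+1}(\upperBound<0>) = \deflop(\preop(\valuation))$, this closes the induction.

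The step I expect to carry essentially all the difficulty is not in this lemma but in \cref{lem:fundamentals}(ii), which this argument treats as a black box; in particular its $\deflop$-part rests on the soundness of the best exit value (\cref{lem:bestExitValSound}), whose proof is the technically involved part. Granting those results, the present lemma is a clean induction, and the only subtlety worth stating explicitly is that \cref{lem:fundamentals}(ii) must be invoked twice in sequence — once to push the Bellman update past $\valR$, and once more for the deflation — which is legitimate precisely because the intermediate valuation $\preop(\valuation)$ again satisfies the hypothesis $\,\cdot\, \geq \valR$.
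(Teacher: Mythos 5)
Your proof is correct and follows essentially the same route as the paper's: induction on $k$, with the base case from the initialization of $\upperBound<0>$ and the inductive step obtained by applying item (ii) of \cref{lem:fundamentals} twice, first for $\preop$ and then for $\deflop$. Your explicit remark that the second application is justified because the intermediate valuation $\preop(\valuation)$ again satisfies the hypothesis is exactly the point the paper's proof relies on implicitly.
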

\begin{proof}
		We proceed by induction over $k$.
		\begin{description}
			\item[Base case:] $k=0$, thus $\deflBell^0(\upperBound<0>) = \upperBound<0> \geq \valR$.
			\item[Induction hypothesis:] For all $k\geq0$, we assume that\\ $\deflBell^k(\upperBound<0>) \geq \valR$.
			\item[Induction step:] To show: $\deflBell^{k+1}(\upperBound<0>) \geq \valR$.
			We know by induction hypothesis that $\deflBell^k(\upperBound<0>) \geq \valR$.
			Applying (ii) of \ifarxivelse{\cref{lem:fundamentals}  }{\cite[\cref{lem:fundamentals} ]{arxivversion}} for $\preop$, we obtain $\preop(\deflBell^k(\upperBound<0>)) \geq \valR$.
			From this and (ii) for $\deflop$, we get $\deflop(\preop(\deflBell^k(\upperBound<0>))) \geq \valR$, proving our goal.
		\end{description}
\end{proof}

\begin{restatable}[$\deflBell$ is order-preserving on valid upper bounds%
]{lemma}{DBorderPre}\label{DBorderPreserving}
	Let $\valuation[1], \valuation[2]$ be valid upper bounds with $\valuation[1] \geq \valuation[2]$.
	It holds that $\deflBell(\valuation[1]) \geq \deflBell(\valuation[2])$.
\end{restatable}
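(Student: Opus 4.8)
The plan is to exploit the factorization $\deflBell = \deflop \circ \preop$ and to argue that the order is preserved by each operator in turn, since a composition of order-preserving maps is order-preserving. First I would record that the Bellman operator $\preop$ is order-preserving on all of $[0,1]^{\mid\states\mid}$: for fixed local strategies $\strategy,\strategy*$ the quantity $\preop(\valuation)(\state,\strategy,\strategy*)$ is a convex combination of the entries $\valuation(\state<\prime>)$ with nonnegative coefficients, hence nondecreasing in $\valuation$, and taking $\sup_{\strategy}\inf_{\strategy*}$ preserves the pointwise order. Thus from $\valuation[1]\geq\valuation[2]$ we get $\preop(\valuation[1])\geq\preop(\valuation[2])$; moreover, both $\valuation[1],\valuation[2]$ are valid and hence $\geq\valR$ (\cref{lemValidDB}), so by item~(ii) of \cref{lem:fundamentals} the intermediate valuations $w_i\coloneqq\preop(\valuation[i])$ satisfy $w_1\geq w_2\geq\valR$. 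The task therefore reduces to showing that $\deflop$ preserves the order on the pair $w_1\geq w_2\geq\valR$ produced after one Bellman step; note this is exactly the domain on which $\deflop$ is claimed to behave well, the $\geq\valR$ condition being what rules out the pathological non-monotone inputs.

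For the $\deflop$ part I would first localize. Distinct \MECs are state-disjoint, \procname{DEFLATE} only ever lowers the valuation of states of the \MEC it is invoked on (Line~\ref{line:updateU}), and states lying in no \MEC are left untouched; hence it suffices to compare, for a single fixed \MEC $\endComponent$, the runs of \procname{DEFLATE}$(\game,w_1,\endComponent)$ and \procname{DEFLATE}$(\game,w_2,\endComponent)$ and to show the results are pointwise ordered on $\endComponent$. I would prove this by induction on the recursion depth of \procname{DEFLATE} (which terminates because the state set strictly shrinks: at least the best-exit states are removed before each recursive call on Lines~\ref{alg:recurs_start}--\ref{alg:recurs_end}). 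The inductive step hinges on a monotonicity statement for the best exit: whenever $\ecStates$ is a \BEC under both valuations, $\bestExitVal[w_1](\ecStates)\geq\bestExitVal[w_2](\ecStates)$. Granting this, the deflation of a single \BEC replaces $w_i(\state)$ by $\min(w_i(\state),\bestExitVal[w_i](\ecStates))$, and the minimum of two ordered quantities is ordered; the recursive deflation of the sub-\MECs of $\ecStates\setminus\bestExits$ is then discharged by the induction hypothesis.

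The main obstacle is that the whole \BEC hierarchy is valuation-dependent, so the two runs need not align: the maximal \BECs returned by \procname{FIND\_MBECs}, the best-exit set $\bestExits$, and consequently the \MECs handed to the recursive calls can genuinely differ between $w_1$ and $w_2$. The clean induction above cannot, therefore, assume a common decomposition, and reconciling the two mismatched recursion trees is where the real difficulty lies. The heart of the matter is the best-exit monotonicity claim, which I would attack at the level of the exiting sub-games: for fixed strategy sets the value $\val(\exitingSG(\state)) = \max(0,\sup_{\strategy\in\deflStrats}\inf_{\strategy*\in\counter}\preop(\valuation)(\state,\strategy,\strategy*))$ is monotone in $\valuation$ by the same convex-combination argument as for $\preop$, but the sets of deflating strategies $\deflStrats_{\valuation}$, of trapping strategies $\counter_{\valuation}$, and of underlying hazardous strategies $\badActs_{\valuation}(\ecStates,\state)$ all shift as $\valuation$ changes. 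I expect to need a structural comparison of these strategy sets across $w_1\geq w_2$, generalizing the technically involved exit-value comparison already used to prove \cref{lem:bestExitValSound} (which treats only the special case of comparing against $\valR$). Transporting that comparison from \enquote{against $\valR$} to \enquote{between two arbitrary valuations $\geq\valR$}, and checking that the pointwise inequality survives despite the mismatched decompositions, is the crux on which the proof stands or falls.
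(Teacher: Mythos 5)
There is a genuine gap, and it sits exactly where you place it: your argument reduces the lemma to the claim that $\deflop$ (equivalently, the best-exit values) is monotone on pairs $w_1 \geq w_2 \geq \valR$, and that claim is false. The paper's own counterexample (\cref{app:ex-defl-not-monotonic}, \cref{fig:non_monotonic}) exhibits precisely this failure mode: for two pointwise-ordered valuations the hazardous strategies, and with them the whole \BEC structure, can differ, so that the larger valuation gets deflated to a small exit value while under the smaller valuation the same \EC is no longer a \BEC (or has a larger exit value) and is left untouched --- ending up \emph{above} the deflated image of the larger valuation. The restriction to valuations $\geq \valR$ does not exclude these pathological inputs, so your reading that this \enquote{is exactly the domain on which $\deflop$ is claimed to behave well} misidentifies what \enquote{valid} means: validity (\cref{def:validUpperBound}) is not the inequality $\upperBound \geq \valR$, but membership in the single sequence $\deflBell^k(\upperBound<0>)$, $k\in\NN$ (or being $\valR$ itself). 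Prepending one Bellman update does not restore monotonicity either. In short, the \enquote{crux} you isolate --- reconciling the two mismatched \BEC recursion trees for arbitrary ordered over-approximations --- cannot be discharged, because the statement it would establish is not true.

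The paper's proof avoids all internal analysis of \procname{DEFLATE} by exploiting that valid upper bounds form one descending chain. By \cref{lemValidDB} every valid upper bound is $\geq \valR$, and by item (i) of \cref{lem:fundamentals} applying $\preop$ and $\deflop$ to such a valuation can only decrease it, so $k \mapsto \deflBell^k(\upperBound<0>)$ is weakly decreasing. Hence if $\valuation[1] \geq \valuation[2]$ are valid and distinct, say $\valuation[1] = \deflBell^i(\upperBound<0>)$ and $\valuation[2] = \deflBell^j(\upperBound<0>)$, then necessarily $i < j$, and therefore
\begin{align*}
\deflBell(\valuation[1]) = \deflBell^{i+1}(\upperBound<0>) \geq \deflBell^{j+1}(\upperBound<0>) = \deflBell(\valuation[2]),
\end{align*}
with the corner cases $\valuation[1] = \valuation[2]$ and $\valuation[2] = \valR$ (hence possibly $\valuation[1] = \valR$) being immediate. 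The lemma is thus a statement about a totally ordered chain, not a genuine monotonicity property of the operator; recognizing that is the missing idea in your proposal.
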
	
\begin{proof}
	We know by \cref{lemValidDB} that all valid upper bounds are greater or equal to the value (including the value itself).
	Thus, if $\valuation[2] = \valR$, the statement holds, and if $\valuation[1]=\valR$, then $\valuation[2]=\valR$ as well, since $\valuation[1]\geq \valuation[2]$.
	It remains to show that both valuations come from repeated application of the deflation and Bellman operators, i.e.\ $\valuation[1] = \deflBell^i(\upperBound<0>)$ and $\valuation[2] = \deflBell^j(\upperBound<0>)$.
	We assume $\valuation[1] \neq \valuation[2]$, since otherwise the claim trivially holds.
	
	By (i) of \cref{lem:fundamentals}, every application of the operators can only decrease the resulting value; the item remains applicable, since the resulting valuations are always greater than or equal to the value.
	Consequently, $i < j$, as $\valuation[1] > \valuation[2]$. 
	Using this and applying (i) of \ifarxivelse{\cref{lem:fundamentals} }{\cite[\cref{lem:fundamentals}]{arxivversion}} again, we conclude by stating
	\begin{align*}
		\deflBell(\valuation[1]) = &\deflBell^{i+1}(\upperBound<0>) \geq\\ &\deflBell^{j+1}(\upperBound<0>) = \deflBell(\valuation[2]).
	\end{align*}
\end{proof}

\begin{restatable}[Soundness and completeness - Proof in
	\ifarxivelse{\cref{apx:soundness_and_completeness}}{\cite[App. C-D]{arxivversion}}]{theorem}{soundnessCompleteness}\label{theo:soundness_completeness}
For \CSGs \cref{alg:bvi}, using \cref{alg:deflate_mecs} as \DEFLATEALG, produces monotonic sequences $\lowerBound$ under- and $\upperBound$ over-approximating $\valR$, and terminates for every $\varepsilon > 0$.
\end{restatable}

\begin{proof}[Proof sketch]
Soundness and convergence of lower bounds is classical~\cite[Thm.~1]{Alfaro2004}, and our algorithm does not modify the computation of under-approximations.
The soundness of the upper bounds is immediate from \cref{lemValidDB}, since all upper bounds computed by the algorithm are valid, and thus greater or equal than the value.
Proving the convergence of the upper bounds is the main challenge.
First, in \ifarxivelse{\cref{lemma:convergence_to_fixpoint} }{\cite[Lem. 52]{arxivversion}} we provide the proof that the upper bound in \cref{alg:bvi} indeed converges to a fixpoint, using that the operators are order-preserving (\cref{DBorderPreserving}) and arguments from lattice theory.
Then, we use the same idea we have utilized in the proofs of \cref{theo:BVInoEC,thm:nonConvImpliesBEC}:
We assume for contradiction that there exists a state where the difference between the fixpoint of the upper bound in \cref{alg:bvi} and the true value is strictly greater than zero. The states with the largest such difference contain a \BEC (by \cref{thm:nonConvImpliesBEC}), that eventually will be found and deflated; since deflation depends on the outside of the \BEC, this decreases the upper bound. 
This is a contradiction to the fact that the upper bound has converged to a fixpoint.
Consequently, there can be no state with a positive difference, and the upper bounds converge, too. 
\end{proof}

\section{Conclusion and Future Work}
\label{sec:conclusion}
We have introduced a convergent over-approximation for concurrent stochastic games with reachability and safety objectives, thus giving value iteration the first sound stopping criterion and turning it into an anytime algorithm. Since the games are concurrent and ($\varepsilon$-)optimal strategies may need to be randomized, we could not use the technique of simple end components of \cite{eisentrautValueIterationSimple2022}. Instead, we identify bloated end components where the play can get stuck forever and recursively deflate the over-approximations of these states to the best possible value attainable upon leaving the end component. We leave an efficient implementation for future work as an extension 
of the standard model checker \textsc{prism-games}~\cite{Kwiatkowska2018}.

\bibliographystyle{plainurl}
\addcontentsline{toc}{chapter}{Bibliography}
\bibliography{MyLibrary}

\begin{thebibliography}{10}

\bibitem{ashokValueIterationLongRun2017a}
Pranav Ashok, Krishnendu Chatterjee, Przemys{\l}aw Daca, Jan K{\v
  r}et{\'i}nsk{\'y}, and Tobias Meggendorfer.
\newblock Value {{Iteration}} for {{Long-Run Average Reward}} in {{Markov
  Decision Processes}}.
\newblock In Rupak Majumdar and Viktor Kun{\v c}ak, editors, {\em Computer
  {{Aided Verification}}}. Springer International Publishing, 2017.

\bibitem{baierPrinciplesModelChecking2008a}
Christel Baier and Joost-Pieter Katoen.
\newblock {\em Principles of {{Model Checking}}}.
\newblock {MIT Press}, April 2008.

\bibitem{baierEnsuringReliabilityYour2017}
Christel Baier, Joachim Klein, Linda Leuschner, David Parker, and Sascha
  Wunderlich.
\newblock Ensuring the {{Reliability}} of {{Your Model Checker}}: {{Interval
  Iteration}} for {{Markov Decision Processes}}.
\newblock In Rupak Majumdar and Viktor Kun{\v c}ak, editors, {\em Computer
  {{Aided Verification}}}. Springer International Publishing, 2017.

\bibitem{bordaisSubgameOptimalStrategies2023}
Benjamin Bordais, Patricia Bouyer, and St{\'e}phane~Le Roux.
\newblock Subgame {{Optimal Strategies}} in {{Finite Concurrent Games}} with
  {{Prefix-Independent Objectives}}.
\newblock In Orna Kupferman and Pawel Sobocinski, editors, {\em Foundations of
  {{Software Science}} and {{Computation Structures}}}, pages 541--560.
  Springer Nature Switzerland, 2023.

\bibitem{Brazdil2014}
Tom{\'a}{\v{s}} Br{\'a}zdil, Krishnendu Chatterjee, Martin Chmelik,
  Vojt{\v{e}}ch Forejt, Jan K{\v{r}}et{\'\i}nsk{\`y}, Marta Kwiatkowska, David
  Parker, and Mateusz Ujma.
\newblock Verification of {M}arkov decision processes using learning
  algorithms.
\newblock In {\em International Symposium on Automated Technology for
  Verification and Analysis}, pages 98--114. Springer, 2014.

\bibitem{brazdilVerificationMarkovDecision2014}
Tom{\'a}{\v s} Br{\'a}zdil, Krishnendu Chatterjee, Martin Chmel{\'i}k, Vojt{\v
  e}ch Forejt, Jan K{\v r}et{\'i}nsk{\'y}, Marta Kwiatkowska, David Parker, and
  Mateusz Ujma.
\newblock Verification of {{Markov Decision Processes Using Learning
  Algorithms}}.
\newblock In Franck Cassez and Jean-Fran{\c c}ois Raskin, editors, {\em
  Automated {{Technology}} for {{Verification}} and {{Analysis}}}. Springer
  International Publishing, 2014.

\bibitem{Chatterjee2009}
Krishnendu Chatterjee, Luca~de Alfaro, and Thomas~A Henzinger.
\newblock Termination criteria for solving concurrent safety and reachability
  games.
\newblock In {\em Proceedings of the twentieth annual ACM-SIAM symposium on
  Discrete algorithms}, pages 197--206. SIAM, 2009.

\bibitem{chatterjeeQualitativeConcurrentParity2011}
Krishnendu Chatterjee, Luca~De Alfaro, and Thomas~A. Henzinger.
\newblock Qualitative concurrent parity games.
\newblock {\em ACM Transactions on Computational Logic}, July 2011.

\bibitem{Chatterjee2012}
Krishnendu Chatterjee, Luca de~Alfaro, and Thomas~A Henzinger.
\newblock Strategy improvement for concurrent reachability and safety games.
\newblock {\em arXiv preprint arXiv:1201.2834}, 2012.

\bibitem{Chatterjee2013}
Krishnendu Chatterjee, Luca de~Alfaro, and Thomas~A Henzinger.
\newblock Strategy improvement for concurrent reachability and turn-based
  stochastic safety games.
\newblock {\em Journal of computer and system sciences}, 79(5):640--657, 2013.

\bibitem{chatterjeeValueIteration2008a}
Krishnendu Chatterjee and Thomas~A. Henzinger.
\newblock Value {{Iteration}}.
\newblock In Orna Grumberg and Helmut Veith, editors, {\em 25 {{Years}} of
  {{Model Checking}}: {{History}}, {{Achievements}}, {{Perspectives}}}.
  Springer, 2008.

\bibitem{courcoubetisComplexityProbabilisticVerification1995}
Costas Courcoubetis and Mihalis Yannakakis.
\newblock The complexity of probabilistic verification.
\newblock {\em Journal of the ACM}, 42, July 1995.

\bibitem{daveyIntroductionLatticesOrder2002}
B.~A. Davey and H.~A. Priestley.
\newblock {\em Introduction to {{Lattices}} and {{Order}}}.
\newblock {Cambridge University Press}, {Cambridge}, 2 edition, 2002.

\bibitem{alfarothesis}
Luca De~Alfaro.
\newblock How to specify and verify the long-run average behaviour of
  probabilistic systems.
\newblock In {\em Proceedings. Thirteenth Annual IEEE Symposium on Logic in
  Computer Science (Cat. No. 98CB36226)}, pages 454--465. IEEE, 1998.

\bibitem{Alfaro2000}
Luca de~Alfaro and Thomas~A Henzinger.
\newblock Concurrent omega-regular games.
\newblock In {\em Logic in Computer Science, 2000. Proceedings. 15th Annual
  IEEE Symposium on}, pages 141--154. IEEE, 2000.

\bibitem{Alfaro2007}
Luca De~Alfaro, Thomas~A Henzinger, and Orna Kupferman.
\newblock Concurrent reachability games.
\newblock {\em Theoretical Computer Science}, 2007.

\bibitem{Alfaro2004}
Luca de~Alfaro and Rupak Majumdar.
\newblock Quantitative solution of omega-regular games.
\newblock {\em Journal of Computer and System Sciences}, 68(2):374 -- 397,
  2004.
\newblock Special Issue on STOC 2001.

\bibitem{eisentrautValueIterationSimple2022}
Julia Eisentraut, Edon Kelmendi, Jan K{\v r}et{\'i}nsk{\'y}, and Maximilian
  Weininger.
\newblock Value iteration for simple stochastic games: {{Stopping}} criterion
  and learning algorithm.
\newblock {\em Information and Computation}, 2022.

\bibitem{eisentrautStoppingCriteriaValue2019}
Julia Eisentraut, Jan K{\v r}et{\'i}nsk{\'y}, and Alexej Rotar.
\newblock Stopping {{Criteria}} for {{Value}} and {{Strategy Iteration}} on
  {{Concurrent Stochastic Reachability Games}}, 2019.

\bibitem{etessamiRecursiveConcurrentStochastic2008}
Kousha Etessami and Mihalis Yannakakis.
\newblock Recursive {{Concurrent Stochastic Games}}.
\newblock {\em Logical Methods in Computer Science}, Volume 4, Issue 4,
  November 2008.

\bibitem{Everett1957}
H.~Everett.
\newblock {\em RECURSIVE GAMES}, pages 47--78.
\newblock Princeton University Press, 1957.

\bibitem{frederiksenApproximatingValueConcurrent2013}
S{\o}ren Kristoffer~Stiil Frederiksen and Peter~Bro Miltersen.
\newblock Approximating the {{Value}} of a {{Concurrent Reachability Game}} in
  the {{Polynomial Time Hierarchy}}.
\newblock Springer, 2013.

\bibitem{Haddad2018}
Serge Haddad and Benjamin Monmege.
\newblock Interval iteration algorithm for mdps and imdps.
\newblock {\em Theoretical Computer Science}, 735:111--131, 2018.

\bibitem{haddadIntervalIterationAlgorithm2018}
Serge Haddad and Benjamin Monmege.
\newblock Interval {{Iteration Algorithm}} for {{MDPs}} and {{IMDPs}}.
\newblock {\em Theoretical Computer Science}, 2018.

\bibitem{hansenComplexitySolvingReachability2014}
Kristoffer~Arnsfelt Hansen, Rasmus {Ibsen-Jensen}, and Peter~Bro Miltersen.
\newblock The {{Complexity}} of {{Solving Reachability Games Using Value}} and
  {{Strategy Iteration}}.
\newblock {\em Theory of Computing Systems}, 55, 2014.

\bibitem{hansenExactAlgorithmsSolving2011}
Kristoffer~Arnsfelt Hansen, Michal Koucky, Niels Lauritzen, Peter~Bro
  Miltersen, and Elias~P. Tsigaridas.
\newblock Exact algorithms for solving stochastic games: Extended abstract.
\newblock In {\em Proceedings of the Forty-Third Annual {{ACM}} Symposium on
  {{Theory}} of Computing}, 2011.

\bibitem{hartmannsPractitionerGuideMDP2023}
Arnd Hartmanns, Sebastian Junges, Tim Quatmann, and Maximilian Weininger.
\newblock A {{Practitioner}}'s {{Guide}} to {{MDP Model Checking Algorithms}}.
\newblock In Sriram Sankaranarayanan and Natasha Sharygina, editors, {\em Tools
  and {{Algorithms}} for the {{Construction}} and {{Analysis}} of {{Systems}}}.
  Springer Nature Switzerland, 2023.

\bibitem{hillierIntroductionOperationsResearch2010}
Frederick~S. Hillier and Gerald~J. Lieberman.
\newblock {\em Introduction to {{Operations Research}}}.
\newblock {McGraw-Hill Higher Education}, 2010.

\bibitem{kretinskyStoppingCriteriaValue2023a}
Jan K{\v r}et{\'i}nsk{\'y}, Tobias Meggendorfer, and Maximilian Weininger.
\newblock Stopping {{Criteria}} for {{Value Iteration}} on {{Stochastic Games}}
  with {{Quantitative Objectives}}.
\newblock In {\em 2023 38th {{Annual ACM}}/{{IEEE Symposium}} on {{Logic}} in
  {{Computer Science}} ({{LICS}})}, 2023.

\bibitem{kretinskyComparisonAlgorithmsSimple2022}
Jan K{\v r}et{\'i}nsk{\'y}, Emanuel Ramneantu, Alexander Slivinskiy, and
  Maximilian Weininger.
\newblock Comparison of algorithms for simple stochastic games.
\newblock {\em Information and Computation}, 2022.

\bibitem{kumarExistenceValueRandomized1981a}
P.~R. Kumar and T.~H. Shiau.
\newblock Existence of {{Value}} and {{Randomized Strategies}} in {{Zero-Sum
  Discrete-Time Stochastic Dynamic Games}}.
\newblock {\em SIAM Journal on Control and Optimization}, 19, 1981.

\bibitem{Kwiatkowska2018}
Marta Kwiatkowska, Gethin Norman, David Parker, and Gabriel Santos.
\newblock Automated verification of concurrent stochastic games.
\newblock In {\em International Conference on Quantitative Evaluation of
  Systems}, pages 223--239. Springer, 2018.

\bibitem{kwiatkowskaPRISMgamesStochasticGame2020d}
Marta Kwiatkowska, Gethin Norman, David Parker, and Gabriel Santos.
\newblock {{PRISM-games}} 3.0: {{Stochastic Game Verification}} with
  {{Concurrency}}, {{Equilibria}} and {{Time}}.
\newblock In Shuvendu~K. Lahiri and Chao Wang, editors, {\em Computer {{Aided
  Verification}}}. Springer International Publishing, 2020.

\bibitem{martinDeterminacyBlackwellGames1998}
Donald~A. Martin.
\newblock The determinacy of blackwell games.
\newblock 1998.

\bibitem{maschlerGameTheory2020}
Michael Maschler, Shmuel Zamir, and Eilon Solan.
\newblock {\em Game {{Theory}}}.
\newblock {Cambridge University Press}, June 2020.

\bibitem{nashNonCooperativeGames1951a}
John Nash.
\newblock Non-{{Cooperative Games}}.
\newblock 1951.

\bibitem{nashEquilibriumPointsNperson1950a}
John~F. Nash.
\newblock Equilibrium points in n-person games.
\newblock 1950.

\bibitem{oliu-bartonNewAlgorithmsSolving2019}
Miquel Oliu-Barton.
\newblock New algorithms for solving zero-sum stochastic games.
\newblock {\em Mathematics of Operations Research}, 46(1):255--267, 2021.

\bibitem{Parthasarathy1973}
T.~Parthasarathy.
\newblock Discounted, positive, and noncooperative stochastic games.
\newblock {\em International Journal of Game Theory}, 2(1):25--37, Dec 1973.

\bibitem{DBLP:conf/mkm/PassmoreJ09}
Grant~Olney Passmore and Paul~B. Jackson.
\newblock Combined decision techniques for the existential theory of the reals.
\newblock In {\em Calculemus/MKM}, volume 5625 of {\em Lecture Notes in
  Computer Science}, pages 122--137. Springer, 2009.

\bibitem{putermanMarkovDecisionProcesses2009}
Martin~L. Puterman.
\newblock {\em Markov {{Decision Processes}}: {{Discrete Stochastic Dynamic
  Programming}}}.
\newblock Wiley, 2009.

\bibitem{raghavanAlgorithmsStochasticGames1991}
T.~E.~S. Raghavan and J.~A. Filar.
\newblock Algorithms for stochastic games {\textemdash} {{A}} survey.
\newblock {\em Zeitschrift f{\"u}r Operations Research}, 35(6):437--472, 1991.

\bibitem{santosAutomaticVerificationStrategy2020}
G.~H.~R. Santos.
\newblock {\em Automatic Verification and Strategy Synthesis for Zero-Sum and
  Equilibria Properties of Concurrent Stochastic Games}.
\newblock PhD thesis, University of Oxford, 2020.

\bibitem{scottOutlineMathematicalTheory1977}
Dana Scott.
\newblock Outline of a {{Mathematical Theory}} of {{Computation}}.
\newblock {\em Kiberneticheskij Sbornik. Novaya Seriya}, 14, January 1977.

\bibitem{simonsMinimaxTheoremsTheir1995}
Stephen Simons.
\newblock Minimax {{Theorems}} and {{Their Proofs}}.
\newblock In Ding-Zhu Du and Panos~M. Pardalos, editors, {\em Minimax and
  {{Applications}}}. Springer US, 1995.

\bibitem{wijsEfficientGPUAlgorithms2016}
Anton Wijs, Joost-Pieter Katoen, and Dragan Bo{\v s}na{\v c}ki.
\newblock Efficient {{GPU}} algorithms for parallel decomposition of graphs
  into strongly connected and maximal end components.
\newblock {\em Formal Methods in System Design}, 48(3), June 2016.

\end{thebibliography}

\newpage
\onecolumn
\appendices

\crefalias{section}{appendix} %
\crefalias{subsection}{appendix} %
\section{Further Definitions and Concepts}

\subsection{Solving Matrix Games}\label{apx:LPforMatrixGames}
Given a valuation $\valuation$ for all states in a two-player \CSG, we can update the valuation at state $\state$ by solving a Linear Program (LP). Let the matrix game played at $\state$ be given by a matrix $\matrixGameMatrix \in \QQ^{l\times m}$, where $l$ (resp. $m$) is the number of actions available to player $\reach$ (resp. $\safe$) at the state $\state$. Then the LP that yields the value is the following \cite{santosAutomaticVerificationStrategy2020}: Maximize $\valuation(\state)$ subject to the constraints:
\begin{align*}
	\valuation(\state) &\leq x_1 \cdot z_{1j} + \dots + x_l \cdot z_{lj} \text{ for } 1 \leq j \leq m\\
	x_i &\geq 0 \text{ for } 1 \leq i \leq l\\
	1 &= x_1 + \dots + x_l
\end{align*}
where $z_{ij}= \matrixGameMatrix(\state)(i,j) = \sum_{\state<\prime> \in \states} \transitions(\state,\NFGaction[1](i),\NFGaction[2](j))(\state<\prime>) \cdot \valuation(\state<\prime>)$, and $x_i$ is the probability that player $\reach$ will take action $i$. Thus, by solving the LP, we not only obtain the value but also the optimal local strategy for player $\reach$.

\subsection{Domination of Strategies}\label{def:dominantStrategies}
\begin{definition}[Domination \protect{\cite[Def. 4.12]{maschlerGameTheory2020}}]\label{def:standardDominance}
	Given a valuation $\valuation$, a state $\state \in \states$ and sets of strategies $\strategies(\state)$ and $\strategies*(\state)$ available at the state $\state$ for player $\reach$ and $\safe$, respectively.
	\begin{description}
		\item[-] A strategy $\strategy \in \strategies(\state)$ is \emph{weakly dominated} if there exists another strategy $\strategy<\prime> \in \strategies(\state)$ satisfying the following two conditions:
		\begin{itemize}
			\item $\inf_{\strategy* \in \strategies[\safe](\state)}\preop(\valuation)(\state, \strategy,\strategy*) \leq \inf_{\strategy* \in \strategies[\safe](\state)} \preop(\valuation)(\state, \strategy<\prime>,\strategy*)$, and
			\item $\exists \strategy[\safe]<\prime> \in \strategies[\safe](\state)$ such that $\preop(\valuation)(\state, \strategy,\strategy[\safe]<\prime>) < \preop(\valuation)(\state, \strategy<\prime>,\strategy[\safe]<\prime>)$.
		\end{itemize}
		
		\item[-]\newtarget{def:dominantStrategies}{Dually}, a strategy $\strategy[\safe] \in \strategies*(\state)$ is \emph{weakly dominated} if there exists another strategy $\strategy[\safe]<\prime> \in \strategies*(\state)$ satisfying the following two conditions:
		\begin{itemize}
			\item $\sup_{\strategy \in \strategies(\state)}\preop(\valuation)(\state, \strategy,\strategy*) \geq \sup_{\strategy \in \strategies(\state)}\preop(\valuation)(\state, \strategy,\strategy[\safe]<\prime>)$, and
			\item $\exists \strategy<\prime> \in \strategies(\state)$ such that $\preop(\valuation)(\state, \strategy<\prime>,\strategy*) > \preop(\valuation)(\state, \strategy<\prime>,\strategy[\safe]<\prime>)$.
		\end{itemize}
	\end{description}
\end{definition}

\section{Further Examples}\newtarget{sec:furtherExamples}\label{AppfurtherExamples}

\subsection{Best Exit}\label{ex:BestExit}

\begin{figure}[b]
	\centering
	\resizebox{0.6\textwidth}{!}{\begin{tikzpicture}[
state/.style={draw,circle, minimum size = 0.5cm, align=center},
stochSt/.style={draw,circle, fill=black, minimum size = 0.1cm, inner sep=0pt},
transition/.style={->, black, -{Stealth[length=2mm]}, align=center},
exit/.style={->, black, -{Stealth[length=2mm]}, align=center},
invalid/.style={draw,rectangle,fill=black!30,minimum width=2cm,minimum height=1cm},
every label/.append style = {font=\small}
]

	\node (start) at (0.5, 1) {};
    \node[state] (S0) at (0,0) {\scalebox{1.3}{$s_0$}};
    \node[state,right of=S0, xshift=4cm, yshift=2.5cm] (S1)  {\scalebox{1.3}{$s_1$}};
    \node[state,right of=S0, xshift=4cm, yshift=-2.5cm] (S2)  {\scalebox{1.3}{$s_2$}};
    \node[stochSt, above of=S2, xshift=-0.8cm, yshift=0cm] (stoch) {};
    \node[stochSt, below of=S1, xshift=0cm, yshift=-1.5cm] (stoch3) {};
    \node[stochSt, below of=S0, xshift=-1.5cm, yshift=0cm] (stoch4) {};
    \node[cloud, left of=S0, xshift=-3.5cm, draw =black, fill = lipicsGray!20, minimum width = 2cm, minimum height = 1cm] (c1) {\scalebox{1.3}{$\alpha$}};
    \node[cloud, right of=S2, xshift=3cm, draw =black, fill = lipicsGray!20, minimum width = 2cm, minimum height = 1cm] (c2) {\scalebox{1.3}{$\gamma$}};
    \node[cloud, right of=S1, xshift=3cm, draw =black, fill = lipicsGray!20, minimum width = 2cm, minimum height = 1cm] (c3) {\scalebox{1.3}{$\beta$}};
    
    \draw[transition] (start) to (S0);

    \draw[transition] (S0) edge[bend left=10] node[above,align=center, yshift=0.1cm]{\scalebox{1.3}{$\mathsf{a}_1 \mathsf{d}_2$}} (S1);
    \draw[transition] (S0) edge[in=160,out=110,looseness=8] node[above]{\scalebox{1.3}{$\mathsf{a}_1 \mathsf{d}_1$}} (S0);
    \draw[exit] (S0) to node[above,align=center]{\scalebox{1.3}{$\mathsf{a}_2 \mathsf{d}_2$}} (c1);
    \draw[-] (S0) edge[bend left=15]  node[above,xshift=-0.3cm] {\scalebox{1.3}{$\mathsf{a}_2 \mathsf{d}_1$}} (stoch4); 
    	\draw[exit] (stoch4) edge[bend left=15] (c1.south);
    	\draw[exit] (stoch4) edge[bend right=70] (S0.south);
    
    \draw[exit] (S1) to node[above,align=center]{\scalebox{1.3}{$\mathsf{b}_1 \mathsf{e}_2$}} (c3);
    \draw[transition] (S1) edge[bend left=10]  node[below,yshift=-0.1cm] {\scalebox{1.3}{$\mathsf{b}_2 \mathsf{e}_2$}} (S0);
    \draw[-] (S1) edge[bend right=0]  node[right,xshift=-0.13cm,yshift=0.3cm] {\scalebox{1.3}{$\mathsf{b}_2 \mathsf{e}_1$}} (stoch3);
    	\draw[transition] (stoch3) edge[bend left=0] (S2);
    	\draw[transition] (stoch3) edge[bend right=75] ([yshift=-0.2cm, xshift=-0.04cm]S1.east);
    \draw[exit] (S1) edge[bend left=35] node[right,align=center,xshift=0.1cm]{\scalebox{1.3}{$\mathsf{b}_1 \mathsf{e}_1$}} (c2.north);	
    	
	\draw[exit] (S2) to node[above,align=center]{\scalebox{1.3}{$\mathsf{c}_2 \mathsf{f}_1$}} (c2);   
	\draw[transition] (S2) edge[bend left=20]  node[left,yshift=-0.2cm] {\scalebox{1.3}{$\mathsf{c}_1 \mathsf{f}_1$}} (S0); 	
    \draw[-] (S2) edge[bend right=0]  node[left] {\scalebox{1.3}{$\mathsf{c}_1 \mathsf{f}_2$}} (stoch);
    	\draw[transition] (stoch) edge[bend right=10] (S0.east);
    	\draw[transition] (stoch) edge[bend left=10] (S1);
    	\draw[transition] (S2) edge[in=20,out=70,looseness=8] node[above, yshift=0cm]{\scalebox{1.3}{$\mathsf{c}_2 \mathsf{f}_2$}} (S2);

\end{tikzpicture}}
	\captionof{figure}{\CSG with non-single-state \EC. The clouds represent the irrelevant parts of the game.}\label{fig:non_trivial_ec}
\end{figure}
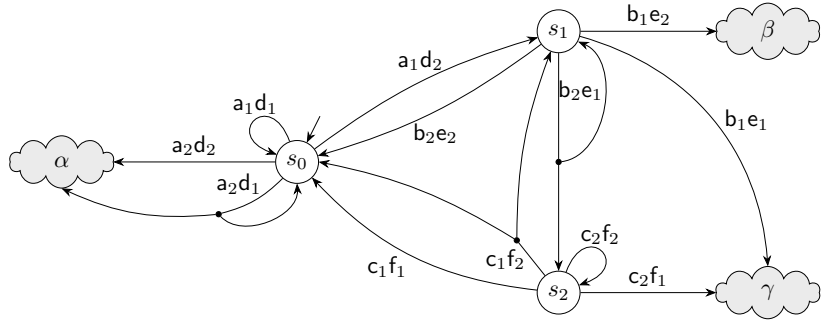

\begin{example}[Deflating BECs]\label{apx:deflBEC}
	Consider the \CSG depicted in \Cref{fig:non_trivial_ec}. At each state both players can choose among two actions. The game contains one \MEC $\endComponent \coloneqq \{\state[0], \state[1],\state[2]\}$. The matrix games played at the states $\state[0], \state[1]$ and $\state[2]$ with respect to an upper bound $\upperBound<k>$ (where $k \in \NN$) are defined as follows.
	\begin{align*}
		\mathsf{Z}_{\upperBound<k>}(\state[0]):=
		 \begin{blockarray}{ccc}
			\scriptstyle \mathsf{d}_1 & \scriptstyle  \mathsf{d}_2 \\
			\begin{block}{(cc)c}
			\upperBound<k>(\state[0]) & \upperBound<k>(\state[1])  & \scriptstyle \action[\reach](1) \\
\frac{1}{2}(\upperBound<k>(\state[0]) + \alpha) & \alpha & \scriptstyle \action[\reach](2)\\
			\end{block}
		\end{blockarray},\hspace{1cm}
		\mathsf{Z}_{\upperBound<k>}(\state[1]) = \begin{blockarray}{ccc}
			\scriptstyle \mathsf{e}_1 & \scriptstyle  \mathsf{e}_2 \\
			\begin{block}{(cc)c}
			\gamma   & \beta  & \scriptstyle \mathsf{b}_1 \\
			\frac{1}{2}(\upperBound<k>(\state[1])+\upperBound<k>(\state[2])) &  \upperBound<k>(\state[0]) & \scriptstyle \mathsf{b}_2\\
			\end{block}
		\end{blockarray},
			\end{align*}
			\vspace{-0.5cm}
				\begin{align*}
		\mathsf{Z}_{\upperBound<k>}(\state[2]) =\begin{blockarray}{ccc}
			\scriptstyle \mathsf{f}_1 & \scriptstyle  \mathsf{f}_2 \\
			\begin{block}{(cc)c}
			\upperBound<k>(\state[0]) & \frac{1}{2}(\upperBound<k>(\state[0])+\upperBound<k>(\state[1]))  & \scriptstyle \mathsf{c}_1\\
			\gamma & \upperBound<k>(\state[2])  & \scriptstyle \mathsf{c}_2\\
					\end{block}
	\end{blockarray}.
	\end{align*}
	For $k=0$, $\alpha=0.2, \beta=0.7$ and $\gamma=0.9$ and the initialization $\upperBound<0>(\state[0])=\upperBound<0>(\state[1])=\upperBound<0>(\state[2])=1$, the matrix games look as follows.
	\begin{align*}
		\mathsf{Z}_{\upperBound<0>}(\state[0]) = \begin{blockarray}{ccc}
			\scriptstyle \mathsf{d}_1 & \scriptstyle  \mathsf{d}_2 \\
\begin{block}{(cc)c}
			1 & 1  & \scriptstyle \action[\reach](1) \\
			0.6 & 0.2 & \scriptstyle \action[\reach](2)\\
			\end{block}
\end{blockarray}, \hspace*{1cm}
		\mathsf{Z}_{\upperBound<0>}(\state[1]) = \begin{blockarray}{ccc}
			\scriptstyle \mathsf{e}_1 & \scriptstyle  \mathsf{e}_2 \\
			\begin{block}{(cc)c}
			0.9  & 0.7  & \scriptstyle \mathsf{b}_1 \\
			1 & 1 & \scriptstyle \mathsf{b}_2\\
		\end{block}
\end{blockarray}, \hspace*{1cm}
	\mathsf{Z}_{\upperBound<0>}(\state[2]) = \begin{blockarray}{ccc}
			\scriptstyle \mathsf{f}_1 & \scriptstyle  \mathsf{f}_2 \\
			\begin{block}{(cc)c}
			1 & 1  & \scriptstyle \mathsf{c}_1 \\
			0.9 & 1 & \scriptstyle \mathsf{c}_2\\
		\end{block}
\end{blockarray}.
	\end{align*} 
	Since at each state, there exist hazardous and trapping strategies, the three states form a \BEC.
	To estimate the values of each exit, we need to solve three sub-matrix games played at each state of the \BEC. The best exit value is then the maximum of the three solutions. The three linear programs that solve the three sub-matrix games are the following.
	
	\begin{minipage}{0.2\textwidth}
		\begin{align*}
			&\max \upperBound<1>(\state[0]) \text{ s.t.}\\
			&\upperBound<1>(\state[0]) \leq 0.6 \cdot x_2\\
			&\upperBound<1>(\state[0])  \leq 0.2 \cdot x_2\\
			&x_2 = 1
		\end{align*}
	\end{minipage}
	\begin{minipage}{0.2\textwidth}
		\begin{align*}
			&\max \upperBound<1>(\state[1])  \text{ s.t.}\\
			&\upperBound<1>(\state[1]) \leq 0.9 \cdot x_1\\
			&\upperBound<1>(\state[1]) \leq 0.7 \cdot x_1\\
			&x_1 = 1
		\end{align*}
	\end{minipage}
	\begin{minipage}{0.2\textwidth}
		\begin{align*}
			&\max \upperBound<1>(\state[2]) \text{ s.t.}\\
			& \upperBound<1>(\state[2]) \leq 0.9 \cdot x_2\\
			& \upperBound<1>(\state[2]) \leq 1 \cdot x_2\\
			&x_2 = 1
		\end{align*}
	\end{minipage}
	
	Here $x_1$ and $x_2$ are the probabilities that player $\reach$ chooses the first or second action available at the corresponding state. The best exit is $\max \{0.2, 0.7, 0.9\} = 0.9$, therefore, the upper bound of all the three states can be safely reduced to 0.9.
\end{example}

\subsection{BVI Algorithm - Full Example}\label{BVIfullExample}

	\begin{figure}[tb]
	\centering
	\captionsetup{type=table}
	\captionof{table}{Full example of the BVI algorithm for the CSG depicted in \Cref{fig:non_trivial_ec}. $i$ is the $i$-th iteration of the main loop in \cref{alg:bvi}, $j$ is the $j$-th iteration of the while-loop in \cref{alg:deflate_mecs}. $\endComponent$ is the set of states among which a \BEC $\ecStates$ is searched. The best exit from $\ecStates$ is underlined.}\label{tab:bviFull}
	\begin{tabular}{cccccll}
		\toprule
		$i$ & $j$ & $\upperBound<i>(\state[0])$ & $\upperBound<i>(\state[1])$ & $\upperBound<i>(\state[2])$ & \multicolumn{1}{c}{$\endComponent$}      & \multicolumn{1}{c}{$\ecStates$} \\ \midrule
		\multirow{4}{*}{0}   & 0   & 1                        & 1                        & 1                        & $\{\state[0], \state[1], \state[2]\}$ & $\{\state[0], \state[1], \underline{\state[2]}\}$   \\
		& 1   & 0.9                      & 0.9                      & 0.9                      & $\{\state[0], \state[1]\}$            & $\{\state[0], \underline{\state[1]}\}$              \\
		& 2   & 0.7                      & 0.7                      & 0.9                      & $\{\state[0]\}$                       & $\{\underline{\state[0]}\}$                         \\
		& 3   & 0.2                      & 0.7                      & 0.9                      & $\emptyset$                           & $\emptyset$                             \\ \midrule
		\multirow{3}{*}{1}   & 0   & 0.2                      & 0.7                      & 0.9                      & $\{\state[0],\state[1], \state[2]\}$  & $\{\underline{\state[0]}\}$                         \\
		& 1   & 0.2                      & 0.7                      & 0.9                      & $\{\state[1], \state[2]\}$            & $\{\underline{\state[2]}\}$                         \\
		& 2   & 0.2                      & 0.7                     & 0.45                      & $\{\state[1]\}$                       & $\emptyset$    
		\\ \bottomrule                      
	\end{tabular}
\end{figure}

\begin{example}[BVI]\label{ex:bviFull}
	Consider the \CSG depicted in \Cref{fig:non_trivial_ec}, where $\alpha=0.2, \beta=0.7$, and $\gamma=0.9$. The matrix games played at each state are given by the matrices $\mathsf{Z}_{\upperBound<k>}(\state[0]), \mathsf{Z}_{\upperBound<k>}(\state[1])$ and $\mathsf{Z}_{\upperBound<k>}(\state[2])$ defined as in Example \ref{apx:deflBEC}. We choose $\varepsilon=0.001$.
	
	\cref{tab:bviFull} summarizes the steps of the algorithm. Initially it holds that $\upperBound<0>(\state[0]) = \upperBound<0>(\state[1]) = \upperBound<0>(\state[2]) = 1$. Since $\endComponent = \{\state[0], \state[1], \state[2]\}$ is a \MEC, it will be found by \cref{alg:deflate_mecs}. Within $\endComponent$ the \BEC $\ecStates = \endComponent$ is found. The best exit value from $\ecStates$ is calculated, i.e. the three sub-matrix games are solved and the best exit value needed for deflating the \BEC is the maximum among the solutions: 
	\begin{align*}
		\mathsf{Z}^{\mathsf{exit}}_{\upperBound<0>}(\state[0]):= \begin{blockarray}{ccc}
			\scriptstyle \mathsf{d}_1 & \scriptstyle  \mathsf{d}_2 \\
			\begin{block}{(cc)c}
			0.6 & 0.2 & \scriptstyle \action[\reach](2)\\
			\end{block}
		\end{blockarray}&\hspace{1cm}
		\mathsf{Z}^{\mathsf{exit}}_{\upperBound<0>}(\state[1]) = \begin{blockarray}{ccc}
			\scriptstyle \mathsf{e}_1 & \scriptstyle  \mathsf{e}_2 \\
\begin{block}{(cc)c}
			0.9 & 0.7  & \scriptstyle \mathsf{b}_1 \\
		\end{block}
\end{blockarray}&\hspace{-0.2cm}
	\mathsf{Z}^{\mathsf{exit}}_{\upperBound<0>}(\state[2]) = \begin{blockarray}{ccc}
			\scriptstyle \mathsf{f}_1 & \scriptstyle  \mathsf{f}_2 \\
	\begin{block}{(cc)c}
			0.9 & 1  & \scriptstyle \mathsf{c}_1\\
			\end{block}
			\end{blockarray}
	\end{align*}
	The best exit from $\ecStates$ is $\state[2]$ (underlined in \cref{tab:bviFull}) and the value of the exit is 0.9, so in the next step the over-approximations of those three states are deflated to 0.9 and the best exit is removed from $\endComponent$.
	\begin{figure}[tb]
		\centering
		\resizebox{0.6\textwidth}{!}{\begin{tikzpicture}[
state/.style={draw,circle, minimum size = 0.5cm, align=center},
stochSt/.style={draw,circle, fill=black, minimum size = 0.1cm, inner sep=0pt},
transition/.style={->, black, -{Stealth[length=2mm]}, align=center},
exit/.style={->, black, -{Stealth[length=2mm]}, align=center},
invalid/.style={draw,rectangle,fill=black!30,minimum width=2cm,minimum height=1cm},
every label/.append style = {font=\small}
]

    \node[state] (S0) at (0,0) {\scalebox{1.3}{$s_0$}};
    \node[state,right of=S0, xshift=4cm] (S1)  {\scalebox{1.3}{$s_1$}};
    \node[stochSt, below of=S0, xshift=-1.5cm, yshift=0cm] (stoch4) {};
    \node[cloud, left of=S0, xshift=-2.5cm, draw =black, fill = lipicsGray!20, minimum width = 2cm, minimum height = 1cm] (c1) {\scalebox{1.3}{$\alpha$}};
    \node[cloud, right of=S1, xshift=3cm, yshift=-1cm, draw =black, fill = lipicsGray!20, minimum width = 2cm, minimum height = 1cm] (c2) {\scalebox{1.3}{$\gamma$}};
    \node[cloud, right of=S1, xshift=3cm, yshift=1cm, draw =black, fill = lipicsGray!20, minimum width = 2cm, minimum height = 1cm] (c3) {\scalebox{1.3}{$\beta$}};

    \draw[transition] (S0) edge[bend left=10] node[above,align=center, yshift=0.1cm]{\scalebox{1.3}{$\mathsf{a}_1 \mathsf{d}_2$}} (S1);
    \draw[transition] (S0) edge[in=160,out=110,looseness=8] node[above]{\scalebox{1.3}{$\mathsf{a}_1 \mathsf{d}_1$}} (S0);
    \draw[exit] (S0) to node[above,align=center]{\scalebox{1.3}{$\mathsf{a}_2 \mathsf{d}_2$}} (c1);
    \draw[-] (S0) edge[bend left=15]  node[above,xshift=-0.2cm] {\scalebox{1.3}{$\mathsf{a}_2 \mathsf{d}_1$}} (stoch4); 
    	\draw[exit] (stoch4) edge[bend left=15] (c1.south);
    	\draw[exit] (stoch4) edge[bend right=70] (S0.south);
    
    \draw[exit] (S1) to node[above,align=center]{\scalebox{1.3}{$\mathsf{b}_1 \mathsf{e}_2$}} (c3);
    \draw[transition] (S1) edge[bend left=10]  node[below,yshift=-0.1cm] {\scalebox{1.3}{$\mathsf{b}_2 \mathsf{e}_2$}} (S0);
    \draw[exit] (S1) edge[bend left=0] node[above,align=center]{\scalebox{1.3}{$\mathsf{b}_1 \mathsf{e}_1$}} (c2);

\end{tikzpicture}}
		\captionof{figure}{The resulting game after removing $\state[2]$ from the game illustrated in \Cref{fig:non_trivial_ec}.}\label{apxfig:non_trivial_ec_deflated_1}
	\end{figure}
	The resulting sub-game is depicted in \Cref{apxfig:non_trivial_ec_deflated_1}. For $\upperBound<\prime>(\state[0]) = \upperBound<\prime>(\state[1]) = \upperBound<\prime>(\state[2]) =0.9$, the set $\endComponent=\{\state[0], \state[1]\}$ contains the \BEC $\ecStates = \endComponent$. Now, two linear programs need to be solved to solve the two sub-matrix games.
	\begin{align*}
		\mathsf{Z}^{\mathsf{exit}}_{\upperBound<\prime>}(\state[0]):= \begin{blockarray}{ccc}
			\scriptstyle \mathsf{d}_1 & \scriptstyle  \mathsf{d}_2 \\
	\begin{block}{(cc)c}
			0.55 & 0.2 & \scriptstyle \action[\reach](2)\\
			\end{block}
\end{blockarray}&\hspace{1cm}
		\mathsf{Z}^{\mathsf{exit}}_{\upperBound<\prime>}(\state[1]) = \begin{blockarray}{ccc}
			\scriptstyle \mathsf{e}_1 & \scriptstyle  \mathsf{e}_2 \\
				\begin{block}{(cc)c}
			0.9 & 0.7  & \scriptstyle \mathsf{b}_1 \\
\end{block}
\end{blockarray}
	\end{align*}
	The best exit is $\state[1]$ and the value of the exit 0.7, thus, the upper bounds of $\state[0]$ and $\state[1]$ are reduced to 0.7 and $\state[1]$ is removed from $\endComponent$. The resulting sub-game is depicted in \Cref{apxfig:non_trivial_ec_deflated_2}.
	\begin{figure}[tb]
		\centering
		\resizebox{0.2\textwidth}{!}{\begin{tikzpicture}[
state/.style={draw,circle, minimum size = 0.5cm, align=center},
stochSt/.style={draw,circle, fill=black, minimum size = 0.1cm, inner sep=0pt},
transition/.style={->, black, -{Stealth[length=2mm]}, align=center},
exit/.style={->, black, -{Stealth[length=2mm]}, align=center},
invalid/.style={draw,rectangle,fill=black!30,minimum width=2cm,minimum height=1cm},
every label/.append style = {font=\small}
]

    \node[state] (S0) at (0,0) {\scalebox{1.3}{$s_0$}};
    \node[stochSt, below of=S0, xshift=-1.5cm, yshift=0cm] (stoch4) {};
    \node[cloud, left of=S0, xshift=-2.5cm, draw =black, fill = lipicsGray!20, minimum width = 2cm, minimum height = 1cm] (c1) {\scalebox{1.3}{$\alpha$}};

    \draw[transition] (S0) edge[in=160,out=110,looseness=8] node[above]{\scalebox{1.3}{$\mathsf{a}_1 \mathsf{d}_1$}} (S0);
    \draw[exit] (S0) to node[above,align=center]{\scalebox{1.3}{$\mathsf{a}_2 \mathsf{d}_2$}} (c1);
    \draw[-] (S0) edge[bend left=15]  node[above,xshift=-0.2cm] {\scalebox{1.3}{$\mathsf{a}_2 \mathsf{d}_1$}} (stoch4); 
    	\draw[exit] (stoch4) edge[bend left=15] (c1.south);
    	\draw[exit] (stoch4) edge[bend right=70] (S0.south);
\end{tikzpicture}}
		\captionof{figure}{The resulting game after removing $\state[1]$ from the game illustrated in \Cref{apxfig:non_trivial_ec_deflated_1}.}\label{apxfig:non_trivial_ec_deflated_2}
	\end{figure}
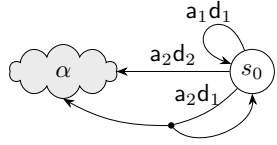
	Finally, for $\upperBound<\prime\prime>(\state[0])=0.7$, the set $\endComponent = \{\state[0]\}$ contains the \BEC $\ecStates=\{\state[0]\}$. To deflate it, we need to solve the following exiting sub-game.
	\begin{align*}
		\mathsf{Z}^{\mathsf{exit}}_{\upperBound<\prime\prime>}(\state[0]):=\begin{blockarray}{ccc}
			\scriptstyle \mathsf{d}_1 & \scriptstyle  \mathsf{d}_2 \\
			\begin{block}{(cc)c}
			0.45 & 0.2 & \scriptstyle \action[\reach](2)\\
\end{block}
\end{blockarray}
	\end{align*}
	The best exit value is 0.2. The upper upper bound of $\state[0]$ is reduced to 0.2. After removing the state from $\ecStates$ we obtain an empty set and the algorithm can proceed with the next \MEC. Since, we assumed that the game has only one \MEC, the deflating phase is finished. As $\upperBound<0> - \lowerBound<0> > \varepsilon$ holds, the next iteration of the algorithm is executed.
	
	The Bellman update returns the same valuation for all states, i.e. $\upperBound<1>(\state[0]) =0.2$, $\upperBound<1>(\state[1]) =0.7$ and $\upperBound<1>(\state[2]) = 0.9$. \cref{alg:deflate_mecs} again finds the \MEC $\endComponent = \{\state[0],\state[1],\state[2]\}$ that contains the two \BECs $\ecStates[1] \coloneqq \{\state[0]\}$ and $\ecStates[1] \coloneqq \{\state[2]\}$. First $\ecStates = \{ \state[0]\}$ is deflated to 0.2 and next the \BEC $\ecStates=\{\state[2]\}$ is deflated. For this, the following exiting sub-game needs to be solved.
	\begin{align*}
		\mathsf{Z}^{\mathsf{exit}}_{\upperBound<1>}(\state[0]):=\begin{blockarray}{cc}
			\scriptstyle  \mathsf{d}_2 \\
			\begin{block}{(c)c}
			0.45 & \scriptstyle \action[\reach](2)\\
\end{block}
\end{blockarray}
	\end{align*}
	Notice, that here we only consider staying strategies for player $\safe$, which is why we only consider action $\mathsf{d}_2$. Therefore, the best exit value is 0.45. After removing $\state[0]$ and $\state[2]$ from $\ecStates$, no further \BECs are contained in the \MEC. Since now $\upperBound<1> - \lowerBound<1> < \varepsilon$ holds, the \BVI algorithm terminates.
\end{example}

\subsection{Non-Monotonicity of Deflation}\label{app:ex-defl-not-monotonic}
\begin{figure}
	\centering
	\resizebox{0.4\textwidth}{!}{\begin{tikzpicture}[
	state/.style={draw,circle, minimum size = 0.5cm, align=center},
	stochSt/.style={draw,circle, fill=black, minimum size = 0.1cm, inner sep=0pt},
	transition/.style={->, black, -{Stealth[length=2mm]}, align=center},
	exit/.style={->, black, -{Stealth[length=2mm]}, align=center},
	invalid/.style={draw,rectangle,fill=black!30,minimum width=2cm,minimum height=1cm},
	every label/.append style = {font=\small}
	]
	
	\node (start) at (-1,0){};
	\node[state] (S0) at (0,0) {\scalebox{1.3}{$s_0$}};
	\node[state,right of=S0, xshift=3cm] (S1)  {\scalebox{1.3}{$s_1$}};
	\node[cloud, right of=S1, xshift=3cm, yshift=1cm, draw =black, fill = lipicsGray!20, minimum width = 2cm, minimum height = 1cm] (c2) {\scalebox{1.3}{$\beta$}};
		\node[cloud, above of=S1, yshift=1cm, draw =black, fill = lipicsGray!20, minimum width = 2cm, minimum height = 1cm] (c1) {\scalebox{1.3}{$\alpha$}};
		\node[cloud, below of=c2, xshift=0cm, yshift=-1cm, draw =black, fill = lipicsGray!20, minimum width = 2cm, minimum height = 1cm] (c3) {\scalebox{1.3}{$\gamma$}};
	
	\draw[transition] (start) to (S0);
	
	\draw[transition] (S0) edge[bend left=10] node[above,align=center, yshift=0.0cm]{\scalebox{1.3}{$(\square, \square)$}} (S1);
	
	\draw[transition] (S1) edge[bend left=0] node[left,align=center, yshift=0cm]{\scalebox{1.3}{($\mathsf{a}, \mathsf{d}$)}} (c1);

	\draw[exit] (S1) to node[above,align=center,xshift=-0.15cm, yshift=0cm]{\scalebox{1.3}{$(\mathsf{b}, \mathsf{e})$}\\\scalebox{1.3}{$(\mathsf{c}, \mathsf{e})$}} (c2);
	
	  \draw[transition] (S1) edge[in=280,out=230,looseness=5] node[below]{\scalebox{1.3}{$(\mathsf{a},\mathsf{e})$}} (S1);
	
	\draw[transition] (S1) edge[bend left=10]  node[below,yshift=0cm] {\scalebox{1.3}{$(\mathsf{b}, \mathsf{d})$}} (S0);
	
	\draw[exit] (S1) edge[bend left=0] node[above,yshift=-0.15cm, xshift=0.5cm,align=center]{\scalebox{1.3}{$(\mathsf{c}, \mathsf{d})$}} (c3);

\end{tikzpicture}}
	\caption{Monotonicity is not guaranteed in general.}
	\label{fig:non_monotonic}
\end{figure}
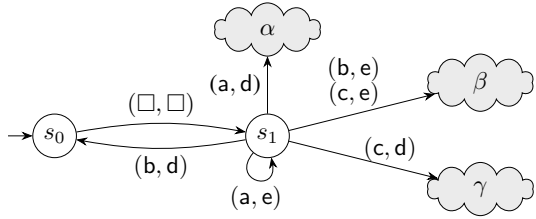

\begin{example}
	Consider the \CSG depicted in \cref{fig:non_monotonic}. The variables $\alpha, \beta$ and $\gamma$ are placeholders indicating that upon leaving a certain valuation is obtained. 
	The set of states $\{\state[0], \state[1]\}$ is an \EC. We consider two valuations, $\upperBound$ and $\upperBound<\prime>$, such that $\upperBound \geq \upperBound<\prime>$ and the \EC $\{\state[0], \state[1]\}$ is for both upper bounds a \BEC. The upper bound $\upperBound$ assigns the following valuations: $\upperBound(\state[0]) = 0.6, \upperBound(\state[1]) = 0.6, \upperBound(\alpha) =0.8, \upperBound(\beta) = 0.5,\upperBound(\gamma=0.55)$. The upper bound $\upperBound<\prime>$ assigns the following valuations: $\upperBound<\prime>(\state[0]) = 0.6, \upperBound<\prime>(\state[1]) = 0.45,\upperBound<\prime>(\alpha) =0.5, \upperBound<\prime>(\beta) = 0.5,\upperBound<\prime>(\gamma=0.55)$. Then, the matrix games played for the two valuations at state $\state[1]$ are given by the following matrices:
	\begin{align*}
		\matrixGameMatrix[\upperBound](\state[1]) \coloneqq \begin{blockarray}{ccc}
			\scriptstyle \mathsf{d} & \scriptstyle \mathsf{e}\\
			\begin{block}{(cc)c}
				0.8 & 0.6 & \scriptstyle \mathsf{a}\\
				0.6 & 0.5 & \scriptstyle \mathsf{b}\\
				0.55 & 0.5 & \scriptstyle \mathsf{c}\\
			\end{block}
		\end{blockarray}, \hspace{0.5cm}
		\matrixGameMatrix[\upperBound<\prime>](\state[1]) \coloneqq \begin{blockarray}{ccc}
			\scriptstyle \mathsf{d} & \scriptstyle \mathsf{e}\\
			\begin{block}{(cc)c}
				0.5 & 0.45 & \scriptstyle \mathsf{a}\\
				0.6 & 0.5 & \scriptstyle \mathsf{b}\\
				0.55 & 0.5 & \scriptstyle \mathsf{c}\\
			\end{block}
		\end{blockarray} 
	\end{align*}
	Then, under $\upperBound$ the strategy $\{\mathsf{a} \mapsto 1\}$ is a hazardous strategy and the exit value is 0.5. In contrast for $\upperBound<\prime>$ the strategy $\{\mathsf{b} \mapsto 1\}$ is hazardous and the exit value is 0.55. Thus, for arbitrary $\upperBound, \upperBound<\prime> \in \RR^{\mid \states \mid}$ %
	it might happen that $\exitVal[\upperBound](\ecStates) < \exitVal[\upperBound<\prime>](\ecStates)$ for some \BEC $\ecStates$ although $\upperBound \geq \upperBound<\prime>$ holds.
	Intuitively, this arises because the sub-EC which forms a \BEC changes when the relative ordering of exits is modified.
	However, the problem cannot occur when considering valid over-approximations (see \cref{DBorderPreserving}) because then the upper bounds decrease in a well-behaved way when the relative ordering of the exits changes.
\end{example}

\section{Proofs for \Cref{sec:3-vi-title}}\newtarget{sec:correctnesProofsforBVI}
\label{app:proofs-for-vi}

Throughout the whole Appendix \ref{app:proofs-for-vi}, when proving convergence of \BVI, we utilize definitions and theorems from lattice and fixpoint theory.
Thus, we first briefly recall some necessary definitions (adjusting notation to our work) and theorems.

\begin{definition}[Ordered set {\cite[Chapter 1.2]{daveyIntroductionLatticesOrder2002}}]
	A set $P$ equipped with a relation $\preceq: P \times P$ is called an \emph{ordered set} if and only if $\preceq$ is reflexive, antisymmetric and transitive.
\end{definition}

\begin{definition}[Directed set {\cite[Chapter 7.7]{daveyIntroductionLatticesOrder2002}}]\newtarget{def:directedSet}
	Let $P$ be an ordered set.
	A non-empty set $D \subseteq P$ is \emph{directed} if and only if for every pair of elements $x,y \in \mathsf{D}$ there exists $z \in \mathsf{D}$ that is a lower bound for both, formally $z \preceq x$ and $z \preceq y$.
\end{definition}

\begin{definition}[Complete partial order (CPO) {\cite[Chapter 8.1]{daveyIntroductionLatticesOrder2002}}] %
	An ordered set $P$ is a complete partially ordered set (CPO) if and only if
	\begin{itemize}
		\item[(i)] $P$ has a top element, $\top \coloneq \inf_P \emptyset$, and
		\item[(ii)] for every directed set $D \subseteq P$, we have $\inf_P D$ exists. 
	\end{itemize}
\end{definition}

\begin{definition}[Continuity {\cite[Chapter 8.6]{daveyIntroductionLatticesOrder2002}}]\newtarget{def:scottContinuity}
	\label{def:continuous}
	Let $P$ and $Q$ be two CPOs. 
	A mapping $\phi: P \rightarrow Q$ is \emph{continuous} if
	\begin{enumerate}
		\item[(i)] for every \newlink{def:directedSet}{directed set} $D \subseteq P$, the subset $\phi(D)$ of $Q$ is also directed, and
		\item[(ii)] it holds that $\phi(\inf D) = \inf \phi(D) := \inf \{\phi(x) | x \in D\}$.
	\end{enumerate}
\end{definition}

\begin{definition}[Order-preserving {\cite[Chapter 1.34]{daveyIntroductionLatticesOrder2002}}]
	Let $P$ and $Q$ be ordered sets. A map $\phi: P \rightarrow Q$ is \emph{order-preserving} (also called monotone) if $x \preceq y$ in $P$ implies $\phi(x) \preceq \phi(y)$ in~$Q$.
\end{definition}

\begin{theorem}[Fixpoint Theorem {\cite[Chapter 8.15]{daveyIntroductionLatticesOrder2002}}]\label{thm:fixpoint-theorem}
	Let $P$ be a complete partial order, let $F$ be an order-preserving and continuous self-map on $P$ and define $\alpha := \sup_{n \geq 0} F^n(\top)$.
	Then $\alpha$ is the greatest fixpoint of $F$, i.e.\ the largest element of $P$ satisfying $F(\alpha)=\alpha$.
\end{theorem}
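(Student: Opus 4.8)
The plan is to prove this as the order-dual of the classical Kleene fixpoint iteration, specialised to the CPO structure fixed earlier: a top element $\top = \inf_P \emptyset$, the existence of infima of directed sets, and continuity in the sense $F(\inf D) = \inf F(D)$. The entire argument is driven by the \emph{descending} chain obtained by iterating $F$ from the top. First I would establish that the iterates form such a chain. Since $\top$ is the greatest element of $P$, we have $F(\top) \preceq \top$; applying order-preservation and a routine induction yields $F^{n+1}(\top) \preceq F^n(\top)$ for every $n$, so that $\top \succeq F(\top) \succeq F^2(\top) \succeq \cdots$. This chain is a directed set in the sense fixed earlier: for any two members $F^i(\top)$ and $F^j(\top)$, the element $F^{\max(i,j)}(\top)$ lies in the chain and is a common lower bound of both. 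Hence the CPO property applies, and $\alpha$ exists as the greatest lower bound of this directed set, i.e.\ $\alpha = \inf_{n \ge 0} F^n(\top)$ is exactly the limit of the top-down iteration named in the statement.

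Next I would show that $\alpha$ is a fixpoint. Writing $D := \{F^n(\top) \mid n \ge 0\}$, continuity of $F$ gives $F(\alpha) = F(\inf D) = \inf F(D) = \inf_{n \ge 0} F^{n+1}(\top)$. Because the chain is descending, reinserting the top term $\top$ does not change the infimum: every lower bound of $\{F^n(\top) \mid n \ge 1\}$ is automatically below $F(\top) \preceq \top$, so the two sets share the same set of lower bounds and hence the same greatest one. Therefore $\inf_{n \ge 0} F^{n+1}(\top) = \inf_{n \ge 0} F^n(\top) = \alpha$, which gives $F(\alpha) = \alpha$.

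Finally I would prove maximality. Let $\beta$ be any fixpoint, so $F(\beta) = \beta$. Since $\top$ is the greatest element, $\beta \preceq \top$; applying order-preservation together with $\beta = F^n(\beta)$ and inducting on $n$ yields $\beta = F^n(\beta) \preceq F^n(\top)$ for all $n$. Thus $\beta$ is a lower bound of the chain $D$, whence $\beta \preceq \inf D = \alpha$. Consequently $\alpha$ is the largest element of $P$ satisfying $F(\alpha) = \alpha$, which is the greatest fixpoint as claimed.

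The main obstacle is the index-shift in the fixpoint verification: one must argue cleanly that prepending the top element to a descending chain leaves its infimum unchanged, so that $\inf_{n \ge 0} F^{n+1}(\top)$ collapses back to $\alpha$ and lets the continuity computation close. The remaining work is bookkeeping — verifying that the descending chain genuinely qualifies as a directed set so that both completeness (existence of $\alpha$) and continuity ($F$ preserving its infimum) can be invoked — but the point where care is most needed is keeping the dual orientation straight, reasoning with infima of down-directed sets and the top element rather than the more familiar suprema-from-bottom picture.
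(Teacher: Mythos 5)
Your proof is correct and follows essentially the same route as the paper, which does not spell the argument out but instead cites \cite[Chapter 8.15]{daveyIntroductionLatticesOrder2002} and remarks that dualizing the textbook's Kleene-style proof (inverting the order, swapping $\inf$ and $\sup$) yields the claim --- your descending-chain/directedness/continuity/lower-bound argument is precisely that dualization carried out explicitly. One minor point: you silently read the statement's $\sup_{n \geq 0} F^n(\top)$ as $\inf_{n \geq 0} F^n(\top)$, which is the only reading consistent with the paper's inverted CPO definitions (the iterates descend, so their literal supremum would just be $\top$); this is a notational slip in the statement, not a gap in your proof, though it would be worth flagging explicitly.
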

We remark that we inverted the definitions and the theorem: This is because we are interested in a greatest fixpoint, whereas the textbook~\cite{daveyIntroductionLatticesOrder2002} only speaks about least fixpoints. Inverting the comparator and replacing $\inf$ with $\sup$ yields the original definitions. With only these changes, the proof of \cite[Chapter 8.15]{daveyIntroductionLatticesOrder2002} yields our modified claim \Cref{thm:fixpoint-theorem}.

\subsection{Convergence in the absence of end components}\label{apx:proof_no_ecs}

We start by proving a technical lemma that is also useful for several future proofs:
The over-approximation computed using only Bellman updates converges to a fixpoint.

\begin{lemma}[Upper bound converges to a fixpoint]\label{lem:Ustar-is-fixpoint}
	Let $(\upperBound<k>)_{k\in\mathbb{N}_0}$ be the sequence of upper bounds computed by applying \Cref{eq:upperBoundRec} on a \CSG $\game$.
	Let $\upperBound<\star> \coloneqq \displaystyle{\lim_{k\to\infty}} \upperBound<k>$ be the limit of the sequence.
	This limit is a fixpoint of the Bellman update, i.e.\ for all $\state \in \states$, $\preop(\upperBound<\ast>)(\state) = \upperBound<\star>(\state)$.
\end{lemma}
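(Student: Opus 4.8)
The plan is to combine a monotone-convergence argument with the continuity of the Bellman operator $\preop$. First I would establish that the sequence $(\upperBound<k>)_{k \in \NN}$ is monotonically decreasing. For the base case, recall that $\upperBound<0>$ assigns $1$ to every state outside $\winning$ and $0$ to every state in $\winning$; since $\preop$ maps into $[0,1]$ and keeps the absorbing value $0$ on $\winning$, we have $\upperBound<1> = \preop(\upperBound<0>) \leq \upperBound<0>$. The operator $\preop$ is order-preserving: each matrix entry $z_{ij} = \sum_{\state<\prime> \in \states} \transitions(\state,\NFGaction[1](i),\NFGaction[2](j))(\state<\prime>) \cdot \valuation(\state<\prime>)$ is non-decreasing in $\valuation$ because transition probabilities are non-negative, and the value of a matrix game is monotone in its entries. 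A straightforward induction then gives $\upperBound<k+1> = \preop(\upperBound<k>) \leq \preop(\upperBound<k-1>) = \upperBound<k>$ for all $k$. As the sequence is bounded below by $0$ and $\states$ is finite, it converges pointwise to a well-defined limit $\upperBound<\star>$.

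The crux of the argument is the continuity of $\preop$ with respect to pointwise convergence of valuations. I would argue this in two steps. The map sending a valuation $\valuation$ to the matrix $\mathsf{Z}_{\valuation}(\state)$ is affine, hence continuous, since each entry $z_{ij}$ is an affine combination of the coordinates $\valuation(\state<\prime>)$. Moreover, the value of a matrix game, $\val(\mathsf{Z}) = \max_{x} \min_{j} \sum_i x_i z_{ij}$ over the probability simplex, is a continuous (indeed $1$-Lipschitz in the sup-norm of the entries) function of the matrix, being a minimax of jointly continuous functions over compact domains. Composing these two continuous maps shows that $\preop$ is continuous. Alternatively, one may invoke the lattice-theoretic framework of the appendix, viewing $\preop$ as a continuous, order-preserving self-map on the CPO $[0,1]^{\mid \states \mid}$.

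With continuity in hand, the conclusion is immediate:
\begin{align*}
	\preop(\upperBound<\star>) = \preop\bigl(\lim_{k\to\infty} \upperBound<k>\bigr) = \lim_{k\to\infty} \preop(\upperBound<k>) = \lim_{k\to\infty} \upperBound<k+1> = \upperBound<\star>,
\end{align*}
so $\upperBound<\star>$ is a fixpoint of the Bellman update at every state. The main obstacle is the continuity step: one must justify carefully that the value of a matrix game depends continuously on its entries. This rests on the minimax theorem together with the compactness of the strategy simplices; I would make it precise by establishing $|\val(\mathsf{Z}) - \val(\mathsf{Z}')| \leq \max_{i,j} |z_{ij} - z'_{ij}|$, which follows because a fixed optimal strategy of one player in one game remains available in the other and changes the guaranteed payoff by at most the maximal entry-wise deviation.
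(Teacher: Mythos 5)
Your proof is correct, but it takes a genuinely different route from the paper's. The paper argues order-theoretically: it views $\preop$ as a self-map on the complete partial order $[0,1]^{|\states|}$, shows it is order-preserving and Scott-continuous --- the delicate step being the exchange of $\inf_{d\in D}$ with $\sup_{\rho}\inf_{\sigma}$ over a directed set $D$, justified via the minimax theorem for bilinear functions --- and then invokes Kleene's fixpoint theorem to conclude that the iteration from the top element converges to a fixpoint. You instead argue by elementary analysis: the sequence $(\upperBound<k>)_k$ is monotonically decreasing (order-preservation of $\preop$ plus $\preop(\upperBound<0>)\leq\upperBound<0>$, using that $\winning$ is absorbing) and bounded below, hence pointwise convergent; and $\preop$ is continuous in the sup-norm because the value of a matrix game is $1$-Lipschitz in its entries, so the operator commutes with the limit. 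Both arguments ultimately rest on a minimax-type fact --- the paper uses it to swap quantifiers in the Scott-continuity computation, you use existence of optimal mixed strategies to obtain the Lipschitz bound --- but your route avoids the domain-theoretic machinery entirely and yields quantitative continuity as a bonus. What the paper's heavier route buys is strictly more than the lemma states: Kleene's theorem identifies $\upperBound<\star>$ as the \emph{greatest} fixpoint of $\preop$, a fact the paper invokes later (e.g., in the proof of \cref{theo:BVInoEC}, to conclude $\upperBound<\star>\geq\valR$ and hence $\Delta\geq 0$). With your approach that extra property is not automatic, though it is easily recovered by a separate induction: $\upperBound<0>\geq\valR$, $\valR$ is a fixpoint, and $\preop$ is order-preserving, so $\upperBound<k>\geq\valR$ for all $k$ and thus $\upperBound<\star>\geq\valR$ in the limit.
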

\begin{proof}
	This lemma is a consequence of the fixpoint theorem we just recalled.
	Thus, we proceed as follows: We explain that the domain of $\preop$ is a CPO and prove that $\preop$ is order-preserving and continuous. Then, \Cref{thm:fixpoint-theorem} yields that $\upperBound<\star>$ is a (namely the greatest) fixpoint.
	
	\subparagraph{Complete partial order}	
	The domain of $\preop$ are valuations, i.e.\ vectors $[0,1]^{|\states|}$ mapping every state to a number. 
	Thus, we define the set $P$ to be the set of all valuations.
	We use the standard point-wise comparisons as relation, i.e.\ $\valuation[1] \preceq \valuation[2]$ if and only if for all states $s\in\states$ we have $\valuation[1](s) \leq \valuation[2](s)$.
	Thus, the top element $\top$ is the function that maps all states to $1$.
	For every directed set $D$, a greatest lower bound $\ell = \inf_P D$ exists:
	Set $\ell(s) = \inf_{d \in D} d(s)$ for all $s \in \states$.
	It is a lower bound, as by point-wise comparison, it is smaller than all valuations in $D$; it is the greatest lower bound, since picking a larger number for any state would not be a lower bound any more.
	Thus, the set consisting of valuations $[0,1]^{|\states|}$ with this relation is a CPO.

	\subparagraph{Order-preserving}
	Recall that the Bellman operator on a state is defined as follows:
	${\preop}(\valuation)(\state)\coloneqq \adjustlimits\sup_{\strategy \in \strategies(\state)} \inf_{\strategy* \in \strategies*(\state)} \preop(\valuation)(\state, \strategy, \strategy*)$, where
	\begin{align*}
		\preop(\valuation){}&(\state, \strategy, \strategy*) \coloneqq \sum_{(\action[\reach],\action[\safe]) \in \actions} \sum_{\state<\prime> \in
			\states} \valuation(\state<\prime>) \cdot
		\transitions(\state,\action[\reach],\action[\safe])(\state<\prime>) \cdot
		\strategy[\reach](\state)(\action[\reach]) \cdot
		\strategy[\safe](\state)(\action[\safe]).
	\end{align*}
	\Cref{eq:upperBoundRec} lifts it to valuations by applying it state-wise.
	Hence, for every state, we apply an operation consisting of multiplications and summations, which are order-preserving.
	Thus, overall, the Bellman operator is order-preserving.
		
	\subparagraph{Continuous}
	We just showed that the Bellman operator on valuations is an order-preserving self-map on the set $P$ of valuations.
	Then, \cite[Lemma 8.7 (i)]{daveyIntroductionLatticesOrder2002} yields that for every directed subset $D \subseteq P$, we have that $\preop(D) \coloneqq \{\preop(d)\mid d\in D\}$ is a directed subset, which is Condition (i) of \Cref{def:continuous}.
	It remains to show Condition (ii): $\inf_{d \in D} \preop(d) = \preop(\inf D)$. %
	Since the comparisons by the relation $\preceq$ are performed point-wise, we have to prove that for all states $s\in\states$, we have $\inf_{d \in D} \preop(d)(s) = \preop(\inf D)(s)$. 
	Thus, fix an arbitrary state $s \in \states$, and conclude using the following chain of equations.
	\begin{align*}
		&\preop(\inf D)(s)\\
		&= 
		\adjustlimits\sup_{\strategy \in \strategies(\state)} \inf_{\strategy* \in \strategies*(\state)} \sum_{(\action[\reach],\action[\safe]) \in \actions} \sum_{\state<\prime> \in \states}\\ 
		&\quad\quad 
		\left(\inf_{d\in D} d (\state<\prime>)\right) \cdot 		\transitions(\state,\action[\reach],\action[\safe])(\state<\prime>) \cdot 		\strategy[\reach](\state)(\action[\reach]) \cdot 		\strategy[\safe](\state)(\action[\safe]) 
		\tag{Unfolding definition of Bellman operator}\\
		&= 
		\adjustlimits 
		\sup_{\strategy \in \strategies(\state)} \inf_{\strategy* \in \strategies*(\state)} \inf_{d \in D} \sum_{(\action[\reach],\action[\safe]) \in \actions} \sum_{\state<\prime> \in \states}\\ 
		&\quad\quad 
		d (\state<\prime>) \cdot 		\transitions(\state,\action[\reach],\action[\safe])(\state<\prime>) \cdot 		\strategy[\reach](\state)(\action[\reach]) \cdot 		\strategy[\safe](\state)(\action[\safe]) 
		\tag{\Cref{arg:UstarFixpointArg1}}\\
		&= 
		\adjustlimits 
		\inf_{d \in D} \sup_{\strategy \in \strategies(\state)} \inf_{\strategy* \in \strategies*(\state)}  \sum_{(\action[\reach],\action[\safe]) \in \actions} \sum_{\state<\prime> \in \states}\\ 
		&\quad\quad 
		d (\state<\prime>) \cdot 		\transitions(\state,\action[\reach],\action[\safe])(\state<\prime>) \cdot 		\strategy[\reach](\state)(\action[\reach]) \cdot 		\strategy[\safe](\state)(\action[\safe]) 
		\tag{\Cref{arg:UstarFixpointArg2}}\\
		&=\inf_{d \in D} \preop(d)(s).
		\tag{Collapsing the Bellman operator definition}		
	\end{align*}

	\begin{argument}\label{arg:UstarFixpointArg1}
		This step moves the $\inf_{d\in D}$ out of the summation, which is correct, since addition is a continuous operation.
	\end{argument}

	\begin{argument}\label{arg:UstarFixpointArg2}
		This step moves $\inf_{d\in D}$ to the front, first utilizing that infima can be switched.
		Then, to switch $\inf_{d \in D}$ and $\sup_{\strategy \in \strategies(\state)}$, we make use of the Minimax Theorem \cite{simonsMinimaxTheoremsTheir1995} which states that for a function $f: X \times Y \rightarrow \mathbb{R}$ that is concave-convex, it holds that $\sup_{x \in X}\inf_{y \in Y} f(x,y) = \inf_{y \in Y} \sup_{x \in X} f(x,y)$.
		$f$ is concave-convex if $f$ is concave for a fixed $y \in Y$ and convex for a fixed $x \in X$. 
		This holds, in particular, for bilinear functions, i.e. functions that are linear in both arguments. 		
		The function considered at this point is the following:
		\allowdisplaybreaks\begin{align*}
			f&(\strategy, d) = \inf_{\strategy* \in \strategies*(\state)}  \sum_{(\action[\reach], \action[\safe]) \in \actions} \sum_{\state<\prime> \in \states} \\
			&\quad\quad d(\state<\prime>) \cdot \transitions(\state, \action[\reach], \action[\safe])(\state<\prime>) \cdot \strategy(\state)(\action[\reach]) \cdot \strategy*(\state)(\action[\safe]).
		\end{align*}
		This function is indeed bilinear, since addition and multiplication are linear functions.
	\end{argument}

	Overall, we have shown that the sequence $(\upperBound<k>)_{k\in\mathbb{N}_0}$ is the result of applying an order-preserving, continuous function to the top-element of a complete partial order, and thus it converges to a (the greatest) fixpoint.

\end{proof}

\theoBVInoEC*

\begin{proof}
	This proof is an extension of the proof of~\cite[Theorem 1]{eisentrautValueIterationSimple2022} for turn-based games to the concurrent setting.
	The underlying idea is the same, and can be briefly summarized as follows:
	We assume towards a contradiction that $\upperBound<\star> \neq \valR$, and find a set $\ecStates$ that maximizes the difference between upper bound and value.
	Every pair of strategies leaving the set decreases the difference.
	However, $\valR$ and $\upperBound<\star>$ are fixpoints of the Bellman updates, from \cite[Theorem~1]{Alfaro2004} and \cref{lem:Ustar-is-fixpoint}, respectively.
	Consequently, optimal strategies need to remain in the set.
	However, in the absence of \ECs, optimal strategies have to leave the set, which yields a contradiction and proves that $\upperBound<\star> = \valR$.
	
	\subparagraph{Main challenge}
	The key difference to the proof of~\cite[Theorem 1]{eisentrautValueIterationSimple2022} is that we cannot argue about actions anymore, but have to consider mixed strategies.
	This significantly complicates notation.
	Additionally, and more importantly, the former proof crucially relied on the fact that for a state of player $\reach$, we know that its valuation is at least as large as that of any action, and dually for a state of player $\safe$, its valuation is at most as large as that of any action.
	In the concurrent setting, this is not true. The optimal strategies need not be maximizing nor minimizing the valuation and, moreover, they can be maximizing for one valuation and minimizing for another.
	Thus, we found a more general, and in fact simpler, way of proving that \enquote{no state in $\ecStates$ can depend on the outside}~\cite[Statement 5]{eisentrautValueIterationSimple2022} and deriving the contradiction.
	The crucial insight is that we can fix locally optimal strategies and then apply \Cref{arg:replaceStratOk}.
	
	\subparagraph{Notation for Bellman operator}
	Before we begin the formal proof, we establish a condensed notation for a number of terms in the Bellman operator:
	\[
	\rest(s,a,b,s',\strategy,\strategy*) = \transitions(\state,\action[\reach],\action[\safe])(\state<\prime>) \cdot 		\strategy[\reach](\state)(\action[\reach]) \cdot 		\strategy[\safe](\state)(\action[\safe]).
	\]
	Thus, the Bellman operator for some valuation $\valuation$ and pair of strategies $(\rho,\sigma)$ simplifies to 
	\[
	 \preop(\valuation)(s,\rho,\sigma) = \sum_{(\action[\reach],\action[\safe]) \in \actions} \sum_{\state<\prime> \in \states} 
	\valuation(s') \cdot \rest(s,a,b,s',\strategy,\strategy*).
	\]

	\subparagraph{The set $\ecStates$ with maximum difference}	
	We define the \emph{difference of a state} $s\in\states$ as $\Delta(s) \coloneqq \upperBound<\star>(s) - \valR(s)$. 
	Recall that $\valR$ is the least fixpoint and $\upperBound<\star>$ the greatest fixpoint  of $\preop$.
	Hence, we know that $\Delta(s) \geq 0$ for all states.
	Further, since we assume for contradiction that $\upperBound<\star> \neq \valR$, there exist states with $\Delta(s) > 0$.
	Thus, we can find a non-empty set of states with maximum difference:
	$\ecStates \coloneqq \{ \state \in \states | \Delta(\state) = \max_{\state \in \states} \Delta(\state)\}$.
	
	\subparagraph{A leaving pair of strategies decreases the difference}	
	Let $s \in \ecStates$ be a state in $\ecStates$.
	Let $(\strategy,\strategy*) \in (\strategies(\state) \times \strategies*(\state))$ be a pair of strategies such that $(s,\strategy,\strategy*) \; \leaves \;\ecStates$.
	Then, following this pair of strategies for one step decreases the difference, formally
	\begin{equation}\label{eq:differenceInDifference}
		\preop(\Delta)(s,\strategy,\strategy*) < \Delta(s).
	\end{equation}
		
	We prove this using the following chain of equations:
	\begin{align*}
		\allowdisplaybreaks
		&\preop(\Delta)(s,\strategy,\strategy*) \\
		&=\sum_{(\action[\reach],\action[\safe]) \in \actions} \sum_{\state<\prime> \in \states} 
		\Delta(s') \cdot \rest(s,a,b,s',\strategy,\strategy*) \tag{Definition of Bellman operator}\\
		&< \Delta(s). \tag{\Cref{arg:forStatement6ofEKKW}}
	\end{align*}
	\begin{argument}\label{arg:forStatement6ofEKKW}
		By assumption, we have that there exists a $t \in \states \setminus \ecStates$ such that this $t$ is reached with positive probability under the exiting strategies, i.e.\ 
		$\sum_{(\action[\reach],\action[\safe]) \in \actions} \rest(s,a,b,t,\strategy,\strategy*) > 0$.
		For this $t$ outside of $\ecStates$, we have $\Delta(t) < \Delta(s)$, since $\ecStates$ is defined as the set of  all states with maximum difference.
		Further, no state can have a difference larger than $\Delta(s)$.
		Furthermore, the remaining terms in the sum that the differences $\Delta(s')$ are multiplied with are a probability distribution, formally $\sum_{(\action[\reach],\action[\safe]) \in \actions} \sum_{\state<\prime> \in \states} 
		\rest(s,a,b,s',\strategy,\strategy*) = 1$.
		Thus, if all differences were equal to $\Delta(s)$, the sum would yield $\Delta(s)$. As one of the summands is smaller than $\Delta(s)$ and all others are at most $\Delta(s)$, we get that the sum has to be smaller than $\Delta(s)$.
	\end{argument}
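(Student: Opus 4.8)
The plan is to establish the one-step contraction $\preop(\Delta)(s,\strategy,\strategy*) < \Delta(s)$ by reading the Bellman term as a genuine \emph{convex combination} of the successor differences and then exploiting that $s$ lives in the maximizing set while the leaving pair routes positive probability to a strictly smaller difference. Recall $\Delta(s') = \upperBound<\star>(s') - \valR(s')$ and $s \in \ecStates = \{s' \mid \Delta(s') = \max_{s''} \Delta(s'')\}$. First I would collapse the double sum over action pairs into per-successor weights: define $w(s') \coloneqq \sum_{(a,b) \in \actions} \rest(s,a,b,s',\strategy,\strategy*)$, so that $\preop(\Delta)(s,\strategy,\strategy*) = \sum_{s' \in \states} \Delta(s')\, w(s')$. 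Each weight is non-negative, and the weights sum to one: factoring $\rest(s,a,b,s',\strategy,\strategy*) = \strategy(s)(a)\cdot\strategy*(s)(b)\cdot\transitions(s,a,b)(s')$ and summing first over $s'$ (each $\transitions(s,a,b)$ is a distribution) and then over $a,b$ (since $\strategy(s)$ and $\strategy*(s)$ are distributions) yields $\sum_{s'} w(s') = 1$. Thus $(w(s'))_{s'\in\states}$ is a probability distribution and $\preop(\Delta)(s,\strategy,\strategy*)$ is an honest average.

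Next I would pin down the two inequalities that force this average to drop. Because $s \in \ecStates$ and $\ecStates$ collects exactly the states attaining $\max_{s'} \Delta(s')$, I immediately get $\Delta(s') \leq \Delta(s)$ for every $s' \in \states$. For strictness I extract a witness from the leaving hypothesis: by definition $(s,\strategy,\strategy*) \; \leaves \; \ecStates$ means $\destination(s,\strategy,\strategy*) \cap (\states \setminus \ecStates) \neq \emptyset$, so there is $t \notin \ecStates$ with $t \in \destination(s,\strategy,\strategy*)$. Unfolding the lifting of $\destination$ to strategies produces actions $a \in \support(\strategy(s))$ and $b \in \support(\strategy*(s))$ with $t \in \destination(s,a,b) = \support(\transitions(s,a,b))$, so all three factors $\strategy(s)(a)$, $\strategy*(s)(b)$, $\transitions(s,a,b)(t)$ are strictly positive. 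Hence $w(t) \geq \strategy(s)(a)\cdot\strategy*(s)(b)\cdot\transitions(s,a,b)(t) > 0$, and since $t \notin \ecStates$ while $\ecStates$ is the argmax set, $\Delta(t) < \Delta(s)$.

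Finally I would conclude by writing the slack as a non-negative combination and isolating the strictly positive $t$-term:
\begin{align*}
\Delta(s) - \preop(\Delta)(s,\strategy,\strategy*) &= \sum_{s' \in \states} \big(\Delta(s) - \Delta(s')\big)\, w(s') \\
&\geq \big(\Delta(s) - \Delta(t)\big)\, w(t) > 0,
\end{align*}
where the middle step discards every summand (each non-negative, as $\Delta(s') \leq \Delta(s)$ and $w(s') \geq 0$) except the $t$-term, and the last step uses $\Delta(t) < \Delta(s)$ together with $w(t) > 0$. I expect the only genuinely delicate point to be the \textbf{weight-positivity} step: correctly translating the predicate $\leaves$, which is stated through the strategy-lifted successor set $\destination(s,\strategy,\strategy*)$, into strict positivity of the product $\strategy(s)(a)\cdot\strategy*(s)(b)\cdot\transitions(s,a,b)(t)$. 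Everything else — the normalization $\sum_{s'} w(s') = 1$ and the convex-combination estimate — is routine once the averaging representation is in place.
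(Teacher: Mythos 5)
Your proof is correct and follows essentially the same route as the paper: read $\preop(\Delta)(s,\strategy,\strategy*)$ as a convex combination of successor differences, note all differences are at most $\Delta(s)$ by maximality of $\ecStates$, and use the leaving hypothesis to obtain a successor $t \notin \ecStates$ with strictly smaller difference and strictly positive weight. The only difference is one of polish: you make explicit the step the paper leaves implicit, namely unfolding $\destination(s,\strategy,\strategy*)$ into actions in the supports of $\strategy(s)$ and $\strategy*(s)$ to certify $w(t) > 0$, which is a welcome tightening rather than a deviation.
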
	
	
	\subparagraph{Without non-trivial \ECs, $\ecStates$ must be left}
	We have that $\ecStates \cap (\winning \cup \success) = \emptyset$: The difference is 0 for target states because both the value and the upper bound are equal to 1; and the difference is 0 for the sure winning region of player $\safe$, since the upper bound and value are equal to 0 (see \Cref{eq:upperBoundRec}). 
	Thus, since by assumption there are no \ECs in $\states \setminus (\winning \cup \success)$, the set $\ecStates$ cannot contain an \EC.
	Consequently, there exists a state $s \in \ecStates$ such that for all pairs of available actions $(a,b) \in \actionAssignment[\reach](s) \times \actionAssignment[\safe](s)$, we have $\support(\transitions(s,a,b)) \cap (\states \setminus \ecStates) \neq \emptyset$, i.e.\ there is a successor state outside of $\ecStates$.
	This is the case because, if all states had a pair of actions that stays in $\ecStates$, then there exists a pair of strategies that keeps the play inside a subset of $\ecStates$, which would then form an \EC. For a formal proof, we refer to \cite[Lemma 2]{eisentrautValueIterationSimple2022}. Note that, while their proof is for turn-based games, the definition of \EC is a graph theoretic notion where, intuitively, the players \enquote{work together} (formally, it is only about the existence of an edge in the underlying hypergraph), and thus the proof is applicable to \CSGs, too.
	In the following, we let $s$ denote such a state where all strategies are leaving.

	\subparagraph{Notation for locally optimal strategies}
	For any state and valuation, locally optimal strategies for both players exist.
	We establish a shorthand for the locally optimal strategies in state $s$ (the one obtained in the previous step) with respect to $\upperBound<\star>$ and $\valR$.
	For player $\reach$ and valuation $\upperBound<\star>$, we denote a locally optimal strategy by
	\[
	\rhoU \in \adjustlimits\argmax_{\rho\in\strategies(\state)} \inf_{\strategy* \in \strategies*(\state)} \preop(\upperBound<\star>)(s,\rho,\sigma).
	\]
	Similarly we denote a locally optimal strategy of player $\safe$ with respect to $\upperBound<\star>$ by 
	\[
	\sigmaU \in \adjustlimits\argmin_{\strategy* \in \strategies*(\state)} \sup_{\rho\in\strategies(\state)} \preop(\upperBound<\star>)(s,\rho,\sigma).
	\]
	Analogously, we define locally optimal strategies with respect to $\valR$, namely $\rhoV$ and $\sigmaV$, obtained by replacing $\upperBound<\star>$ with $\valR$ in the above definition.
	
	\subparagraph{Deriving the contradiction}
	Recall that $s \in \ecStates$ is a state where all available pairs of actions, and thus all strategies, leave $\ecStates$.
	We derive the contradiction $\Delta(s) < \Delta(s)$. %
	\begin{align*}
		\allowdisplaybreaks
		\Delta(s) &= \upperBound<\star>(s) - \valR(s) \tag{Definition of $\Delta$}\\
		&= \preop(\upperBound<\star>)(s) - \valR(s)  \tag{$\upperBound<\star>$ is fixpoint by \Cref{lem:Ustar-is-fixpoint}}\\
		&= \preop(\upperBound<\star>)(s,\rhoU,\sigmaU) - \valR(s) \tag{$(\rhoU,\sigmaU)$ locally optimal w.r.t. $\upperBound<\star>$}\\
		&\leq \preop(\upperBound<\star>)(s,\rhoU,\sigmaV) - \valR(s) \tag{\Cref{arg:replaceStratOk}}\\
		&= \preop(\upperBound<\star>)(s,\rhoU,\sigmaV) - \preop(\valR)(s) \tag{$\valR$ is fixpoint~\cite[Theorem~1]{Alfaro2004}}\\
		&= \preop(\upperBound<\star>)(s,\rhoU,\sigmaV) -\preop(\valR)(s,\rhoV,\sigmaV) \tag{$(\rhoV,\sigmaV)$ locally optimal w.r.t. $\valR$}\\
		&\leq \preop(\upperBound<\star>)(s,\rhoU,\sigmaV) -\preop(\valR)(s,\rhoU,\sigmaV) \tag{\Cref{arg:replaceStratOk}}\\
		&= \preop(\Delta)(s,\rhoU,\sigmaV) \tag{\Cref{arg:rewriteDelta}}\\
		&< \Delta(s). \tag{$(s,\rhoU,\sigmaV) \; \leaves \; \ecStates$ and \Cref{eq:differenceInDifference}}
	\end{align*}
	
	\begin{argument}\label{arg:replaceStratOk}
		This argument is used in two steps in the above chain of equations.
		For the first usage, observe that $\sigmaV$ can be at most as good as the optimal $\sigmaU$.
		More formally, recall $\sigmaU$ was chosen as the $\argmin$ of the Bellman operator with respect to $\upperBound<\star>$. 
		Thus, $\preop(\upperBound<\star>)(s,\rhoU,\sigmaU) \leq \preop(\upperBound<\star>)(s,\rhoU,\sigmaV)$.
		
		For the second usage, by the analogous argument we have $\preop(\valR)(s,\rhoV,\sigmaV) \geq \preop(\valR)(s,\rhoU,\sigmaV)$. 
		Since this term is the subtrahend of the subtraction, the overall expression can only become greater.		
	\end{argument}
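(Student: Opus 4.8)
The plan is to obtain both inequalities purely from the saddle-point properties of the locally optimal local strategies, whose existence for every matrix game is guaranteed by Nash's Theorem (invoked when matrix games were introduced). First I would record that $\preop(\valuation)(\state)$ is the value of the matrix game $\mathsf{Z}_{\valuation}(\state)$, so by the minimax identity it equals both $\sup_{\rho}\inf_{\sigma}\preop(\valuation)(\state,\rho,\sigma)$ and $\inf_{\sigma}\sup_{\rho}\preop(\valuation)(\state,\rho,\sigma)$. From this I extract the two one-sided guarantees I will actually use: a Player-$\reach$-optimal strategy secures the value against \emph{every} opponent response, i.e. $\preop(\valuation)(\state,\rho,\sigma)\geq\preop(\valuation)(\state)$ for all $\sigma$; dually a Player-$\safe$-optimal strategy caps the payoff, i.e. $\preop(\valuation)(\state,\rho,\sigma)\leq\preop(\valuation)(\state)$ for all $\rho$; and the chosen optimal pair attains the value exactly.

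For the first usage I would instantiate the maximizer guarantee at the valuation $\upperBound<\star>$. Since $\rhoU$ is locally optimal for Player $\reach$ with respect to $\upperBound<\star>$, we have $\preop(\upperBound<\star>)(\state,\rhoU,\sigma)\geq\preop(\upperBound<\star>)(\state)$ for every $\sigma\in\strategies*(\state)$, in particular for $\sigma=\sigmaV$; and since $(\rhoU,\sigmaU)$ attains the value, $\preop(\upperBound<\star>)(\state,\rhoU,\sigmaU)=\preop(\upperBound<\star>)(\state)$. Chaining these two facts gives exactly $\preop(\upperBound<\star>)(\state,\rhoU,\sigmaU)\leq\preop(\upperBound<\star>)(\state,\rhoU,\sigmaV)$. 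Dually, for the second usage I would instantiate the minimizer cap at $\valR$: since $\sigmaV$ is locally optimal for Player $\safe$ with respect to $\valR$, we have $\preop(\valR)(\state,\rho,\sigmaV)\leq\preop(\valR)(\state)$ for every $\rho$, in particular $\rho=\rhoU$, while $(\rhoV,\sigmaV)$ attains the value, so $\preop(\valR)(\state,\rhoV,\sigmaV)=\preop(\valR)(\state)\geq\preop(\valR)(\state,\rhoU,\sigmaV)$. Substituted into the contradiction chain, the first bound can only raise the minuend and the second can only shrink the subtrahend, so both substitutions preserve the $\leq$ direction.

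The statement is elementary, and the only genuine care point is to avoid over-reading the definitions of $\rhoU,\sigmaU,\rhoV,\sigmaV$ as plain $\argmax/\argmin$ of the nested minimax. A minimizer-optimal $\sigma$ need \emph{not} minimize $\preop(\valuation)(\state,\rho,\cdot)$ for an arbitrarily fixed $\rho$; it only guarantees at most the game value against every $\rho$. I therefore plan to phrase the argument through the game-value guarantees (``$\rhoU$ secures the value'', ``$\sigmaV$ caps the value'') together with attainment of the value by the optimal pair, rather than through a pointwise best-response claim that would secretly re-invoke the equilibrium. All three ingredients — existence of optimal local strategies, the one-sided value guarantees, and attainment of the value by the optimal pair — are exactly the content of the Nash/minimax theorem already available for the matrix games $\mathsf{Z}_{\valuation}(\state)$, so no additional machinery is required.
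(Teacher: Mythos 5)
Your proposal is correct and follows essentially the same route as the paper: both arguments rest on the saddle-point properties of the locally optimal local strategies, namely that $\rhoU$ secures the value of $\mathsf{Z}_{\upperBound<\star>}(s)$ against every counter-strategy (in particular $\sigmaV$) while $(\rhoU,\sigmaU)$ attains it exactly, and dually that $\sigmaV$ caps the payoff of $\mathsf{Z}_{\valR}(s)$ against every $\rho$ while $(\rhoV,\sigmaV)$ attains it. Your explicit care point --- that an optimal $\sigma$ need not be a pointwise best response to a fixed $\rho$ --- is a fair and correct reading of what the paper's brief \enquote{$\argmin$} phrasing actually relies on, but it does not change the substance of the argument.
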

	
	\begin{argument}\label{arg:rewriteDelta}
		This step follows from expanding the definition of the Bellman operator, rearranging the sums and collapsing the definition of Bellman operator.
		Formally, for all states q and strategy pairs $(\rho,\sigma)$ it holds that 
		\begin{align*}
			\allowdisplaybreaks
			\preop(\Delta)&(q,\rho,\sigma) = \sum_{(\action[\reach],\action[\safe]) \in \actions} \sum_{\state<\prime> \in \states} 
			\Delta(s') \cdot \rest(s,a,b,s',\strategy,\strategy*)
			\tag{Definition of Bellman operator}\\
			&=  \sum_{(\action[\reach],\action[\safe]) \in \actions} \sum_{\state<\prime> \in \states} 
			\left(\upperBound<\star>(s') - \valR(s')\right) \cdot \rest(s,a,b,s',\strategy,\strategy*)
			\tag{Definition of $\Delta$}
			\\
			&= \left(\sum_{(\action[\reach],\action[\safe]) \in \actions} \sum_{\state<\prime> \in \states} \upperBound<\star>(s') \cdot \rest(s,a,b,s',\strategy,\strategy*)\right) ~ - \\
			&\phantom{=} \left(\sum_{(\action[\reach],\action[\safe]) \in \actions} \sum_{\state<\prime> \in \states} \valR(s') \cdot \rest(s,a,b,s',\strategy,\strategy*)\right)
			\tag{Splitting the sum}
			\\
			&= \preop(\upperBound<\star>)(q,\rho,\sigma) - \preop(\valR)(q,\rho,\sigma).
			\tag{Definition of Bellman operator}
		\end{align*}
	\end{argument}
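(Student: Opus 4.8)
The plan is to prove the identity purely from the linearity of the fixed-strategy Bellman expression in its valuation argument. First I would recall that, once the strategy pair $(\rho,\sigma)$ is fixed, the quantity $\preop(\valuation)(q,\rho,\sigma)$ is by definition the finite double sum $\sum_{(\action[\reach],\action[\safe]) \in \actions} \sum_{\state<\prime> \in \states} \valuation(\state<\prime>) \cdot \rest(q,\action[\reach],\action[\safe],\state<\prime>,\rho,\sigma)$, where the coefficient $\rest$ bundles the transition probability with the two strategy weights and, crucially, does \emph{not} depend on the valuation $\valuation$. Hence the map $\valuation \mapsto \preop(\valuation)(q,\rho,\sigma)$ is a linear functional of the valuation vector, and this is the only fact driving the whole argument.

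Second, I would substitute the definition $\Delta(\state<\prime>) = \upperBound<\star>(\state<\prime>) - \valR(\state<\prime>)$ into that sum and distribute each product $\big(\upperBound<\star>(\state<\prime>) - \valR(\state<\prime>)\big)\cdot \rest$ into $\upperBound<\star>(\state<\prime>)\cdot \rest - \valR(\state<\prime>)\cdot \rest$. Since both index sets $\actions$ and $\states$ are finite, the doubly indexed sum then splits additively into the sum carrying $\upperBound<\star>$ minus the sum carrying $\valR$. Collapsing each of these two sums back into the definition of the fixed-strategy Bellman expression yields exactly $\preop(\upperBound<\star>)(q,\rho,\sigma) - \preop(\valR)(q,\rho,\sigma)$, which is the claim; this is precisely the chain of equalities already displayed in the statement, justified step by step as ``definition of Bellman operator'', ``definition of $\Delta$'', ``splitting the sum'', and ``definition of Bellman operator''.

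There is no genuine obstacle here: the whole thing is a one-line consequence of linearity over a finite sum. The only point I would take care to flag — and the reason the identity is asserted only for a \emph{fixed} strategy pair rather than for the full operator $\preop(\valuation)(q) = \sup_{\rho}\inf_{\sigma}\preop(\valuation)(q,\rho,\sigma)$ — is that linearity is destroyed by the outer $\sup$ and $\inf$. Consequently the difference $\preop(\upperBound<\star>)(q) - \preop(\valR)(q)$ of the two \emph{full} operators need not equal $\preop(\Delta)(q)$; the decomposition works only because both sides are evaluated at the \emph{same} committed strategies $\rho$ and $\sigma$. This is exactly why, in the surrounding proof of \cref{theo:BVInoEC}, the locally optimal strategies $\rhoU$ and $\sigmaV$ are fixed \emph{before} this argument is invoked, so that the linear identity can be applied legitimately.
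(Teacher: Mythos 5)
Your proof is correct and follows essentially the same route as the paper: expand the definition of $\preop(\valuation)(q,\rho,\sigma)$ as a finite double sum, substitute $\Delta = \upperBound<\star> - \valR$, split the sum by linearity, and collapse back into two Bellman expressions. Your closing remark that the identity fails for the full operator (since $\sup$/$\inf$ destroy linearity) and therefore requires fixed strategies is a correct and worthwhile observation, consistent with how the surrounding proof fixes $\rhoU$ and $\sigmaV$ before invoking this argument.
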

	
	\subparagraph{Summary}
	Starting from the assumption that $\upperBound<\star> \neq \valR$, we derived that there exists a set of states $\ecStates$ where the difference $\Delta$ between upper bound and value is maximized. 
	Further, a pair of strategies leaving $\ecStates$ decreases this difference. 
	However, since there are no ECs in $\ecStates$, there has to be a state where the optimal strategies for $\valR$ and $\upperBound<\star>$ leave, which allows us to derive a contradiction.
	Thus, the initial assumption is false, and we have $\upperBound<\star> = \valR$.
\end{proof}

\subsection{Convergence without Bloated End Components}\label{apx:convergence_no_bad_ecs}

\negateWeakDom*
\begin{proof}
	We only provide the proof for Player $\reach$, as the other one is analogous by exchanging the names of the strategy sets, replacing $\strategies$ with $\strategies*$ and vice versa.

	We assume that we do not have $\strategies[2] \dominates_{\valuation, \strategies[\safe]<\prime>} \strategies[1]$ under the set of counter-strategies $\strategies[\safe]<\prime>$ with respect to $\valuation$.
	Writing out the definition, this means that $\forall \strategy[1] \in \strategies[1]. \exists \strategy[2] \in \strategies[2]:$
	\begin{enumerate}
		\item[(i)] 	$\inf_{\strategy* \in \strategies[\safe]<\prime>}\preop(\valuation)(\state,\strategy[2], \strategy*) > \inf_{\strategy* \in \strategies[\safe]<\prime>}\preop(\valuation)(\state, \strategy[1], \strategy*)$, or
		\item[(ii)] $\forall \sigma^{\prime} \in \strategies[\safe]<\prime>$ we have $\preop(\valuation)(\state, \strategy[2], \sigma^{\prime}) \geq \preop(\valuation)(\state, \strategy[1], \sigma^{\prime})$.
	\end{enumerate}

	Our goal is to have that $\strategies[1] \dominateseq_{\valuation, \strategies[\safe]<\prime>} \strategies[2]$, formally:
	Formally, $\exists \strategy[2] \in \strategies[2]. \forall \strategy[1] \in \strategies[1]:$
	\[
	\inf_{\strategy* \in \strategies[\safe]<\prime>}\preop(\valuation)(\state,\strategy[1], \strategy*) \leq \inf_{\strategy* \in \strategies[\safe]<\prime>}\preop(\valuation)(\state, \strategy[2], \strategy*).
	\]
	
	Both conditions of negated weak dominance imply our goal.
	The only remaining problem is the order of quantifiers. 
	However, the choice of $\rho_2$ does not depend on $\rho_1$, and we can always pick $\rho_2$ as the strategy that maximizes $\inf_{\strategy* \in \strategies[\safe]<\prime>}\preop(\valuation)(\state,\strategy[2], \strategy*)$.	
	Thus, we can exchange the order of quantifiers and prove our goal.
\end{proof}

\negateBEC*
\begin{proof}
	
	Since $\ecStates<\prime>$ is not a \BEC, we know that at some $\state \in \ecStates<\prime>$ it holds that $\badActs_{\upperBound<\star>}(\ecStates<\prime>, \state)= \emptyset$. 
	Fix $\state$ to be such a state.
	Every strategy $\strategy<\prime> \in \strategies(\states)$ must violate at least one of the 3 conditions of \cref{def:badAction}.
	We write out the negations:
	\begin{enumerate}
		\item[(i)] $\strategy<\prime> \notin \stayStratsR(\ecStates<\prime>, \state)$, i.e.\ the strategy is leaving $\strategy<\prime> \in \leaveStratsR(\ecStates<\prime>,\state)$. 
		\item[(ii)] We do not have $\strategies(\state) \setminus \{\strategy<\prime>\} \dominateseq_{\strategies*(\state)} \{\strategy<\prime>\}$. By contraposition of \cref{lem:negate-weak-dominance}, this implies $\{\strategy<\prime>\} \dominates_{\strategies*(\state)}\strategies(\state) \setminus \{\strategy<\prime>\}$, so the strategy is sub-optimal.
		\item[(iii)] We do not have $\leaveStratsR(\ecStates<\prime>, \state) \dominates_{\strategies*(\state)} \{\strategy<\prime>\}$. By \cref{lem:negate-weak-dominance}, this implies $\{\strategy<\prime>\} \dominateseq_{\strategies*(\state)}\leaveStratsR(\ecStates<\prime>, \state)$, i.e.\ there are leaving strategies that are not worse than $\rho'$. Note that in particular, this implies that $\leaveStratsR(\ecStates<\prime>, \state)$ is non-empty.
	\end{enumerate}

	Our assumption gives us the disjunction over the three violated conditions.
	We proceed by a case distinction, always assuming that all strategies violate a certain condition, which allows us to prove our goal, or, if there exists a strategy satisfying the condition, we continue with the next one.
	
	\textbf{Case \enquote{Not (i)}:} 
	If all strategies violate Condition (i), that means all strategies are leaving, i.e.\ $\leaveStratsR(s) = \strategies(s)$. Thus, since there always exist optimal strategies, by picking an optimal $\rho \in \strategies$ we naturally have $\strategies(s) \dominateseq_{\valuation,\strategies*(s)} \{\rho\}$.
	
	\textbf{Case \enquote{(i), but not (iii)}:}
	We assume there exists strategies that satisfies Condition (i), but all strategies that satisfy it violate Condition (iii).
	This means that for every non-leaving strategy, the set of leaving strategies is not worse than it. 
	As this holds for all non-leaving strategies, we have $\stayStratsR(s) \dominateseq_{\valuation,\strategies*(s)} \leaveStratsR(s)$.
	Using that $\strategies(s) = \stayStratsR(s) \cup \leaveStratsR(s)$ and the set of leaving strategies trivially is not worse than itself, we obtain:
	$\strategies(s) \dominateseq_{\valuation,\strategies*(s)} \leaveStratsR(s)$.
	Moreover, the set of leaving strategies is non-empty, since the definition of not worse requires that there exists a strategy in the right-hand set.
	This proves our goal.
	
	\textbf{Case \enquote{(i) and (iii), but not (ii)}:}
	We assume there exists strategies that satisfy Condition (i) and (iii), but all these strategies violate Condition (ii). 
	This case cannot happen, and below we derive a contradiction.
	This completes our cast distinction, since every strategy has to violate at least one of the three conditions.
	
	Our assumption is that there exists a non-leaving strategy that weakly dominates the set of all leaving strategies. 
	However, every such strategy is suboptimal, as by violating Condition (ii) it is weakly dominated by all other strategies.
	This is a contradiction, because then there are no optimal strategies.
	More formally, if we assume the optimal strategy is leaving, this is a contradiction, because there exists a non-leaving strategy dominating the set of leaving strategies.
	And if we assume the optimal strategy is non-leaving, this is a contradiction, because every non-leaving strategy is weakly dominated by the set of others.
	
\end{proof}

\nonConvImpliesBEC*
\begin{proof}
	\subparagraph{Intuition and outline:}
This proof builds on the proof of \cref{theo:BVInoEC}.
There, we constructed a set $\mathcal{X}$ maximizing the difference between $\upperBound<\star>$ and $\valR$ and showed that if there is a pair of optimal strategies leaving leaving $\mathcal{X}$, then we can derive a contradiction: The upper bound decreases, which contradicts the fact that it is a fixpoint.
In the context of the other proof, that allowed us to show that without \ECs, \VI converges, because without \ECs it is impossible to have a set of states where all optimal strategies stay in that set.

In the presence of \ECs, states can indeed have a positive difference between $\upperBound<\star>$ and $\valR$, see e.g.\ \cref{ex:potentialProblemECs}. 
Our goal is to prove that at least one of these \ECs is bloated.
Thus, we assume for contradiction that no \EC is bloated under $\upperBound<\star>$.
Thus, by \cref{lem:negate-BEC}, there is an optimal leaving strategy for player $\reach$.
Using that, we can repeat the argument from \cref{theo:BVInoEC}, showing that in this case $\upperBound<\star>$ would decrease. Again, this is a contradiction because it is a fixpoint of applying Bellman updates (\cref{lem:Ustar-is-fixpoint}).
Thus, the initial assumption that no \EC is bloated is false, and we can conclude that there exists a \BEC.

	\subparagraph{Establishing the Context}
	As in the proof of \cref{theo:BVInoEC}, let $\mathcal{X}:= \{ \state \in \states \;|\; \Delta(\state) = \max_{\state \in \states} \Delta(\state)\}$ be the set of states with maximum difference.
	We denote the maximum by $\Delta^\textsf{max}$, and our assumption yields that $\Delta^\textsf{max}>0$.
	Note that this implies that $\mathcal{X} \cap (\winning \cup \success) = \emptyset$, since for those states, their value is set correctly by initialization, and their difference is 0. Thus, if we find a \BEC that is a subset of $\mathcal{X}$, it also satisfies the additional condition of being non-trivial, i.e.\ not in $(\winning \cup \success)$. 
	The contraposition of \cref{theo:BVInoEC} yields that there has to be an \EC in $\ecStates$.

	\subparagraph{Bottom MECs}
	To derive the contradiction, in the following, we consider \ECs with a particular property, namely \ECs that are \emph{bottom in $\ecStates$}. A \MEC $\ecStates<\prime>$ is bottom in $\ecStates$ if the successors of a pair of strategies that leaves the \MEC reaches states outside of $\ecStates$ with positive probability. Intuitively, a bottom \MEC in $\ecStates$ is a \MEC, such that after leaving it, none of the successors is part of another \MEC in $\ecStates$. One can compute such \ECs using the \MEC decomposition of $\ecStates$, ordering them topologically and picking one at the end of a chain.
	
	Let $\ecStates<\prime>$ be a bottom \MEC with $\ecStates<\prime> \subseteq \ecStates$. $\ecStates<\prime>$ exists because by assumption $\ecStates$ contains at least one \EC, thus, there also has to exist an \EC that is bottom in $\ecStates$.

	\subparagraph{Optimal Leaving Strategies in Non-\BECs}
	We use the assumption for contradiction to say that $\ecStates<\prime>$ is not bloated with respect to~$\upperBound<\star>$.
	Then, using \cref{lem:negate-BEC}, we know that there exists a state $s \in \ecStates<\prime>$ where an optimal strategy $\rhoU$ exists that is leaving $\ecStates<\prime>$.
	Moreover, since $\ecStates<\prime>$ is a bottom MEC in $\ecStates$, we also have that it is leaving with respect to $\ecStates$.

	\subparagraph{Deriving the Contradiction}
	Using these facts, we can exactly repeat the argument used in the proof of \cref{theo:BVInoEC} under the paragraph-heading \enquote{Deriving the Contradiction}.
	Recall we denote locally optimal strategies with respect to $\upperBound<\star>$ by $\rhoU, \sigmaU$ (and we just proved $\rhoU$ is leaving for all counter-strategies), and analogously locally optimal strategies with respect to $\valR$, by $\rhoV$ and $\sigmaV$.
	We highlight that \cref{eq:differenceInDifference}, \Cref{arg:replaceStratOk} and \Cref{arg:rewriteDelta} from \cref{theo:BVInoEC} are applicable in the context of this proof, too.
	\begin{align*}
		\allowdisplaybreaks
		\Delta(s) &= \upperBound<\star>(s) - \valR(s) \tag{Definition of $\Delta$}\\
		&= \preop(\upperBound<\star>)(s) - \valR(s)  \tag{$\upperBound<\star>$ is fixpoint by \Cref{lem:Ustar-is-fixpoint}}\\
		&= \preop(\upperBound<\star>)(s,\rhoU,\sigmaU) - \valR(s) \tag{$(\rhoU,\sigmaU)$ locally optimal w.r.t. $\upperBound<\star>$}\\
		&\leq \preop(\upperBound<\star>)(s,\rhoU,\sigmaV) - \valR(s) \tag{\Cref{arg:replaceStratOk} in \cref{theo:BVInoEC}}\\
		&= \preop(\upperBound<\star>)(s,\rhoU,\sigmaV) - \preop(\valR)(s) \tag{$\valR$ is fixpoint~\cite[Theorem~1]{Alfaro2004}}\\
		&= \preop(\upperBound<\star>)(s,\rhoU,\sigmaV) -\preop(\valR)(s,\rhoV,\sigmaV) \tag{$(\rhoV,\sigmaV)$ locally optimal w.r.t. $\valR$}\\
		&\leq \preop(\upperBound<\star>)(s,\rhoU,\sigmaV) -\preop(\valR)(s,\rhoU,\sigmaV) \tag{\Cref{arg:replaceStratOk} in \cref{theo:BVInoEC}}\\
		&= \preop(\Delta)(s,\rhoU,\sigmaV) \tag{\Cref{arg:rewriteDelta} in \cref{theo:BVInoEC}}\\
		&< \Delta(s). \tag{$(s,\rhoU,\sigmaV) \; \leaves \; \ecStates$ and \Cref{eq:differenceInDifference}}
	\end{align*}

	Now that we have derived a contradiction, our initial assumption that all \ECs are not \BECs is wrong, so we know there exists a \BEC $\mathcal{X'} \subseteq (\states \setminus (\success \cup \winning))$ with respect to~$\upperBound<\star>$.
	This concludes the proof.
	
	As a side note, we remark that it is indeed possible that there is only one \BEC that causes many states, even some not in an \EC, to have a positive difference $\Delta$.
	For an example, we refer to~\cite[Fig.~4]{eisentrautValueIterationSimple2022}.

\end{proof}

\subsection{Soundness of \procname{DEFLATE}}\label{apx:soundness_deflate}

	The following is a technical lemma that is needed to show the soundness of deflation.
	Intuitively, it says that the value of all states in an EC needs to depend on some exit.
	Note that there can be states whose value is higher than their own exit value, namely if they can reach a better exit.
	However, this cannot be the case for all states, but there must be some whose value is less than or equal to their exit value (first condition), and in fact no state can have a higher value than these states that actually depend on exiting (second condition).
	The proof is very technical, as essentially it requires unfolding all the definitions, and thereby also unfolding all the included case distinctions.

	\begin{lemma}[No state has a larger value than that of an exit from its EC]\label{lem:exitPossibleUnderVr}
		Let $\ecStates \subseteq \states \setminus (\success \cup \winning)$ be an \EC. Then, it holds that 
		\begin{itemize}
			\item[(i)] $\ecStates<\prime> \coloneqq \{\state \in \ecStates \mid \valR(\state) \leq \exitVal[\valR](\ecStates, \state)\} \neq \emptyset$, and
			\item[(ii)] $\max_{\state \in \ecStates<\prime>}\valR(\state) \geq \max_{\state \in \ecStates \setminus \ecStates<\prime>} \valR(\state)$.
		\end{itemize}
	\end{lemma}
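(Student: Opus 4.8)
The plan is to reduce both items to a single statement about the states of largest value. Put $M := \max_{s \in \ecStates} \valR(s)$ and $\ecStates_M := \{s \in \ecStates \mid \valR(s) = M\}$. I claim it suffices to exhibit one $e \in \ecStates_M$ with $\valR(e) \leq \exitVal[\valR](\ecStates, e)$, i.e.\ one maximizer lying in $\ecStates'$. Such an $e$ immediately witnesses (i) $\ecStates' \neq \emptyset$; and for (ii) we get $\max_{s \in \ecStates'} \valR(s) \geq \valR(e) = M \geq \max_{s \in \ecStates \setminus \ecStates'} \valR(s)$, because $M$ is the maximum over all of $\ecStates$ (if $\ecStates \setminus \ecStates' = \emptyset$, (ii) is vacuous). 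So the whole lemma follows once a maximal-value state is placed in $\ecStates'$. Note also that since the value of the exiting sub-game is non-negative by the explicit $\max(0,\cdot)$ in \cref{def:exitingSG}, if $M = 0$ then every maximizer already lies in $\ecStates'$; hence I may assume $M > 0$.

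\textbf{Degenerate exit values.} First I would dispatch the maximizers whose exit value falls in the first branch of \cref{def:exitVal}. If some $e \in \ecStates_M$ has $\badActs_{\valR}(\ecStates, e) = \emptyset$ or $\counter_{\valR}(\ecStates, e) = \emptyset$, then by definition $\exitVal[\valR](\ecStates, e) = \sup_{\rho \in \strategies(e)} \inf_{\sigma \in \strategies*(e)} \preop(\valR)(e, \rho, \sigma) = \preop(\valR)(e)$, which equals $\valR(e)$ because $\valR$ is a fixpoint of the Bellman operator \cite[Thm.~1]{Alfaro2004}. Thus $e \in \ecStates'$ and we are done.

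\textbf{The substantial case.} It remains to handle the case where every maximizer has both a hazardous and a trapping strategy, so that $\exitVal[\valR](\ecStates, e) = \val(\exitingSG(e))$ for all $e \in \ecStates_M$. The key idea is that the value $M > 0$ cannot be generated inside the end component: by hypothesis $\ecStates \subseteq \states \setminus (\success \cup \winning)$, so $\ecStates$ contains no target, and $\valR$ is the \emph{least} fixpoint of $\preop$ (\cref{theo:convergentUnderApprox}, \cite[Thm.~1]{Alfaro2004}); hence any play Player $\safe$ can confine to $\ecStates_M$ forever has reachability probability $0$. Since every state of $\ecStates$ has value at most $M$, with equality exactly on $\ecStates_M$, the value $M$ at a maximizer can therefore only be sustained by the possibility of \emph{leaving} $\ecStates_M$ towards value at least $M$, which (as $\ecStates \setminus \ecStates_M$ has value $< M$) means leaving the end component itself. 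Making this quantitative, I would fix at each maximizer a trapping strategy $\sigma$ (available by the case assumption), use that $\max_{\rho} \preop(\valR)(e, \rho, \sigma) = \valR(e) = M$ for trapping $\sigma$ (Condition (i) of \cref{def:badCounterStrategy} together with the minimax identity), and then argue that against the trapping strategies at least one maximizer admits a leaving response preserving value $M$, since otherwise Player $\safe$ could trap every maximizer and force value $0$, contradicting $M > 0$. This produces an $e \in \ecStates_M$ with $\val(\exitingSG(e)) \geq M = \valR(e)$, placing $e$ in $\ecStates'$ and finishing the proof.

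\textbf{Main obstacle.} The hard part will be this last step: turning the informal ``the value must come from an exit'' into the precise quantity $\val(\exitingSG(e)) = \max\bigl(0, \sup_{\rho \in \deflStrats_{\valR}(\ecStates, e)} \inf_{\sigma \in \counter_{\valR}(\ecStates, e)} \preop(\valR)(e, \rho, \sigma)\bigr)$. Three points require care. First, the exiting response is restricted to \emph{deflating} strategies (\cref{def:deflStrats}), whose support is disjoint from every hazardous strategy, so I must realize the value-$M$ exit by such a strategy; this is where the $\varepsilon$-leaving intuition of \cref{ex:2-no-opt-strat} enters, playing leaving actions with vanishing probability. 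Second, the order $\sup_\rho \inf_\sigma$ ranges over the restricted sets, so the trapping argument must yield a response that is good against \emph{all} trapping strategies at once, for which I would invoke the minimax/bilinearity structure already used in \cref{lem:Ustar-is-fixpoint}. Third, because ($\varepsilon$-)optimal leaving strategies need not be attained, the whole argument must be phrased through suprema rather than a single optimal exit. Discharging these three points essentially forces an unfolding of \cref{def:badCounterStrategy,def:deflStrats,def:exitingSG} against the least-fixpoint characterization of $\valR$, which is precisely where the bulk of the technical effort lies.
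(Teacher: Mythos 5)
Your reduction (both items follow once a single state of maximal value over $\ecStates$ is shown to lie in $\ecStates<\prime>$) is correct and arguably cleaner than the paper's two-case contradiction structure, and your degenerate branch (if $\badActs_{\valR}(\ecStates,\state)=\emptyset$ or $\counter_{\valR}(\ecStates,\state)=\emptyset$, then $\exitVal[\valR](\ecStates,\state)$ collapses to the Bellman value, which equals $\valR(\state)$ by the fixpoint property) matches the paper's own observation. The genuine gap is in the substantial case, which is where the entire content of the lemma lives. Your pivotal inference is: \enquote{otherwise Player $\safe$ could trap every maximizer and force value $0$.} But the negation of your goal only yields that for every maximizer $e$, $\sup_{\strategy\in\deflStrats_{\valR}(\ecStates,e)}\inf_{\strategy*\in\counter_{\valR}(\ecStates,e)}\preop(\valR)(e,\strategy,\strategy*)<M$ (or $\deflStrats_{\valR}(\ecStates,e)=\emptyset$), and this does \emph{not} imply confinement: a value-$M$ best response of Player $\reach$ to a given trapping strategy may leave $\ecStates$ with positive probability and yet fail to be deflating, because its support intersects the support of some hazardous strategy (Condition (ii) of \cref{def:deflStrats}), and it need only be good against that one trapping strategy, not against all of them simultaneously as the $\inf$ in $\val(\exitingSG(e))$ demands. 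So from the negation you cannot conclude that Player $\reach$'s optimal play stays inside $\ecStates$ (let alone inside the set of maximizers), and the contradiction with $M>0$ never materializes. Extracting from a mixed value-$M$ best response a purely-leaving strategy that survives the restricted $\sup$-$\inf$ of the exiting sub-game is exactly the hard step; the paper discharges it through its nested case analysis (its cases (ii.a.1) through (ii.b.2)), distinguishing whether the residual set is a \BEC, whether the maximizers retain hazardous strategies relative to it, whether deflating strategies exist, and closing with a convex-combination strict-inequality argument. You name these obstacles yourself and defer them, but they are not routine bookkeeping; they are the lemma.

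A second, smaller weakness: even granting confinement, \enquote{confined forever $\Rightarrow$ value $0$} is not immediate from \enquote{confined plays have reachability probability $0$}, since that conflates plays with values. The paper routes this through the chain of equalities $\valR(\state)=\sup_{\strategy\in\badActs_{\valR}(\ecStates,\state)}\inf_{\strategy*\in\counter_{\valR}(\ecStates,\state)}\preop(\valR)(\state,\strategy,\strategy*)$ holding at \emph{every} state of the set, so that the restriction of $\valR$ satisfies a self-contained, target-free fixpoint equation, and only then invokes minimality of $\valR$. Your sketch should do the same before appealing to the least-fixpoint characterization.
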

	
	\begin{proof}
		We prove the lemma by contradiction, i.e. we assume that one of the two conditions posed by the lemma is violated. We make the following case distinction: 
		\begin{enumerate}[leftmargin=2cm, resume]
			\item[(i)] $\ecStates<\prime> = \emptyset$; and 
			\item[(ii)] $\ecStates<\prime> \neq \emptyset$ but $\max_{\state \in \ecStates<\prime>}\valR(\state) < \max_{\state \in \ecStates \setminus \ecStates<\prime>} \valR(\state)$.
		\end{enumerate}
		
		\begin{description}[style=nextline]
			\item[Case (i)] In this case it holds that $\ecStates<\prime> = \emptyset$, i.e. for all $\state \in \ecStates$ we have $\valR(\state) > \exitVal[\valR](\ecStates, \state)$. Recall that in \cref{def:exitVal} if $\badActs_{\valR}(\ecStates, \state) = \emptyset$ is true at some state $\state \in \ecStates$, then the exit value is given by the value of the matrix game played at that state.
			
			Consequently, as for all $\state \in \ecStates$ it holds that $\valR(\state) > \exitVal[\valR](\ecStates, \state)$, at all states $\state \in \ecStates$ it must hold that $\badActs_{\valR}(\ecStates, \state) \neq \emptyset$ because otherwise we would obtain the following contradiction: $\exitVal[\valR](\ecStates, \state) = \sup_{\strategy \in \strategies(\state)}\inf_{\strategy* \in \strategies*(\state)}\preop(\valR)(\state, \strategy, \strategy*) = \valR(\state)$.
			
			We proceed with the assumption that for all  $\state \in \ecStates$ it holds that $\badActs_{\valR}(\ecStates, \state) \neq \emptyset$. Since by the case assumption at all  $\state \in \ecStates$ it holds that $\valR(\state) > \exitVal[\valR](\ecStates, \state)$ the true values of the states in $\ecStates$ are attainable with the hazardous and trapping strategies. More formally, at each $\state \in \ecStates$ the following chain of equations holds.
			\begin{align*}
				\valR(\state) &= \adjustlimits\sup_{\strategy \in \strategies(\state)} \inf_{\strategy* \in \strategies*(\state)}\preop(\valR)(\state, \strategy, \strategy*)\tag{$\valR$ is a fixpoint \& case assumption: $\valR(\state) > \exitVal[\valR](\ecStates, \state)$}\\
				&= \adjustlimits\sup_{\strategy \in \badActs_{\valR}(\ecStates, \state)} \inf_{\strategy* \in \strategies*(\state)}\preop(\valR)(\state, \strategy, \strategy*) \tag{$\badActs_{\valR}(\ecStates, \state)$ are optimal by \Cref{def:badAction}}\\
				&= \adjustlimits\sup_{\strategy \in \badActs_{\valR}(\ecStates, \state)} \inf_{\strategy* \in \counter_{\valR}(\ecStates, \state))}\preop(\valR)(\state, \strategy, \strategy*). \tag{$\counter_{\valR}(\ecStates, \state)$ are optimal by \Cref{def:badCounterStrategy}}\\
			\end{align*}
			Thus, for player $\reach$ staying in $\ecStates$ is optimal and since no target state is contained in $\ecStates$ it has to hold that $\valR(\state) = 0$ for all $\state \in \ecStates$. However, this is a contradiction to the assumption that $\valR(\state) > \exitVal[\valR](\ecStates, \state)$ since $\exitVal[\valR](\ecStates, \state) \geq 0$ (as all possible valuations are non-negative and the value of the exiting sub-game is given by the maximum between 0 and $\exitVal$ - see \cref{def:exitVal} and \cref{def:exitingSG}). Thus, the case assumption that $\ecStates<\prime> = \emptyset$ must be false.
			
			\item[Case (ii)] In this case it holds that $\ecStates<\prime> \neq \emptyset$ but 
			$\max_{\state \in \ecStates<\prime>}\valR(\state) < \max_{\state \in \ecStates \setminus \ecStates<\prime>} \valR(\state)$ is true.
			
			We make the following case distinction: (ii.a) $\ecStates \setminus \ecStates<\prime>$ is a \BEC; and (ii.b) $\ecStates \setminus \ecStates<\prime>$ is not a \BEC.
			
			\begin{description}[style=nextline]
				\item[Case (ii.a)] In this case $\ecStates \setminus \ecStates<\prime>$ is a \BEC, i.e. for all $\state \in \ecStates \setminus \ecStates<\prime>$ it holds that $\badActs_{\valR}(\ecStates \setminus \ecStates<\prime>, \state) \neq \emptyset$.
				
				In case it holds that $\valR(\state) > \exitVal[\valR](\ecStates \setminus \ecStates<\prime>, \state)$ for all $\state \in \ecStates \setminus \ecStates<\prime>$, then since no target state is contained in $\ecStates$ and therefore neither in $\ecStates \setminus \ecStates<\prime>$, it must be true that $\valR(\state) = 0$ for all $\state \in \ecStates \setminus \ecStates<\prime>$. However, similarly as in Case (i), this is a contradiction to the assumption that $\valR(\state) > \exitVal[\valR](\ecStates \setminus \ecStates<\prime>, \state)$ for all $\state \in \ecStates \setminus \ecStates<\prime>$ as $\exitVal[\valR](\ecStates \setminus \ecStates<\prime>, \state) \geq 0$ for all $\state \in \ecStates \setminus \ecStates<\prime>$.
				
				Consequently, there has to exist $\state \in \ecStates \setminus \ecStates<\prime>$ such that $\valR(\state) \leq \exitVal[\valR](\ecStates \setminus \ecStates<\prime>, \state)$.

				We make another case distinction:
				\begin{enumerate}[leftmargin=2cm,resume]
					\item[(ii.a.1)] for all $\state<\prime> \in \argmax_{\state \in \ecStates \setminus \ecStates<\prime>} \valR(\state)$ it holds that $\valR(\state<\prime>) > \exitVal[\valR](\ecStates \setminus \ecStates<\prime>, \state<\prime>)$; and
					\item[(ii.a.2)] there exists $\state<\prime> \in \argmax_{\state \in \ecStates \setminus \ecStates<\prime>} \valR(\state)$ such that $\valR(\state<\prime>) \leq \exitVal[\valR](\ecStates \setminus \ecStates<\prime>, \state<\prime>)$.
				\end{enumerate}
				
				\begin{description}
					\item[Case (ii.a.1)] In this case for all $\state<\prime> \in \argmax_{\state \in \ecStates \setminus \ecStates<\prime>} \valR(\state)$ it holds that $\valR(\state<\prime>) > \exitVal[\valR](\ecStates \setminus \ecStates<\prime>, \state<\prime>)$.
					
					Thus, at each $\state<\prime> \in \argmax_{\state \in \ecStates \setminus \ecStates<\prime>} \valR(\state<\prime>)$ the following chain of equations holds.
					\begin{align*}
						\valR(\state<\prime>) &= \adjustlimits\sup_{\strategy \in \strategies(\state<\prime>)} \inf_{\strategy* \in \strategies*(\state<\prime>)}\preop(\valR)(\state<\prime>, \strategy, \strategy*)\tag{$\valR$ is a fixpoint}\\
						&= \adjustlimits\sup_{\strategy \in \badActs_{\valR}(\ecStates \setminus \ecStates<\prime>, \state<\prime>)} \inf_{\strategy* \in \strategies*(\state<\prime>)}\preop(\valR)(\state<\prime>, \strategy, \strategy*) \tag{$\badActs_{\valR}(\ecStates \setminus \ecStates<\prime>, \state<\prime>)$ are optimal by \Cref{def:badAction}}\\
						&= \adjustlimits\sup_{\strategy \in \badActs_{\valR}(\ecStates \setminus \ecStates<\prime>, \state<\prime>)} \inf_{\strategy* \in \counter_{\valR}(\ecStates \setminus \ecStates<\prime>, \state<\prime>))}\preop(\valR)(\state<\prime>, \strategy, \strategy*). \tag{$\counter_{\valR}(\ecStates \setminus \ecStates<\prime>, \state<\prime>)$ are optimal by \Cref{def:badCounterStrategy}}\\
					\end{align*}				
					Thus, at all states that attain the highest value among $\ecStates \setminus \ecStates<\prime>$, it is optimal for both players to choose strategies that together are staying in $\ecStates \setminus \ecStates<\prime>$. However, since $\ecStates$ and thus $\ecStates \setminus \ecStates<\prime>$ does not belong to the winning region of Player $\safe$, leaving $\ecStates \setminus \ecStates<\prime>$ has to be possible. Thus, from each state $\state<\prime> \in \argmax_{\state \in \ecStates \setminus \ecStates<\prime>} \valR(\state)$, Player $\reach$ has to possess an optimal strategy that leads to a state $\state<\prime\prime> \in \ecStates \setminus \ecStates<\prime>$ where leaving $\ecStates \setminus \ecStates<\prime>$ is possible, i.e. where $\valR(\state<\prime\prime>) \leq \exitVal[\valR](\ecStates \setminus \ecStates<\prime>, \state<\prime\prime>)$ holds. Thus, at state $\state<\prime\prime>$ the highest value also has to be attainable. However, this is a contradiction to the case assumption that for all $\state<\prime> \in \argmax_{\state \in \ecStates \setminus \ecStates<\prime>} \valR(\state)$ it holds that $\valR(\state<\prime>) > \exitVal[\valR](\ecStates \setminus \ecStates<\prime>, \state<\prime>)$. 
					\item[Case (ii.a.2)] In this case  there exists $\state<\prime> \in \argmax_{\state \in \ecStates \setminus \ecStates<\prime>} \valR(\state)$ such that $\valR(\state<\prime>) \leq \exitVal[\valR](\ecStates \setminus \ecStates<\prime>, \state<\prime>)$. 
					Let $\ecStates<\prime\prime> \coloneqq \{\state<\prime> \in \argmax_{\state \in \ecStates \setminus \ecStates<\prime>}\valR(\state) \mid \valR(\state<\prime>) \leq \exitVal[\valR](\ecStates \setminus \ecStates<\prime>, \state<\prime>)\}$. Then, we need to make another case distinction:	
					\begin{enumerate}[leftmargin=2cm,resume]
						\item[(ii.a.2.1)] for all $\state \in \ecStates<\prime\prime>$ it holds that $\deflStrats_{\valR}(\ecStates \setminus \ecStates<\prime>, \state) = \emptyset$; and
						\item[(ii.1.2.2)] there exists $\state<\prime> \in \ecStates<\prime\prime>$ such that $\deflStrats_{\valR}(\ecStates \setminus \ecStates<\prime>, \state<\prime>) \neq \emptyset$.
					\end{enumerate}
					
					\begin{description}
						\item[Case (ii.a.2.1)] In this case for all $\state \in \ecStates<\prime\prime>$ it holds that $\deflStrats_{\valR}(\ecStates \setminus \ecStates<\prime>, \state) = \emptyset$. Then, since $\ecStates$ and so $\ecStates \setminus \ecStates<\prime>$ do not belong to the winning region of Player $\safe$, there has to exist a state $\state<\prime\prime> \in \ecStates \setminus \ecStates<\prime>$ such that $\deflStrats_{\valR}(\ecStates \setminus \ecStates<\prime>, \state<\prime\prime>) \neq \emptyset$ so leaving $\ecStates \setminus \ecStates<\prime>$ is possible. However, then $\state<\prime\prime> \in \argmax_{\state \in \ecStates \setminus \ecStates<\prime>}\valR(\state)$ must hold which is a contradiction to the case assumption.
						
						\item[Case (ii.a.2.2)] In this case there exists $\state<\prime> \in \ecStates<\prime\prime>$ such that $\deflStrats_{\valR}(\ecStates \setminus \ecStates<\prime>, \state<\prime>) \neq \emptyset$. Then, the following chain of equations holds.
						\begin{align*}
							\valR(\state<\prime>) &\leq \exitVal[\valR](\ecStates \setminus \ecStates<\prime>, \state<\prime>) \tag{Case assumption}\\
							&= \max(0, \val(\hat{\mathsf{Z}}(\state<\prime>))) \tag{By case assumption (ii.a): $\badActs_{\valR}(\ecStates \setminus \ecStates<\prime>, \state<\prime>) \neq \emptyset$}\\
							&= \val(\hat{\mathsf{Z}}(\state<\prime>)) \tag{Case assumption: $\deflStrats_{\valR}(\ecStates \setminus \ecStates<\prime>, \state<\prime>) \neq \emptyset$}\\
							&= \adjustlimits\sup_{\strategy \in \deflStrats_{\valR}(\ecStates \setminus \ecStates<\prime>, \state<\prime>)}\inf_{\strategy* \in \counter_{\valR}(\ecStates \setminus \ecStates<\prime>, \state<\prime>)}\preop(\valR)(\state<\prime>, \strategy, \strategy*) \tag{Value of the exiting sub-game --- see \Cref{def:exitVal}}\\
							&= \adjustlimits\sup_{\strategy \in \deflStrats_{\valR}(\ecStates \setminus \ecStates<\prime>, \state<\prime>)}\inf_{\strategy* \in \counter_{\valR}(\ecStates \setminus \ecStates<\prime>, \state<\prime>)} \sum_{(\actionR, \actionS) \in \actions} \sum_{\state \in \ecStates \setminus \ecStates<\prime>} \underbrace{\valR(\state)}_{\leq \max_{\state<\prime\prime> \in \ecStates \setminus \ecStates<\prime>} \valR(\state<\prime\prime>)} \cdot \transitions(\state<\prime>, \actionR, \actionS)(\state) \cdot \strategy(\actionR) \cdot \strategy*(\actionS) \\
							&+ \sum_{\state \in \ecStates<\prime>} \underbrace{\valR(\state)}_{< \max_{\state<\prime\prime> \in \ecStates \setminus \ecStates<\prime>} \valR(\state<\prime\prime>)} \cdot \transitions(\state<\prime>, \actionR, \actionS)(\state) \cdot \strategy(\actionR) \cdot \strategy*(\actionS) \tag{Def. of $\preop$ and case assumption (ii): $\max_{\state \in \ecStates<\prime>}\valR(\state) < \max_{\state \in \ecStates \setminus \ecStates<\prime>} \valR(\state)$} \\
							&< \max_{\state \in \ecStates \setminus \ecStates<\prime>} \valR(\state). \tag{Everything sums up to 1}
						\end{align*}
						Thus, from the case assumption we derived a contradiction because $\state<\prime> \in \argmax_{\state \in \ecStates \setminus \ecStates<\prime>} \valR(\state)$.
					\end{description}
					
				\end{description}
				
				\item[Case (ii.b)] In this case $\ecStates \setminus \ecStates<\prime>$ is not a \BEC, i.e. there exists $\state<\prime> \in \ecStates \setminus \ecStates<\prime>$ such that $\badActs_{\valR}(\ecStates \setminus \ecStates<\prime>, \state<\prime>) = \emptyset$.
				
				We need another case distinction: 
				\begin{enumerate}[leftmargin=2cm, resume]
					\item[(ii.b.1)] for all $\state<\prime> \in \argmax_{\state \in \ecStates \setminus \ecStates<\prime>} \valR(\state)$ it holds that $\badActs_{\valR}(\ecStates \setminus \ecStates<\prime>, \state<\prime>) \neq \emptyset$; and 
					\item[(ii.b.2)] there exists  $\state<\prime> \in \argmax_{\state \in \ecStates \setminus \ecStates<\prime>} \valR(\state)$ such that $\badActs_{\valR}(\ecStates \setminus \ecStates<\prime>, \state<\prime>) =\emptyset$.
				\end{enumerate}
				
				\begin{description}
					\item[Case (ii.b.1)] In this case we have that for all $\state<\prime> \in \argmax_{\state \in \ecStates \setminus \ecStates<\prime>} \valR(\state)$ it holds that $\badActs_{\valR}(\ecStates \setminus \ecStates<\prime>, \state<\prime>) \neq \emptyset$.
					
					Then, at each $\state<\prime> \in \argmax_{\state \in \ecStates \setminus \ecStates<\prime>} \valR(\state)$ the following chain of equations holds.
					\begin{align*}
						\valR(\state<\prime>) &=  \adjustlimits\sup_{\strategy \in \badActs_{\valR}(\ecStates \setminus \ecStates<\prime>, \state<\prime>)} \inf_{\strategy* \in \strategies*(\state<\prime>)}\preop(\valR)(\state<\prime>, \strategy, \strategy*) \tag{$\badActs_{\valR}(\ecStates \setminus \ecStates<\prime>, \state<\prime>)$ are optimal by \Cref{def:badAction}}\\
						&= \adjustlimits\sup_{\strategy \in \badActs_{\valR}(\ecStates \setminus \ecStates<\prime>, \state<\prime>)} \inf_{\strategy* \in \counter_{\valR}(\ecStates \setminus \ecStates<\prime>, \state<\prime>))}\preop(\valR)(\state<\prime>, \strategy, \strategy*). \tag{$\counter_{\valR}(\ecStates \setminus \ecStates<\prime>, \state<\prime>)$ are optimal by \Cref{def:badCounterStrategy}}\\
					\end{align*}				
					Thus, at all states that attain the highest value among $\ecStates \setminus \ecStates<\prime>$ it is optimal for both players to choose strategies that together are staying in $\ecStates \setminus \ecStates<\prime>$. However, since $\ecStates$ and thus $\ecStates \setminus \ecStates<\prime>$ does not belong to the winning region of Player $\safe$, leaving $\ecStates \setminus \ecStates<\prime>$ has to be possible. Thus, from each state $\state<\prime> \in \argmax_{\state \in \ecStates \setminus \ecStates<\prime>} \valR(\state)$, Player $\reach$ has to possess an optimal strategy that leads to a state $\state<\prime\prime> \in \ecStates \setminus \ecStates<\prime>$ where leaving $\ecStates \setminus \ecStates<\prime>$ is possible, i.e. where $\badActs_{\valR}(\ecStates \setminus \ecStates<\prime>, \state<\prime\prime>) = \emptyset$ holds. Thus, at state $\state<\prime\prime>$ the highest value also has to be attainable. However, this is a contradiction to the case assumption that for all $\state<\prime> \in \argmax_{\state \in \ecStates \setminus \ecStates<\prime>} \valR(\state)$ it holds that $\badActs_{\valR}(\ecStates \setminus \ecStates<\prime>, \state<\prime>) \neq \emptyset$. 
					
					\item[Case (ii.b.2)] In this case there exists  $\state<\prime> \in \argmax_{\state \in \ecStates \setminus \ecStates<\prime>} \valR(\state)$ such that $\badActs_{\valR}(\ecStates \setminus \ecStates<\prime>, \state<\prime>) =\emptyset$.
					
					Then, the following chain of equations holds.
					\begin{align*}
						\valR(\state<\prime>) &= \adjustlimits\sup_{\strategy \in \strategies(\state<\prime>)}\inf_{\strategy* \in \strategies*(\state<\prime>)} \preop(\valR)(\state<\prime>, \strategy, \strategy*) \tag{$\valR$ is a fixpoint}\\
						&= \adjustlimits\sup_{\strategy \in \strategies(\state<\prime>)}\inf_{\strategy* \in \strategies*(\state<\prime>)} \sum_{(\actionR, \actionS)\in \actions} \sum_{\state \in \ecStates\setminus\ecStates<\prime>} \underbrace{\valR(\state)}_{\leq \max_{\state<\prime\prime> \in \ecStates\setminus\ecStates<\prime>}\valR(\state<\prime\prime>)} \cdot \transitions(\state<\prime>, \actionR, \actionS)(\state) \cdot \strategy(\actionR) \cdot \strategy*(\actionS)\\
						&+ \sum_{\state \in\ecStates<\prime>} \underbrace{\valR(\state)}_{< \max_{\state<\prime\prime> \in \ecStates\setminus\ecStates<\prime>}\valR(\state<\prime\prime>)}  \cdot \transitions(\state<\prime>, \actionR, \actionS)(\state) \cdot \strategy(\actionR) \cdot \strategy*(\actionS)\\
						&< \max_{\state<\prime\prime> \in \ecStates\setminus\ecStates<\prime>}\valR(\state<\prime\prime>). \tag{Everything sums up to 1 and $\badActs_{\valR}(\ecStates \setminus \ecStates<\prime>, \state<\prime>) =\emptyset$ thus at least one successor state is in $\ecStates<\prime>$}
					\end{align*}
					Thus, from the case assumption we derived a contradiction because $\state<\prime> \in \argmax_{\state \in \ecStates \setminus \ecStates<\prime>} \valR(\state)$.
				\end{description}
			\end{description}
		\end{description}
		
		Thus, every case leads to a contradiction which concludes the proof.
	\end{proof}

	Using \cref{lem:exitPossibleUnderVr}, we now prove that for every EC $\ecStates$, we have that for an upper bound $\upperBound$, the best exit value of the EC is also an upper bound.
	This is the crucial ingredient for showing that deflation cannot decrease a valuation below the value $\valR$.
	
	\begin{lemma}\label{lem:eVUgeqeVV}
		Let $\ecStates \subseteq \states \setminus (\success \cup \winning)$ be an \EC, and $\upperBound \in [0,1]^{\mid \states \mid}$ be a valuation with $\upperBound \geq \valR$.
		Then, for all states $\state\in\bestExits[\valR](\ecStates)$, we have $\exitVal[\upperBound](\ecStates, \state) \geq \exitVal[\valR](\ecStates, \state) $.
	\end{lemma}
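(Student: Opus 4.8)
The plan is to fix a best exit $s \in \bestExits[\valR](\ecStates)$ and compare the two exit values by a case analysis on which branch of \cref{def:exitVal} is active for $\valR$ and for $\upperBound$ at $s$, using throughout that $\preop$ is order-preserving, so that $\preop(\upperBound)(s,\rho,\sigma) \geq \preop(\valR)(s,\rho,\sigma)$ for every strategy pair and, since $\valR$ is a fixpoint of $\preop$, $\preop(\upperBound)(s) \geq \preop(\valR)(s) = \valR(s)$. First I would record an auxiliary observation: the exiting sub-game never exceeds the full Bellman value. Indeed, whenever $\badActs_{\valuation}(\ecStates,s) \neq \emptyset$ and $\counter_{\valuation}(\ecStates,s) \neq \emptyset$, every trapping strategy is by Condition~(i) of \cref{def:badCounterStrategy} an optimal Player~$\safe$ strategy; fixing any $\sigma \in \counter_{\valuation}(\ecStates,s)$ thus gives $\preop(\valuation)(s,\rho,\sigma) \leq \preop(\valuation)(s)$ for all $\rho$, and taking $\sup_{\rho}\inf_{\sigma}$ yields $\val(\exitingSG(s)) \leq \preop(\valuation)(s)$. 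Specialised to $\valR$, this gives $\exitVal[\valR](\ecStates,s) \leq \valR(s)$ in the second branch, while the first branch gives $\exitVal[\valR](\ecStates,s) = \preop(\valR)(s) = \valR(s)$; in either case $\exitVal[\valR](\ecStates,s) \leq \valR(s)$.

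Next I would pin down the value of a best exit. Using \cref{lem:exitPossibleUnderVr}, the non-empty set $\ecStates' = \{t \in \ecStates \mid \valR(t) \leq \exitVal[\valR](\ecStates,t)\}$ contains a state of maximal value over $\ecStates$; combining $\valR(t) \leq \exitVal[\valR](\ecStates,t)$ for $t \in \ecStates'$ with the observation above yields $\valR(t) = \exitVal[\valR](\ecStates,t)$ on $\ecStates'$, and since $\exitVal[\valR](\ecStates,t) \leq \valR(t) \leq \max_{u \in \ecStates}\valR(u)$ for every $t$, one concludes $\bestExitVal[\valR](\ecStates) = \max_{t \in \ecStates}\valR(t)$ and hence $\valR(s) = \max_{t \in \ecStates}\valR(t)$ for every $s \in \bestExits[\valR](\ecStates)$. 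This step is exactly where the restriction to best exits is essential: for a non-best exit the relative ordering of exits may change when passing from $\valR$ to $\upperBound$, which is the non-monotonicity phenomenon exhibited in \cref{app:ex-defl-not-monotonic}.

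With these two ingredients the four branch-combinations reduce cleanly. If $\exitVal[\upperBound]$ uses the first (full Bellman) branch, then $\exitVal[\upperBound](\ecStates,s) = \preop(\upperBound)(s) \geq \valR(s) \geq \exitVal[\valR](\ecStates,s)$ by monotonicity and the observation above, regardless of which branch $\valR$ takes. The only remaining situation is $\badActs_{\upperBound}(\ecStates,s) \neq \emptyset$ and $\counter_{\upperBound}(\ecStates,s) \neq \emptyset$, and there both open combinations collapse, again via the observation, to the single inequality
\[
\sup_{\rho \in \deflStrats_{\upperBound}(\ecStates,s)}\ \inf_{\sigma \in \counter_{\upperBound}(\ecStates,s)}\ \preop(\upperBound)(s,\rho,\sigma)\ \geq\ \valR(s) = \max_{t \in \ecStates}\valR(t),
\]
since $\valR(s) \geq 0$ makes the outer $\max(0,\cdot)$ of \cref{def:exitingSG} inactive once the left-hand side dominates $\valR(s)$.

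The hard part will be this last inequality, because the sets $\deflStrats_{\upperBound}$, $\counter_{\upperBound}$, and $\badActs_{\upperBound}$ are defined relative to $\upperBound$ and need not coincide with their $\valR$-counterparts, so termwise monotonicity of $\preop$ cannot simply be applied over matching index sets. My approach would be to take a near-optimal deflating strategy $\rho^{\ast}$ witnessing $\exitVal[\valR](\ecStates,s) = \valR(s)$, argue that $\rho^{\ast}$ (or a perturbation placing vanishing probability on the hazardous support, as in \cref{ex:deflateBEC}) is admissible in the $\upperBound$-exiting sub-game, and then bound its payoff against every $\upperBound$-trapping strategy from below by $\valR(s)$, using $\upperBound \geq \valR$ together with the fact that $\valR(s)$ is maximal in $\ecStates$, so that in-EC successors contribute at most $\valR(s)$ and cannot drag the expectation below it. The degenerate sub-case $\deflStrats_{\upperBound}(\ecStates,s) = \emptyset$ I would treat separately, showing it forces $\valR(s) = 0$ so that the $\max(0,\cdot)$ already delivers the bound. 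I expect the bulk of the work to lie in controlling the mismatch between the trapping and deflating sets across the two valuations, which is presumably why the authors flag the statement as non-trivial and technically involved.
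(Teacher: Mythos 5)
Your preparatory steps are sound and in fact streamline the paper's own argument: the observation that the exiting sub-game value never exceeds the Bellman value (because trapping strategies are, by Condition (i) of \cref{def:badCounterStrategy}, optimal for Player $\safe$), and the resulting identity $\exitVal[\valR](\ecStates,\state) = \valR(\state) = \max_{t\in\ecStates}\valR(t)$ at every $\state \in \bestExits[\valR](\ecStates)$ via \cref{lem:exitPossibleUnderVr}, are correct and dispose of every branch combination except the one where $\badActs_{\upperBound}(\ecStates,\state) \neq \emptyset$ and $\counter_{\upperBound}(\ecStates,\state) \neq \emptyset$. The problem is that for this remaining case you have only stated a plan, and the one concrete tactic you offer fails: a strategy placing vanishing probability on the support of an $\upperBound$-hazardous strategy is \emph{not} admissible in the $\upperBound$-exiting sub-game, since Condition (ii) of \cref{def:deflStrats} demands $\support(\rho) \cap \bigcup_{\rho'\in\badActs_{\upperBound}(\ecStates, \state)} \support(\rho') = \emptyset$, so no perturbation touching hazardous actions lies in $\deflStrats_{\upperBound}(\ecStates,\state)$; the $\varepsilon$-mixing picture of \cref{ex:deflateBEC} is the intuition behind deflation, not a membership argument. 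What your plan needs, and does not supply, is a witness $\rho^{\ast} \in \deflStrats_{\upperBound}(\ecStates,\state)$ that is simultaneously locally optimal at $\state$ with respect to $\valR$; once such a $\rho^{\ast}$ exists, the case closes immediately, because $\preop(\upperBound)(\state,\rho^{\ast},\sigma) \geq \preop(\valR)(\state,\rho^{\ast},\sigma) \geq \valR(\state)$ holds for \emph{every} counter-strategy $\sigma$, with no need for your (in any event backwards) ``in-EC successors contribute at most $\valR(\state)$'' consideration.

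This existence claim is precisely the crux, and it is not automatic: a $\valR$-optimal leaving strategy may be forced to mix with actions lying in $\upperBound$-hazardous supports. The paper's proof spends its bulk on exactly this dichotomy (its Cases (III.a)/(III.b)): either some $\valR$-optimal leaving strategy has support disjoint from all $\upperBound$-hazardous supports --- then it is a deflating strategy under $\upperBound$ and a monotonicity chain of the kind you describe finishes --- or \emph{every} $\valR$-optimal strategy must mix a leaving part $\rho_L$ with an $\upperBound$-hazardous part $\rho_H$, and then a genuinely nontrivial argument (extracting counter-strategies $\sigma_1,\sigma_2$ between which Player $\safe$ must itself mix, and following the $(\rho_H,\sigma_1)$-successors inside the \EC to a state $\hat{\state}$ with strictly larger exit value) contradicts $\state \in \bestExits[\valR](\ecStates)$. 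The same analysis is what rules out your degenerate sub-case $\deflStrats_{\upperBound}(\ecStates,\state) = \emptyset$ when $\valR(\state) > 0$: you assert it ``forces $\valR(\state)=0$'' but give no argument, and nothing short of the dichotomy above delivers it. So the proposal correctly isolates where the difficulty lives and simplifies everything around it, but the actual mathematical content of the hard case --- the heart of the paper's proof --- is missing.
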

	\begin{proof}
		Let $\state \in \bestExits[\valR](\ecStates)$. Our goal is to show that $\exitVal[\upperBound](\ecStates,s) \geq \exitVal[\valR](\ecStates, \state)$.
		Since for the estimation of $\exitVal[\upperBound](\ecStates,s)$ and $\exitVal[\valR](\ecStates, \state)$ the sets of strategies, $\counter_{\valR}(\ecStates, \state), \counter_{\upperBound}(\ecStates, \state),\badActs_{\valR}(\ecStates, \state)$, and $ \badActs_{\upperBound}(\ecStates, \state)$, which can be empty or non-empty, we have to consider all possible combinations.
		We consider the following four main cases, which we further analyse with respect to the $\upperBound$ sets of strategies where necessary:
		\begin{description}
			\item[Case (I)] $\counter_{\valR}(\ecStates, \state) = \emptyset$ and $\badActs_{\valR}(\ecStates, \state) = \emptyset$,
			\item[Case (II)] $\counter_{\valR}(\ecStates, \state) = \emptyset$ and $\badActs_{\valR}(\ecStates, \state) \neq \emptyset$,
			\item[Case (III)] $\counter_{\valR}(\ecStates, \state) \neq \emptyset$ and $\badActs_{\valR}(\ecStates, \state) = \emptyset$, and
			\item[Case (IV)] $\counter_{\valR}(\ecStates, \state) \neq \emptyset$ and $\badActs_{\valR}(\ecStates, \state) \neq \emptyset$.
		\end{description}
		For each case we show that either the case is impossible or that the statement of the lemma holds.
		
		\begin{description}
			\item[Case (I)] $\counter_{\valR}(\ecStates, \state) = \emptyset$ and $\badActs_{\valR}(\ecStates, \state) = \emptyset$.
			
			Let $\sigma^\prime \in \strategies*(\state)$ be an optimal Player $\safe$ strategy under $\valR$. Since $\counter_{\valR}(\ecStates, \state) = \emptyset$, at least on of the two conditions posed by the definition of trapping strategies (\cref{def:badCounterStrategy}) must be violated under $\valR$.
			
			Since $\badActs_{\valR}(\ecStates, \state) = \emptyset$, Condition (ii) of \cref{def:badCounterStrategy}, i.e. $\forall \strategy \in \badActs_{\valR}(\ecStates, \state): (\state, \strategy, \sigma^\prime)\; \stays\; \ecStates$, is trivially satisfied. Consequently, Condition (i) must be violated, i.e., it must hold that there exists no optimal strategy for Player $\safe$ which cannot be true. Thus, this case is impossible.
			
			\item[Case (II)] $\counter_{\valR}(\ecStates, \state) = \emptyset$ and $\badActs_{\valR}(\ecStates, \state) \neq \emptyset$.

			At state $\state$ it holds that $\valR(\state) = \exitVal[\valR](\ecStates, \state)$, since we chose it to be in $\bestExits[\valR]$. 
			Thus, it is in the set $\ecStates[\prime]$ constructed in Condition (i) of \cref{lem:exitPossibleUnderVr}.
			By Condition (ii) of \cref{lem:exitPossibleUnderVr}, we know that all states in $\ecStates \setminus \{\state\}$ attain a value that is either smaller or equal $\valR(\state)$. 
			Using this and the fact that no staying strategy can be optimal (as the set of trapping strategies is empty), we can derive a contradiction as follows.
			\begin{align*}
				\valR(\state) &= \adjustlimits\sup_{\strategy \in \strategies(\state)}\inf_{\strategy* \in \strategies*(\state)} \preop(\valR)(\state, \strategy, \strategy*) \tag{$\valR$ is fixpoint of $\preop$}\\
				&= \adjustlimits\sup_{\strategy \in \badActs_{\valR}(\ecStates, \state)}\inf_{\strategy* \in \strategies*(\state)} \preop(\valR)(\state, \strategy, \strategy*)\tag{$\badActs_{\valR}(\ecStates, \state)$ are optimal under $\valR$}\\
				&< \adjustlimits\sup_{\strategy \in \badActs_{\valR}(\ecStates, \state)}\inf_{\strategy* \in \stayStratsS(\badActs_{\valR}(\ecStates, \state), \ecStates,\state)} \preop(\valR)(\state, \strategy, \strategy*) \tag{$\counter_{\valR}(\ecStates, \state) =  \emptyset$}\\
				&= \adjustlimits\sup_{\strategy \in \badActs_{\valR}(\ecStates, \state)}\inf_{\strategy* \in \stayStratsS(\badActs_{\valR}(\ecStates, \state), \ecStates,\state)} \sum_{(\actionR, \actionS) \in \actions} \sum_{\state<\prime> \in \ecStates } \underbrace{\valR(\state<\prime>)}_{\leq \valR(\state)} \cdot \transitions(\state, \strategy, \strategy*) \cdot \strategy(\actionR) \cdot \strategy*(\actionS) \tag{Unfolding definition of $\preop$}\\
				&\leq \adjustlimits\sup_{\strategy \in \badActs_{\valR}(\ecStates, \state)}\inf_{\strategy* \in \stayStratsS(\badActs_{\valR}(\ecStates, \state), \ecStates,\state)} \sum_{(\actionR, \actionS) \in \actions} \sum_{\state<\prime> \in \ecStates }  \valR(\state) \cdot \transitions(\state, \strategy, \strategy*) \cdot \strategy(\actionR) \cdot \strategy*(\actionS) \tag{$\valR(\state<\prime>) \leq \valR(\state)$ by (ii) in \cref{lem:exitPossibleUnderVr}}\\
				&= \preop(\valR)(\state)\\
				&= \valR(\state). \tag{$\valR$ is fixpoint of $\preop$}
			\end{align*}
			Thus, overall $\valR(\state) < \valR(\state)$, a contradiction.
			
			\item[Case (III):] $\counter_{\valR}(\ecStates, \state) \neq \emptyset$ and $\badActs_{\valR}(\ecStates, \state) = \emptyset$.

			Since $\badActs_{\valR}(\ecStates, \state) = \emptyset$ we know by \cref{lem:negate-weak-dominance} that there exists $\strategy[L] \in \leaveStratsR(\ecStates, \state). \strategies(\state) \dominateseq_{\valR,\strategies*(\state)} \{\strategy[L]\}$. Our goal is to show that there exists $\strategy<\ast> \in \strategies(\state)$ that is optimal under $\valR$ and $\strategy<\ast> \in \deflStrats_{\upperBound}(\ecStates, \state)$. We distinguish two cases:
			\begin{itemize}
				\item (III.a): For all $\strategy[M] \in \leaveStratsR(\ecStates, \state)$ that are optimal under $\valR$, there exists $\strategy[H] \in \badActs_{\upperBound}(\ecStates, \state)$ such that $\support(\strategy[M]) \cap \support(\strategy[H]) \neq \emptyset$, and
				\item (III.b): There exists $\strategy[L] \in \leaveStratsR(\ecStates, \state)$ that is optimal under $\valR$ and it holds that $\support(\strategy[L]) \cap \bigcup_{\strategy \in \badActs_{\upperBound}(\ecStates, \state)} \support(\strategy) = \emptyset$.
			\end{itemize}
				
				\begin{description}
					\item[Case (III.a)] In this case, for all $\strategy[M] \in \leaveStratsR(\ecStates, \state)$ that are optimal under $\valR$, there exists $\strategy[H] \in \badActs_{\upperBound}(\ecStates, \state)$ such that $\support(\strategy[M]) \cap \support(\strategy[H]) \neq \emptyset$.
					
					Let $\strategy[H] \in \badActs_{\upperBound}(\ecStates, \state)$ and $\strategy[L] \in \{\strategy \in \leaveStratsR(\ecStates, \state) \mid \support(\strategy[H]) \cap \support(\strategy[L]) =\emptyset\}$ such that $\support(\strategy[M]) \cap \support(\strategy[H]) \neq \emptyset$ and $\support(\strategy[M]) \cap \support(\strategy[L]) \neq \emptyset$. In other words, only a strategy that mixes the supports of $\strategy[H]$ and $\strategy[L]$ is optimal under $\valR$. 
					
					Due to the case assumption that only strategies that mix with some hazardous strategy are optimal, $\strategy[L]$ must be sub-optimal under $\valR$, as well as $\strategy[H]$. 
					
					More formally, for all $\strategy[M] \in \leaveStratsR(\ecStates, \state)$, such that $\strategies(\state) \setminus \{\strategy[M]\} \dominateseq \{\strategy[M]\}$ it holds that
					\begin{align*}
						\exists\strategy[H] \in \badActs_{\upperBound}(\ecStates, \state).\exists\strategy[L] &\in \{\strategy \in \leaveStratsR(\ecStates, \state) \mid \support(\strategy[H]) \cap \support(\strategy[L]) =\emptyset\}:\\
						\sup_{\distributions(\support(\strategy[M]) \setminus \support(\strategy[L]) )} \inf_{\strategy* \in \strategies*(\state)} &\preop(\valR)(\state, \strategy, \strategy*) \dominates \{\strategy[M]\}\\
						&\text{and}\\
						\sup_{\distributions(\support(\strategy[M]) \setminus \support(\strategy[H]) )} \inf_{\strategy* \in \strategies*(\state)} &\preop(\valR)(\state, \strategy, \strategy*) \dominates \{\strategy[M]\}.
					\end{align*}
					
					Let $\strategies[\safe]<\ast> \subseteq \strategies*(\state)$ be the optimal Player $\safe$ strategies with respect to $\strategy[M]$. Since $\strategy[M]$ has to mix $\strategy[L]$ and $\strategy[H]$, the following must be true:
					\begin{itemize}
						\item there exists $\sigma_1 \in \strategies[\safe]<\ast>$ such that 
						\begin{align}\label{phsi}
							\preop(\valR)(\state, \strategy[L],\sigma_1) &< \preop(\valR)(\state, \strategy[H], \sigma_1), \text{ and}
						\end{align}
						\item there exists $\sigma_2 \in \strategies[\safe]<\ast>$ such that 
						\begin{align}\label{phsii}
							\preop(\valR)(\state, \strategy[L],\sigma_2) &> \preop(\valR)(\state, \strategy[H], \sigma_2),
						\end{align}
					\end{itemize}
					because otherwise either $\strategy[L]$ or $\strategy[H]$ would be optimal under $\valR$ thus mixing would not be necessary. Further, Player $\safe$ also needs to mix the two strategies $\sigma_1$ and $\sigma_2$ to ensure optimality. 
					
					Our goal now is to show that $\state \notin \bestExits[\valR](\ecStates)$ which would lead to a contradiction to the assumption that $\state \in \bestExits[\valR](\ecStates)$, showing that at a best exit there cannot exist only optimal strategies that fulfill the same properties as $\strategy[M]$. 
					
					To estimate the best exit value from $\ecStates$ under $\valR$, all exists from $\state \in \ecStates$ have to be estimated.  By \cref{phsi} and \cref{phsii}, we know that the part of the \EC $\ecStates$ that is reachable with $(\strategy[H], \sigma_1)$, say $\ecStates<\prime> \subset \ecStates$, attains a higher value than the one that is reachable with $(\strategy[H], \sigma_2)$. 
					
					More formally, the following holds. Let $\hat{\state} \in \max_{\state<\prime> \in \ecStates<\prime>} \exitVal[\valR](\ecStates, \state<\prime>)$. At $\hat{\state}$ leaving $\ecStates<\prime>$ has to be possible, because otherwise, the values at all $\state<\prime> \in \ecStates<\prime>$ would be equal to $0$, which would be a contradiction to the assumption that $\ecStates \subseteq \states \setminus (\success \cup \winning)$. Then, the following chain of equations holds.
					\begin{align*}
						\exitVal[\valR](\ecStates, \state) &= \adjustlimits\sup_{\strategy \in \strategies(\state)}\inf_{\strategy* \in \strategies*(\state)}\preop(\valR)(\state, \strategy, \strategy*) \tag{$\badActs_{\valR}(\ecStates, \state) = \emptyset$}\\
						&=\inf_{\strategy* \in \strategies*(\state)}\preop(\valR)(\state, \strategy[M], \strategy*)\tag{$\strategy[M]$ is optimal under $\valR$}\\
						&=\preop(\valR)(\state, \strategy[M], \sigma_1)\tag{$\sigma_1$ is optimal under $\valR$ with respect to $\strategy[M]$}\\
						&<\preop(\valR)(\state, \strategy[H], \sigma_1).\tag{Under $\sigma_1$, $\strategy[H]$ is optimal by \cref{phsi}}\\
						&=\sum_{(\actionR, \actionS) \in \actions}\sum_{\state<\prime> \in \destination(\state, \strategy[H], \sigma_1)}\valR(\state<\prime>) \cdot \transitions(\state, \actionR, \actionS) \cdot \strategy(\actionR) \cdot \strategy*(\actionS) \tag{Unfolding the def. of $\preop$}\\
						&\leq \sum_{(\actionR, \actionS) \in \actions}\sum_{\state<\prime> \in \destination(\state, \strategy[H], \sigma_1)}\exitVal[\valR](\ecStates, \hat{\state}) \cdot \transitions(\state, \actionR, \actionS) \cdot \strategy(\actionR) \cdot \strategy*(\actionS) \tag{At $\hat{\state}$ one can attain the highest value upon leaving}\\
						&=  \exitVal[\valR](\ecStates, \hat{\state}) \tag{Everything sums-up to 1}.
					\end{align*}
					Thus, under $\valR$, the state $\hat{\state}$ attains a higher value upon leaving than $\state$. This is a contradiction to the assumption that $\state \in \bestExits[\valR](\ecStates)$.
					
					\item[Case (III.b)] There exists $\strategy[L] \in \leaveStratsR(\ecStates, \state)$ that is optimal under $\valR$ and it holds that $\support(\strategy[L]) \cap \bigcup_{\strategy \in \badActs_{\upperBound}(\ecStates, \state)} \support(\strategy) = \emptyset$.
					
					Then, $\strategy[L]$ must belong to the set of deflating strategies under $\upperBound$, as it satisfies both conditions posed by the definition of deflating strategies (see  \cref{def:deflStrats}).
				
				If $\counter_{\upperBound}(\ecStates, \state) \neq \emptyset$, then the following chain of equations holds.
				\begin{align*}
					\exitVal[\valR](\ecStates, \state) &= \adjustlimits \sup_{\strategy \in \strategies(\state)}\inf_{\strategy* \in \strategy*(\state)} \preop(\valR)(\state, \strategy, \strategy*) \tag{By \cref{def:badAction} in case $\badActs_{\valR}(\ecStates, \state) = \emptyset$}\\
					&= \inf_{\strategy* \in \strategy*(\state)} \preop(\valR)(\state, \strategy[L], \strategy*) \tag{$\strategy[L]$ is optimal}\\
					&\leq \adjustlimits \sup_{\strategy \in \deflStrats_{\upperBound}(\ecStates, \state)}\inf_{\strategy* \in \strategy*(\state)} \preop(\valR)(\state, \strategy, \strategy*) \tag{$\{\strategy[L]\} \subseteq \deflStrats_{\upperBound}(\ecStates, \state)\}$}\\
					&\leq \adjustlimits \sup_{\strategy \in \deflStrats_{\upperBound}(\ecStates, \state)}\inf_{\strategy* \in \counter_{\upperBound}(\ecStates, \state)} \preop(\valR)(\state, \strategy, \strategy*) \tag{$\counter_{\upperBound}(\ecStates, \state) \subseteq \strategies*(\state)$}\\
					&\leq \adjustlimits \sup_{\strategy \in \deflStrats_{\upperBound}(\ecStates, \state)}\inf_{\strategy* \in \counter_{\upperBound}(\ecStates, \state)} \preop(\upperBound)(\state, \strategy, \strategy*) \tag{$\preop$ is order-preserving}\\
					&= \exitVal[\upperBound](\ecStates, \state).
				\end{align*}
				
				If $\counter_{\upperBound}(\ecStates, \state) = \emptyset$, then the following chain of equations holds.
				\begin{align*}
					\exitVal[\valR](\ecStates, \state) &= \adjustlimits \sup_{\strategy \in \strategies(\state)}\inf_{\strategy* \in \strategy*(\state)} \preop(\valR)(\state, \strategy, \strategy*) \tag{By \cref{def:badAction} in case $\badActs_{\valR}(\ecStates, \state) = \emptyset$}\\
					&= \inf_{\strategy* \in \strategy*(\state)} \preop(\valR)(\state, \strategy[L], \strategy*) \tag{$\strategy[L]$ is optimal}\\
					&\leq \adjustlimits \sup_{\strategy \in \deflStrats_{\upperBound}(\ecStates, \state)}\inf_{\strategy* \in \strategies*( \state)} \preop(\upperBound)(\state, \strategy, \strategy*) \tag{$\preop$ is order-preserving}\\
					&= \exitVal[\upperBound](\ecStates, \state).
				\end{align*}

			\end{description}%

			\item[Case (IV):] $\counter_{\valR}(\ecStates, \state) \neq \emptyset$ and $\badActs_{\valR}(\ecStates, \state) \neq \emptyset$.
			
			It holds that $\state \in \bestExits[\valR](\ecStates)$, thus there exist no other state at which leaving $\ecStates$ attains better value. By \cref{lem:exitPossibleUnderVr} we have that for all $\state<\prime> \in \bestExits[\valR](\ecStates)$ it holds that $\valR(\state<\prime>) \leq \exitVal[\valR](\ecStates, \state<\prime>)$. Consequently, at a best exit it can only hold that $\valR(\state<\prime>) = \exitVal[\valR](\ecStates, \state<\prime>)$ because otherwise $\valR$ would not be a fixpoint of $\preop$. Therefore, the following chain of equations holds at $\state$.
			\begin{align*}
				\valR(\state) &=  \adjustlimits \sup_{\strategy \in \strategies(\state)} \inf_{\strategy* \in \strategies*(\state)}\preop(\valR)(\state, \strategy, \strategy*)\tag{$\valR$ is fixpoint of $\preop$}\\
				&= \adjustlimits \sup_{\strategy \in \badActs_{\valR}(\state)} \inf_{\strategy* \in \strategies*(\state)}\preop(\valR)(\state, \strategy, \strategy*)\tag{$\badActs_{\valR}(\ecStates, \state)$ are optimal under $\valR$}\\
				&= \adjustlimits \sup_{\strategy \in \badActs_{\valR}(\state)} \inf_{\strategy* \in  \counter_{\valR}(\state)}\preop(\valR)(\state, \strategy, \strategy*)\tag{$\counter_{\valR}(\ecStates, \state)$ are optimal under $\valR$}\\
				&= \exitVal[\valR](\ecStates, \state). \tag{By \cref{lem:exitPossibleUnderVr} we have $\valR(\state) \leq \exitVal[\valR](\ecStates, \state)$ however only "=" can hold as $\valR$ is fixpoint of $\preop$}
			\end{align*}
			Thus, this case is equal to Case (III) where $\exitVal[\valR](\ecStates, \state) =  \sup_{\strategy \in \strategies(\state)} \inf_{\strategy* \in \strategies*(\state)}\preop(\valR)(\state, \strategy, \strategy*)$, as the assumptions of cases (I) and (II) are not possible.
		\end{description}
		\end{proof}

		\begin{lemma}[Existence of maximal \BECs]\label{lem:existenceMaxBECs}
			Given a \CSG $\game$ and let $\endComponent \subseteq (\states \setminus (\success \cup \winning))$ be an \EC and let $\upperBound$ be a valuation. If under $\upperBound$ there exists a \BEC $ \ecStates \subseteq \endComponent$, then there also exists a maximal \BEC, i.e. there exists a \BEC $\ecStates<\max> \subseteq \endComponent$ such that $\ecStates \subseteq \ecStates<\text{max}>$ and it holds that $\ecStates<\text{max}> \cup \{\state\}$ is not a \BEC for all $\state \in \endComponent \setminus \ecStates<\max>$.
		\end{lemma}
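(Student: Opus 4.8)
The plan is to exploit the finiteness of the state space and reduce the claim to the elementary fact that every non-empty finite partially ordered set has an inclusion-maximal element; no fixpoint machinery, union-closure, or Zorn-style argument is needed.

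First I would fix the \CSG and the valuation $\upperBound$, and gather all candidate \BECs into a single family, $\mathfrak{C} \coloneqq \{\, T \mid \ecStates \subseteq T \subseteq \endComponent \text{ and } T \text{ is a BEC w.r.t.\ } \upperBound \,\}$. Since $\endComponent \subseteq \states \setminus (\success \cup \winning)$ by hypothesis, every $T \in \mathfrak{C}$ automatically lies in $\states \setminus (\success \cup \winning)$, so membership in $\mathfrak{C}$ is exactly the conjunction ``$T$ is an \EC with $\ecStates \subseteq T \subseteq \endComponent$ and $\badActs_{\upperBound}(T,\state) \neq \emptyset$ for every $\state \in T$''. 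I would then record two observations: $\mathfrak{C}$ is non-empty, because the \BEC $\ecStates$ supplied by the hypothesis is itself an element of $\mathfrak{C}$; and $\mathfrak{C}$ is finite, since $\mathfrak{C} \subseteq \powerset{\endComponent}$ and $\endComponent$ is finite.

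Next, ordering $\mathfrak{C}$ by set inclusion turns it into a non-empty finite poset, which therefore possesses at least one inclusion-maximal element; I would name such an element $\ecStates<\max>$. By construction $\ecStates<\max>$ is a \BEC with $\ecStates \subseteq \ecStates<\max> \subseteq \endComponent$, which already yields the first two requirements of the lemma. For the last requirement I would argue by contradiction: suppose that for some $\state \in \endComponent \setminus \ecStates<\max>$ the set $\ecStates<\max> \cup \{\state\}$ were a \BEC. Then $\ecStates \subseteq \ecStates<\max> \subseteq \ecStates<\max> \cup \{\state\} \subseteq \endComponent$, so $\ecStates<\max> \cup \{\state\} \in \mathfrak{C}$, and it strictly contains $\ecStates<\max>$, contradicting the inclusion-maximality of $\ecStates<\max>$ in $\mathfrak{C}$. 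Hence $\ecStates<\max> \cup \{\state\}$ is not a \BEC for any such $\state$, which completes the proof. (Note that if $\ecStates<\max> \cup \{\state\}$ fails to be an \EC, it is trivially not a \BEC, so no separate case is needed.)

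The only point requiring care — rather than a genuine obstacle — is that \BEC-membership is a \emph{self-referential} property: whether a set is a \BEC is tested against the set itself, since the reference set appears inside $\badActs_{\upperBound}(\cdot,\cdot)$ (see \cref{def:badAction} and \cref{defBec}). One must therefore resist inferring inclusion-maximality from the single-state condition via some monotonicity argument; the two are not equivalent in general, because enlarging the reference set can both create and destroy hazardous strategies, so $\ecStates<\max>\cup\{\state\}$ being a \BEC cannot be read off from $T \supsetneq \ecStates<\max>$ being one. The argument above sidesteps this entirely: it directly produces a set that is inclusion-maximal among \BECs inside $\endComponent$, from which the stated single-state formulation is an immediate consequence, and it never needs to relate the hazardous strategies of $\ecStates<\max>$ to those of $\ecStates<\max> \cup \{\state\}$.
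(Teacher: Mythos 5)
Your proof is correct, and it takes a genuinely different and more elementary route than the paper's. The paper's own proof does not use a finiteness-plus-maximal-element argument; instead it establishes a closure property: if two \BECs $\ecStates[1],\ecStates[2] \subseteq \states \setminus (\success \cup \winning)$ intersect, then $\ecStates[1] \cup \ecStates[2]$ is again a \BEC (connectivity of the union is shown by routing plays through a common state, and the hazardousness conditions are asserted to carry over to the union). Since every \BEC between $\ecStates$ and $\endComponent$ contains the non-empty set $\ecStates$, all of them pairwise intersect, so iterating this closure over the finite family yields a \emph{maximum} \BEC above $\ecStates$, i.e., one containing every other; the maximality claimed in \cref{lem:existenceMaxBECs} follows. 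Your argument instead takes an inclusion-maximal element of the finite, non-empty family $\mathfrak{C}$ and notes that the single-state condition of the lemma is an immediate consequence --- shorter, and it never has to relate $\badActs_{\upperBound}(T,\cdot)$ across different reference sets $T$. What the paper's heavier route buys is uniqueness (all \BECs above $\ecStates$ lie inside one common maximal \BEC), which your argument does not give: a priori your construction leaves open the possibility of several incomparable maximal \BECs above $\ecStates$; for the lemma as stated this is irrelevant. One correction to your closing caution: enlarging the reference set can create but never destroy hazardous strategies, since condition (i) of \cref{def:badAction} is preserved (a staying witness with respect to $T$ is a staying witness with respect to any $T' \supseteq T$), condition (ii) does not mention the reference set, and condition (iii) survives because $\leaveStratsR(T',\state) \subseteq \leaveStratsR(T,\state)$, so the dominated set only shrinks. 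This upward monotonicity of per-state hazardousness (not of \BEC-ness itself, as the added states must also carry hazardous strategies) is precisely the unstated ingredient that makes the paper's union-closure step sound; your proof needs neither it nor the closure.
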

		
		\begin{proof}
			To prove the lemma it suffices to show the following: Let $\ecStates[1] \subseteq (\states \setminus (\success \cup \winning))$ and $\ecStates[2]\subseteq (\states \setminus (\success \cup \winning))$ be two \ECs that are \BEC under $\upperBound$ with $\ecStates[1] \cap \ecStates[2] \neq \emptyset$. Then, it holds that $\ecStates[1] \cup \ecStates[2]$ is also a \BEC.
			
			Let $\state<\cap> \in (\ecStates[1] \cap \ecStates[2])$, $\state<\prime> \in \ecStates[1]$, and $\state<\prime\prime> \in \ecStates[2]$. Since $\ecStates[1]$ is a \BEC at each state $\state \in \ecStates[1]$ it holds that $\badActs_{\upperBound}(\ecStates[1], \state) \neq \emptyset$ and $\counter_{\upperBound}(\ecStates[1], \state)\neq \emptyset$ for all $\state \in \ecStates[1]$. Further, since $\ecStates[1]$ is an \EC, there exists a play $\state[0]\state[1]\dots$ such that $\state[0]=\state<\prime>$ and $\state[n]=\state<\cap>$ for some $n$ and for all $0 \leq i < n$ it holds that $\state[i+1] \in \destination(\state[i], \strategy[i]<\ast>, \sigma_i^\ast)$ where $\strategy[i]<\ast> \in \badActs_{\upperBound}(\ecStates[1], \state[i])$ and $\sigma_i^\ast \in \counter_{\upperBound}(\ecStates[1],\state[i])$.
			
			Similarly, since $\ecStates[2]$ is also a \BEC, there exists a play $\state[0]<\prime>\state[1]<\prime>\dots$ such that $\state[0]<\prime> = \state<\cap>$ and $\state[m]<\prime> = \state<\prime\prime>$ for some $m$ and for all $0 \leq j < m$ it holds that $\state[j+1]<\prime> \in \destination(\state[j]<\prime>, \strategy[j]<\prime>, \sigma_j^\ast)$ where $\strategy[j]<\prime> \in \badActs_{\upperBound}(\ecStates[2], \state[j]<\prime>)$ and $\sigma_j^\prime \in \counter_{\upperBound}(\ecStates[2],\state[j]<\prime>)$.
			
			Therefore, for any $\state, \state<\prime> \in \ecStates[1] \cup \ecStates[2]$ there exist a play $\state[0]<\prime\prime>\state[1]<\prime\prime>\dots$ such that $\state[0]<\prime\prime> = \state<\prime>$ and $\state[l]<\prime\prime> = \state<\prime\prime>$ for some $l$ and for all $0 \leq k < l$ it holds that $\state[k+1]<\prime\prime> \in \destination(\state[k]<\prime\prime>, \strategy[k]<\prime\prime>, \sigma_k^{\prime\prime})$ where $\strategy[k]<\prime\prime> \in \badActs_{\upperBound}(\ecStates[1] \cup \ecStates[2], \state[k]<\prime\prime>)$ and $\sigma_k^{\prime\prime} \in \counter_{\upperBound}(\ecStates[1] \cup \ecStates[2],\state[k]<\prime\prime>)$. 
			
			Thus, we have shown that the $\ecStates[1] \cup \ecStates[2]$ is an \EC. Further, since at all states of $\ecStates[1]$ and $\ecStates[2]$ the two conditions (i) and (ii) of \Cref{defBec} are fulfilled, also $\ecStates[1] \cup  \ecStates[2]$ fulfills them. Consequently, $\ecStates[1] \cup  \ecStates[2]$ is a \BEC which in turn proves that there exist maximal \BECs.
		\end{proof}
		
		Now that we have proven that maximal \BECs indeed exist, the next step is to prove the correctness of \procname{FIND\_MBECs}, i.e. the algorithm that can find maximal \BECs. 
		
		\findingMBECScorrect*

		\begin{proof}
			We prove the lemma in two steps. First, if for a non-empty set of states $\ecStates \subseteq \states$ it holds that $\ecStates \in \procname{FIND\_MBECs}(\game, \endComponent, \upperBound)$ then it holds that $\ecStates \subseteq \endComponent$ and $\ecStates$ is a maximal \BEC in $\endComponent$. Second, we will show that if a set of states $\ecStates\subseteq \endComponent$ is a maximal \BEC in $\endComponent$, then $\ecStates \in \procname{FIND\_MBECs}(\game, \endComponent, \upperBound)$.
			
			\begin{description}
				\item[1. Direction ``$\Rightarrow$"] Let $\ecStates \subseteq \endComponent$ be a non-empty set of states and let $\ecStates \in \procname{FIND\_MBEC}(\game, \endComponent, \upperBound)$. We prove that $\ecStates$ is a maximal \BEC in $\endComponent$ via structural induction.
				
				\begin{description}
					\item[Basis step:] Let $B \coloneqq \{\state \in \endComponent \mid \badActs_{\upperBound}(\endComponent, \state) \neq \emptyset\}$. The recursive function $\procname{FIND\_MBECs}$ has two base cases that we consider now.
									
					If $B = \endComponent$ holds, then $\procname{FIND\_MBECs}(\game, \endComponent, \upperBound) =\{B\}$ and thus, by the case assumption, $B= \ecStates$ must be true. Since $\endComponent$ is a \MEC, also $B$ and thus $\ecStates$ have to be \MECs. Further, as for all states $\state \in B$ there exist at least one hazardous strategy, then $B$, and thus also $\ecStates$, must be maximal \BECs. $\ecStates \subseteq \endComponent$ follows directly from the case assumption that $\endComponent = B$.
					
					Otherwise it must hold that $B = \emptyset$ which in turn would mean that $\procname{FIND\_MBECs}(\game, \endComponent, \upperBound) = \{\emptyset\}$. Since by the case assumption we have that $\ecStates \in \procname{FIND\_MBECs}(\game, \endComponent, \upperBound)$ we can conclude that $\ecStates = \emptyset $ which is a contradiction to the assumption that $\ecStates$ is a non-empty set of states. 
					
					\item[Recursive step:] It holds that $B$ is a non-empty set and $B \neq \endComponent$. In case  there exists no \MEC in $B$, then it holds that $\procname{FIND\_MBECs}(\game, \endComponent, \upperBound) = \{\emptyset\}$ which in turn means that $\ecStates = \emptyset$ which is a contradiction to the assumption that $\ecStates$ is a non-empty set of states.
									
					Otherwise, there exists $\endComponent<\prime> \in \procname{FIND\_MECs}(\game, B)$ such that $\ecStates \in \procname{FIND\_MBECs}(\game, \endComponent<\prime>, \upperBound)$. In case $B^\prime \coloneqq \{ \state \in \endComponent<\prime> \mid \badActs_{\upperBound}(\endComponent<\prime>, \state) \neq \emptyset\} = \endComponent<\prime>$ or $B^\prime = \emptyset$ we end up in the basis step. Otherwise, we end up in the recursive step, that will eventually lead to a basis step, as there is only a finite number of states.
				\end{description}
								
				\item[2. Direction ``$\Leftarrow$"] Let $\emptyset \neq \ecStates \subseteq \endComponent$ be a maximal \BEC in the \MEC $\endComponent$. We need to show that $\ecStates \in \procname{FIND\_MBEC}(\game, \endComponent, \upperBound)$ holds. 
				
				Let $B \coloneqq \{\state \in \endComponent \mid \badActs_{\upperBound}(\endComponent, \state) \neq \emptyset\}$. Assume towards a contradiction that $\ecStates  \notin \procname{FIND\_MBEC}(\game, \endComponent, \upperBound)$. As $\ecStates$ is a maximal \BEC in $\endComponent$, it must be true that $B \coloneqq \{\state \in \endComponent \mid \badActs_{\upperBound}(\endComponent, \state) \neq \emptyset\} \neq \emptyset$ and $\ecStates \subseteq B$. Since by assumption  $\ecStates  \notin \procname{FIND\_MBEC}(\game, \endComponent, \upperBound)$ holds, for all $\endComponent<\prime> \in \procname{FIND\_MECs}(\game, B)$ it must be true that $\ecStates \notin \procname{FIND\_MBECs}(\game, \endComponent<\prime>, \upperBound)$. Consequently, $\ecStates$ is not a \MEC in $B$ which is a contradiction to the assumption that $\ecStates$ is a maximal \BEC.
			\end{description}
		\end{proof}

\subsection{Soundness and Completeness}\label{apx:soundness_and_completeness}
In order to prove the soundness and completeness of \cref{alg:bvi} with \cref{alg:deflate_mecs} as \procname{DEFLATE} routine, we need to prove that the sequence of lower and upper bounds converges to a fixpoint. Therefore, before we can prove \cref{theo:soundness_completeness}, first we need to prove the following lemma. 

\begin{lemma}[\cref{alg:bvi} converges to a fixpoint]\label{lemma:convergence_to_fixpoint}
	The BVI algorithm (\cref{alg:bvi}) converges to a fixpoint, i.e. $\lim_{k\rightarrow \infty}(\preop<k>(\lowerBound<0>), \deflBell^k\left( \upperBound<0>\right)) = (\preop(\lim_{k\rightarrow \infty}\preop<k>(\lowerBound<0>)),  \deflBell(\lim_{k\rightarrow \infty}\deflBell^k\left( \upperBound<0>\right))).$ 
\end{lemma}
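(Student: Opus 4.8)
The plan is to prove the two coordinates separately, since the asserted equation says exactly that $L^{\ast} := \lim_{k\to\infty}\preop<k>(\lowerBound<0>)$ is a fixpoint of $\preop$ and that $U^{\ast} := \lim_{k\to\infty}\deflBell^k(\upperBound<0>)$ is a fixpoint of $\deflBell$. The lower-bound coordinate is immediate: by \cref{theo:convergentUnderApprox} the sequence $\preop<k>(\lowerBound<0>)$ converges to $\valR$, and since $\valR$ is the least fixpoint of the Bellman operator \cite[Thm.~1]{Alfaro2004}, we get $\preop(L^{\ast}) = \preop(\valR) = \valR = L^{\ast}$. All the work is in the upper-bound coordinate.

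For the upper bound I would first establish that $U^{\ast}$ exists and is well behaved. Every iterate $\deflBell^k(\upperBound<0>)$ is a valid upper bound, hence $\geq \valR$ by \cref{lemValidDB}; consequently \cref{lem:fundamentals}~(i) applies and gives $\deflBell^{k+1}(\upperBound<0>) \leq \deflBell^{k}(\upperBound<0>)$, so the sequence is monotonically decreasing and bounded below by $\valR$. It therefore converges pointwise to $U^{\ast} = \inf_k \deflBell^k(\upperBound<0>)$, and the set of iterates is a directed set in the CPO of valuations introduced in \cref{lem:Ustar-is-fixpoint}. A useful intermediate step, obtained essentially for free, is $\preop(U^{\ast}) = U^{\ast}$: the inequality $\preop(U^{\ast}) \leq U^{\ast}$ is \cref{lem:fundamentals}~(i) (as $U^{\ast}\geq\valR$), while for the converse, $\deflBell^{k+1}(\upperBound<0>) = \deflop(\preop(\deflBell^{k}(\upperBound<0>))) \leq \preop(\deflBell^{k}(\upperBound<0>))$ again by \cref{lem:fundamentals}~(i), and taking infima together with the continuity of $\preop$ (\cref{lem:Ustar-is-fixpoint}) yields $U^{\ast} \leq \inf_k \preop(\deflBell^{k}(\upperBound<0>)) = \preop(U^{\ast})$.

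It remains to show $\deflop(U^{\ast}) = U^{\ast}$, which, using $\preop(U^{\ast})=U^{\ast}$, is equivalent to $\deflBell(U^{\ast}) = U^{\ast}$. I would obtain this by invoking the Fixpoint Theorem (\cref{thm:fixpoint-theorem}) in its greatest-fixpoint form: restricting the CPO to valuations that agree with $\upperBound<0>$ on $\success\cup\winning$ (these components are preserved by both operators) makes $\upperBound<0>$ the top element, and $\deflBell$ is order-preserving on the iterate chain by \cref{DBorderPreserving}. The one nontrivial hypothesis is continuity, i.e.\ that $\deflBell$ preserves the infimum of the decreasing chain of its own iterates; this is exactly the statement $\deflBell(\inf_k \deflBell^k(\upperBound<0>)) = \inf_k \deflBell^{k+1}(\upperBound<0>)$ we want. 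After factoring out the already-known continuity of $\preop$, it suffices to show that $\deflop$ preserves the infimum of the decreasing chain $\bigl(\preop(\deflBell^{k}(\upperBound<0>))\bigr)_k$.

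The continuity of deflation is the main obstacle. Order-preservation alone is not enough: it only yields the post-fixpoint inequality $\deflBell(U^{\ast}) \leq U^{\ast}$, whereas for a non-continuous operator the $\omega$-limit of the iterates can be a \emph{strict} post-fixpoint. The difficulty is that the combinatorial data on which $\deflop$ depends --- the maximal \BECs returned by \procname{FIND\_MBECs}, and for each the hazardous, trapping, and deflating strategy sets $\badActs$, $\counter$, $\deflStrats$, together with the best exits chosen in \cref{alg:deflate_mecs} --- are defined through (non-strict) dominance inequalities on Bellman values and may jump at the limit valuation, precisely at boundary cases where some of these inequalities turn into equalities. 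To overcome this I would exploit that the game is finite, so only finitely many such configurations exist, and that the chain is monotone: since the Bellman values depend continuously on the valuation, every strict dominance holding at $U^{\ast}$ already holds for all large $k$, so that the decomposition into maximal \BECs and the relevant strategy supports stabilize on the tail of the chain. Combined with the fact that the value of each exiting sub-game depends continuously (indeed Lipschitz-continuously, as a matrix-game value) on its entries, which are linear in the valuation, this gives continuity of $\bestExitVal[\valuation](\ecStates)$ in $\valuation$ along the tail; since the only effect of $\deflop$ is the pointwise minimum with these best exit values (\cref{line:updateU}), it then preserves the infimum and \cref{thm:fixpoint-theorem} delivers $\deflBell(U^{\ast}) = U^{\ast}$. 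Special care must be taken at the equality boundaries, where I would argue directly from \cref{lem:bestExitValSound} and the monotonicity of the iterates that no downward jump of the best exit value can occur in the limit.
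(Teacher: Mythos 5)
Your overall route coincides with the one the paper takes in its proof of \cref{lemma:convergence_to_fixpoint}: work in a (suitably restricted) CPO of bound vectors in which $(\lowerBound<0>,\upperBound<0>)$ is the extremal element, dispatch the lower-bound coordinate via \cref{theo:convergentUnderApprox}, obtain monotonicity along the iterate chain from \cref{DBorderPreserving}, and reduce the whole statement to Scott-continuity of $\deflop$ so that \cref{thm:fixpoint-theorem} (Kleene iteration) applies. The parts you carry out in full are sound: the chain $\deflBell^k(\upperBound<0>)$ is decreasing and bounded below by $\valR$ (\cref{lemValidDB} together with \cref{lem:fundamentals}), and your derivation of $\preop(U^\ast)=U^\ast$ from \cref{lem:fundamentals} and the continuity of $\preop$ established in \cref{lem:Ustar-is-fixpoint} is correct.

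The genuine gap is in the one step carrying the real content, the continuity of $\deflop$ along the chain, and your proposed repair does not close it. You argue that finiteness of the game plus monotonicity of the chain forces the output of \procname{FIND\_MBECs} and the sets $\badActs$, $\counter$, $\deflStrats$ to stabilize on a tail, after which $\bestExitVal$ is a continuous (matrix-game) function of the valuation. That inference is invalid as stated: membership in these sets is a boolean combination of strict and non-strict inequalities between continuous functions of the valuation (\cref{def:badAction,def:badCounterStrategy,def:deflStrats}), and such a predicate can flip infinitely often along a componentwise-decreasing convergent sequence. Only comparisons that hold \emph{strictly} at the limit transfer to a tail; a comparison that degenerates to an equality at the limit can oscillate forever --- e.g.\ the predicate $y\le x/2$ along the monotone sequence $(1,1),(0.9,0.1),(0.1,0.09),(0.09,0.01),(0.01,0.009),\dots\downarrow(0,0)$ alternates indefinitely. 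So the ``equality boundary'' case you defer to ``special care'' is not a corner case but precisely the scenario that breaks continuity: deflation could act trivially on every finite iterate yet strictly decrease the limit, and \cref{lem:bestExitValSound} only bounds $\bestExitVal$ from below by $\valR$, not relative to $U^\ast$. (A smaller soft spot: your aside that order-preservation already yields $\deflBell(U^\ast)\le U^\ast$ is not licensed, since \cref{DBorderPreserving} applies only to valid upper bounds in the sense of \cref{def:validUpperBound}, and $U^\ast$ is not one.) For comparison, the paper's own computation is vulnerable at exactly the same place: its chain of equalities evaluates $\exitVal[d]$ for every member $d$ of the directed set using the BEC decomposition and strategy sets induced by the \emph{limit} valuation, i.e.\ it silently assumes the very stabilization you flag. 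You have therefore located the crux correctly and made the hidden assumption explicit, but your proposal does not actually supply the missing continuity argument.
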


\begin{proof}
 We consider the domain $\mathbb{V} := [0,1]^{|\states|} \times [0,1]^{|\states|}$, i.e. every element consists of two vectors of real numbers, the under- and over-approximation. The bottom element of the domain, denoted by $\bot$, is $(\vec{0},\vec{1})$, where for $a \in [0,1]$, $\vec{a}$ denotes the function that assigns $a$ to all states. We further restrict the domain to exclude elements of the domain that are trivially irrelevant for the computation. In particular, we exclude all tuples $(\lowerBound, \upperBound)$ where $\lowerBound(\state) < 1$ for a target state $\state \in \success$ or $\upperBound(\state) > 0$ for a state with no path to the target state $\state \in \winning$. Then the bottom  element is $\bot = (\lowerBound<0>, \upperBound<0>)$, i.e. the vector that we have before the first iteration of the main loop of \cref{alg:bvi}. Concretely, $\lowerBound<0>(\state)$ is 1 for all $\state \in \success$, i.e. target states, and 0 everywhere else, and $\upperBound<0>(\state)$ is 0 for all $\state \in \winning$, i.e. states where $\safe$ can surely win, and 1 everywhere else.

We define a comparator $\sqsubseteq$ on $\mathbb{V}$, to compare two elements of the domain. We write $(\lowerBound<k>, \upperBound<k>) \sqsubseteq (\lowerBound<k+1>, \upperBound<k+1>)$ if and only if both $\lowerBound<k> \leq \lowerBound<k+1>$ and $\upperBound<k> \geq \upperBound<k+1>$ hold with component-wise comparison. Intuitively, $(\lowerBound<k>, \upperBound<k>) \sqsubseteq (\lowerBound<k+1>, \upperBound<k+1>)$ holds if $(\lowerBound<k+1>, \upperBound<k+1>)$ is a more precise approximation than $(\lowerBound<k>, \upperBound<k>)$. The comparator $\sqsubseteq$ induces a complete partial order over the domain, since we have a bottom element and every direct subset has a supremum; the latter claim holds, because $\sqsubseteq$ reduces to component-wise comparison between real numbers from [0,1], where suprema exist. For more details on the definition of directed set and complete partial orders, we refer to \cite{daveyIntroductionLatticesOrder2002}.

\cref{alg:bvi} first applies the Bellman operator on the over- and under-approximation and subsequently applies the deflate operator on the over-approximation (i.e. upper bound). Thus, the operator that mimics the behavior of the algorithm is the following $\mathsf{BVI}(\lowerBound<k>, \upperBound<k>) = (\preop(\lowerBound<k>), \deflBell(\upperBound<k>))$.

From \cref{theo:convergentUnderApprox} we know that the under-approximation converges. Also, by \cref{DBorderPreserving} we know that $\deflBell$ is order-preserving. Thus, for the final argument it remains to show that $\deflop$ is \newlink{def:scottContinuity}{(Scott-)continuous}. 

A map is \newlink{def:scottContinuity}{(Scott-)continuous} if, for every \newlink{def:directedSet}{directed set} $D$ in $(\proj{2}{\mathbb{V}}, \sqsubseteq)$, the subset $\deflop(D)$ of $(\proj{2}{\mathbb{V}}, \sqsubseteq)$ is directed, and $\deflop(\sup D)=\sup \deflop(D)$. Let $\state \in \states$, if $\state$, under $\sup D$, does not belong to any \BEC, then $\deflop(\sup D)(\state) = \sup D (\state)$. Thus, we proceed with the assumption that under the valuation $\sup D$, $\state \in \ecStates \subseteq \states$ such that $\ecStates$ is a \BEC. 

Let $\badActs_{\sup D}(\ecStates, \state)$ be the set of hazardous strategies and let $\counter_{\sup D}(\ecStates, \state)$ be the set of suitable counter strategies for player $\safe$ (see \Cref{defBec}). Since $\ecStates$ is a \BEC both sets are non-empty, i.e.  $\badActs_{\sup D}(\ecStates, \state) \neq \emptyset$ and $\counter_{\sup D}(\ecStates, \state) \neq \emptyset$, at all $\state \in \ecStates$. Thus, $\deflop(\sup D)(\state) = \min(\sup D(\state),\bestExitVal[\sup D](\ecStates))$. For the sake of readability let $\strategies<\prime>(\state) \coloneqq \distributions(\actionAssignment[\reach](\state) \setminus \bigcup_{\strategy<\prime\prime> \in \badActs_{\sup D}(\ecStates, \state)} \support(\strategy<\prime>))$ and $\strategies[\safe]<\prime>(\state) \coloneqq \counter_{\sup D}(\ecStates, \state)$ for some $\state \in \ecStates$. Then, the following chain of equations holds for $\state \in \ecStates$.
\begingroup
\allowdisplaybreaks
\begin{align*}
&\deflop(\sup D)(\state) = \min(\sup D(\state),\bestExitVal[\sup D](\ecStates)) \\
&= \min(\sup D(\state),\max_{\state<\prime> \in \ecStates} \exitVal[\sup D]( \ecStates, \state<\prime>)) \tag*{(By \Cref{def:bestExitVal} (\newlink{defBestExit}{best exit value}))}\\ 
&= \min(\sup D(\state),\max_{\state<\prime> \in \ecStates} \adjustlimits\sup_{\strategy \in \strategies<\prime>(\state<\prime>)} \inf_{\strategy* \in \strategies[\safe]<\prime>(\state<\prime>)} \preop(\sup_{d \in D} d)(\state<\prime>, \strategy, \strategy*))\tag*{(By \Cref{def:exitVal} (\newlink{def:exitVal}{exit value}))}\\
&= \min(\sup D(\state),\max_{\state<\prime> \in \ecStates} \sup_{d \in D} \adjustlimits\sup_{\strategy \in \strategies<\prime>(\state<\prime>)} \inf_{\strategy* \in \strategies[\safe]<\prime>(\state<\prime>)}  \preop(d)(\state<\prime>, \strategy, \strategy*))  \tag*{(Bellman operator is Scott-continuous (see proof Theorem \ref{theo:BVInoEC}) )}\\
&= \min(\sup D(\state),\sup_{d \in D} \max_{\state<\prime> \in \ecStates} \adjustlimits\sup_{\strategy \in \strategies<\prime>(\state<\prime>)} \inf_{\strategy* \in \strategies[\safe]<\prime>(\state<\prime>)} \preop(d)(\state<\prime>, \strategy, \strategy*)) \tag*{(By \Cref{arg1:lem53})}\\
&= \min(\sup D(\state),\sup_{d \in D} \max_{\state<\prime> \in \ecStates} \exitVal[d](\ecStates, \state<\prime>)) \tag{By \Cref{def:exitVal} (\newlink{def:exitVal}{exit value}))}\\
&= \min(\sup D(\state),\sup_{d \in D} \bestExitVal[d](\ecStates)) \tag{By \Cref{def:bestExitVal} (\newlink{defBestExit}{best exit value})}\\
&= \sup_{d \in D} \min(d(\state),\bestExitVal[d](\ecStates)) \tag{$\min$ is Scott-continuous}\\
&= \sup_{d \in D} \mathfrak{D}(d)(\state). \tag{Since $\deflop$ is the deflate operator}
\end{align*}
\endgroup

\begin{argument}\label{arg1:lem53}
	 Since $\max_{\state<\prime> \in \ecStates} \exitVal[\sup D]( \ecStates, \state<\prime>)$ exists the following chain of equations holds:
	 \begingroup
	 \allowdisplaybreaks
	\begin{align*}
		&\adjustlimits\sup_{d \in D} \max_{\state<\prime> \in \ecStates} \adjustlimits\sup_{\strategy \in \strategies<\prime>(\state<\prime>)} \inf_{\strategy* \in \strategies[\safe]<\prime>(\state<\prime>)} \preop(d)(\state<\prime>, \strategy, \strategy*)\\
		&=\adjustlimits\sup_{d \in D} \sup_{\state<\prime> \in \ecStates} \adjustlimits\sup_{\strategy \in \strategies<\prime>(\state<\prime>)} \inf_{\strategy* \in \strategies[\safe]<\prime>(\state<\prime>)} \preop(d)(\state<\prime>, \strategy, \strategy*) \tag{As the maximum exists}\\
		&= \sup_{(d, \state<\prime>) \in (D \times \ecStates)} \adjustlimits\sup_{\strategy \in \strategies<\prime>(\state<\prime>)} \inf_{\strategy* \in \strategies[\safe]<\prime>(\state<\prime>)} \preop(d)(\state<\prime>, \strategy, \strategy*)\\
		&=\sup_{\state<\prime> \in \ecStates} \sup_{d \in D}  \adjustlimits\sup_{\strategy \in \strategies<\prime>(\state<\prime>)} \inf_{\strategy* \in \strategies[\safe]<\prime>(\state<\prime>)} \preop(d)(\state<\prime>, \strategy, \strategy*)\\
		&=\max_{\state<\prime> \in \ecStates} \sup_{d \in D}  \adjustlimits\sup_{\strategy \in \strategies<\prime>(\state<\prime>)} \inf_{\strategy* \in \strategies[\safe]<\prime>(\state<\prime>)} \preop(d)(\state<\prime>, \strategy, \strategy*).
	\end{align*}
	\endgroup
\end{argument}
We have shown that $\preop$ and $\deflop$ are both Scott-continuous, thus the composition $\deflBell$, i.e. the sequential application of the operators $\preop$ and $\deflop$, is also continuous \cite{scottOutlineMathematicalTheory1977}. Further, $\mathbb{V}$ is a complete partial order. Therefore, Kleen's Fixpoint Theorem \cite[Theorem 8.15]{daveyIntroductionLatticesOrder2002} is applicable.

Then, we know that 
\begin{align*}
\lim_{k \rightarrow \infty} \deflBell^k(\bot) = \sup_{k \geq 0}\deflBell^k(\bot) \tag*{($\star$)}
\end{align*}

holds and by the theorem we know that the fixpoint exists and is given by $\sup_{k \geq 0}\deflBell^k(\bot)$. Now, we can finally conclude:
\begin{align*}
\deflBell(\lim_{k \rightarrow \infty} \deflBell^k(\bot)) &\stackrel{(\star)}{=}  \deflBell(\sup_{k \geq 0} \deflBell^k(\bot))\\
&= \sup_{k \geq 0} \deflBell(\deflBell^k(\bot)) \tag*{(since $\deflBell$ is continuous)}\\
&= \sup_{k \geq 1} \deflBell^k(\bot)\\
&= \sup_{k \geq 0} \deflBell^k(\bot) \tag*{(since $\bot \sqsubseteq \deflBell^k(\bot)$ for all $k$)}\\
&= \lim_{k \rightarrow \infty} \deflBell^k(\bot). \tag*{by ($\star$)}
\end{align*}
\end{proof}

Now we can prove the following theorem.
\soundnessCompleteness*

\begin{proof}
	We denote by $\lowerBound<k>$ and $\upperBound<k>$ the lower and upper bound function after the $k$-th call of \procname{DEFLATE}. $\lowerBound<k>$ and $\upperBound<k>$ are monotonic under-, respectively, over-approximation of $\valR$ because they are updated via Bellman updates respectively $\deflBell$-updates, which are order-preserving as shown in \cref{DBorderPreserving} and soundness as shown in \cref{lemValidDB} .
	
	Since \procname{DEFLATE} iterates over finite sets, the computations take a finite time. Thus, it remains to prove that the main loop of \cref{alg:bvi} terminates, i.e., for all $\varepsilon>0$, there exists an $n \in \NN$ such that for all $\state \in \states$ $\upperBound<n>(\state) - \lowerBound<n>(\state) \leq \varepsilon$. It suffices to show that $\lim_{k\rightarrow \infty} \upperBound<k> - \valR = 0$, because $\lim_{k\rightarrow \infty} \lowerBound<k> = \valR$ (from e.g. \cite{raghavanAlgorithmsStochasticGames1991}).
	
	In the following let $\upperBound<\star>: = \lim_{k \rightarrow \infty} \upperBound<k>$, that exists by Lemma \ref{lem:Ustar-is-fixpoint}, and $\Delta(\state) := \mathsf{U}^\star(\state) - \valR(\state)$. Assume towards a contradiction that the algorithm does not converge, i.e., there exists a state $\state \in \states$ with $\Delta(\state) >0$.
	
	The proof is structured as follows.
	\begin{itemize}
		\item From $\Delta > 0$ we derive that there has to exist a \BEC.
		\item The states with the maximal $\Delta$ contain \BECs.
		\item \cref{alg:bvi} will find  a \BEC contained in $\ecStates \subseteq \states$ and deflate it.
		\item Deflating will decrease the upper bound of that states contained in the \BEC, which is a contradiction because by Lemma \ref{lemma:convergence_to_fixpoint}, $\upperBound<\star>$ is a fixpoint.
	\end{itemize}
	
	Let $\Delta^{\max} := \max_{\state \in \states} \Delta(\state)$ and let $\endComponent:= \{ \state \in \states \;|\; \Delta(\state) = \Delta^{\max}\}$. If $\endComponent$ does not contain any \BECs, then the contraposition of \cref{theo:BVInoEC} proves our goal. Thus, we continue with the assumption that $\endComponent$ contains \BECs. 
	
	Let $\ecStates \subseteq \endComponent$ be a \BEC contained in $\endComponent$ that \cref{alg:deflate_mecs} will eventually find and deflate. We now consider \emph{bottom} \BECs. A \BEC $\ecStates<\prime>$ is called bottom in $\ecStates$ if none of the successors of a strategy that leaves the \BEC $\ecStates<\prime>$, is part of another \BEC in $\ecStates$. A bottom \BEC can be computed by first finding the bottom \MEC within $\ecStates$, then identifying all \BECs, ordering them topologically and finally picking the one at the end of a chain.
	
	Let $\ecStates<\prime> \subseteq \ecStates$ be a bottom \BEC that Algorithm \ref{alg:deflate_mecs} eventually finds. In order to deflate the \BEC, we need to estimate the exit value for each $\state \in \ecStates<\prime>$, as defined in Definition \ref{def:exitVal}. Further, $\badActs_{\upperBound<\star>}(\ecStates<\prime>, \state)$ is the set of hazardous strategies and since $\ecStates<\prime>$ is a \BEC, at all states $\state \in \ecStates<\prime>$, we have $\badActs_{\upperBound<\star>}(\ecStates<\prime>, \state) \neq \emptyset$. 
	
	We distinguish two cases: (i) $\counter_{\upperBound<\star>}(\ecStates<\prime>, \state) \neq \emptyset$; and (ii) $\counter_{\upperBound<\star>}(\ecStates<\prime>, \state) = \emptyset$.
	
	\begin{description}
		\item[Case (i)] In this case we have that $\counter_{\upperBound<\star>}(\ecStates<\prime>, \state) \neq \emptyset$. Then $\deflStrats_{\upperBound<\star>}(\ecStates<\prime>, \state) \neq \emptyset$ (and because $\ecStates \cap \winning = \emptyset$), thus the following chain of equations holds.
		\begin{align*}
			\upperBound<\star>(\state) &= \adjustlimits\sup_{\strategy \in \strategies(\state)}\inf_{\strategy* \in \strategies*(\state)}\preop(\upperBound<\star>)(\state, \strategy, \strategy*) \tag{$\upperBound<\star>$ is a fixpoint}\\
			&=\adjustlimits\sup_{\strategy\in\badActs_{\upperBound<\star>}(\ecStates<\prime>, \state)}\inf_{\strategy* \in \counter_{\upperBound<\star>}(\ecStates<\prime>, \state)}\preop(\upperBound<\star>)(\state, \strategy, \strategy*) \tag{$\badActs_{\upperBound<\star>}(\ecStates<\prime>, \state) \neq \emptyset$}\\
			&>\adjustlimits\sup_{\strategy \in \deflStrats_{\upperBound<\star>}(\ecStates<\prime>, \state)}\inf_{\sigma \in \counter_{\upperBound<\star>}(\ecStates<\prime>, \state)} \preop(\upperBound<\star>)(\state, \strategy, \strategy*) \tag{$\deflStrats_{\upperBound<\star>}(\ecStates<\prime>, \state)$ are sub-optimal with respect to $\upperBound<\star>$}\\
			&=	\exitVal[\upperBound<\star>](\ecStates<\prime>,\state)\tag{By Def. \ref{def:exitVal}}
		\end{align*}
		Since $\bestExitVal[\upperBound<\star>](\ecStates<\prime>) = \max_{\state<\prime> \in \ecStates<\prime>}\exitVal[\upperBound<\star>](\ecStates<\prime>,\state<\prime>)$, and above we have shown that for all $\state \in \ecStates<\prime>$ it holds $\upperBound<\star>(\state) > \exitVal[\upperBound<\star>](\ecStates<\prime>,\state)$, we obtain a contradiction to the assumption that $\upperBound<\star>$ is a fixpoint.
		
		\item[Case (ii)] In this case it holds that $\counter_{\upperBound<\star>}(\ecStates<\prime>, \state)= \emptyset$. Thus, Player $\safe$ prefers strategies that are leaving with respect to $\badActs_{\upperBound<\star>}(\ecStates, \state)$. Since $\ecStates<\prime>$ is a bottom \BEC at least one successor state does not belong to $\ecStates$. Therefore, the following chain of equations holds.
		\begin{align*}
			\Delta^{\max} - \valR(\state) &= \upperBound<\star>(\state) \tag{By definition of $\Delta^{\max}$}\\
			&= \adjustlimits\sup_{\strategy \in \strategies(\state)}\inf_{\strategy* \in \strategies*(\state)}\preop(\upperBound<\star>)(\state, \strategy, \strategy*) \tag{$\upperBound<\star>$ is a fixpoint}\\
			&= \adjustlimits\sup_{\strategy \in \strategies(\state)}\inf_{\strategy* \in \strategies*(\state)}\sum_{(\actionR,\actionS)\in\actions}\sum_{\state<\prime> \in \states} \upperBound<\star>(\state<\prime>) \cdot  \transitions(\state, \actionR, \actionS)(\state<\prime>) \cdot \strategy(\actionR) \cdot \strategy*(\actionS) \tag{Unfolding definition of $\preop$}\\
			&= \adjustlimits\sup_{\strategy \in \strategies(\state)}\inf_{\strategy* \in \strategies*(\state)}\sum_{(\actionR,\actionS)\in\actions}\sum_{\state<\prime> \in \ecStates} \upperBound<\star>(\state<\prime>) \cdot  \transitions(\state, \actionR, \actionS)(\state<\prime>) \cdot \strategy(\actionR) \cdot \strategy*(\actionS) \\
			&+ \sum_{\state<\prime\prime> \in \states \setminus \ecStates} \upperBound<\star>(\state<\prime\prime>) \cdot  \transitions(\state, \actionR, \actionS)(\state<\prime\prime>) \cdot \strategy(\actionR) \cdot \strategy*(\actionS) \tag{Player $\safe$ prefers leaving $\ecStates<\prime>$ and thus $\ecStates$}\\
			&= \adjustlimits\sup_{\strategy \in \strategies(\state)}\inf_{\strategy* \in \strategies*(\state)}\sum_{(\actionR,\actionS)\in\actions}\sum_{\state<\prime> \in \ecStates} \upperBound<\star>(\state<\prime>) \cdot  \transitions(\state, \actionR, \actionS)(\state<\prime>) \cdot \strategy(\actionR) \cdot \strategy*(\actionS) \\
			&+ \sum_{\state<\prime\prime> \in \states \setminus \ecStates} \left(\underbrace{\Delta(\state<\prime\prime>)}_{< \Delta^{\max}}+\valR(\state<\prime\prime>)\right) \cdot  \transitions(\state, \actionR, \actionS)(\state<\prime\prime>) \cdot \strategy(\actionR) \cdot \strategy*(\actionS)\tag{By def. of $\Delta$}\\
			&< \adjustlimits\sup_{\strategy \in \strategies(\state)}\inf_{\strategy* \in \strategies*(\state)}\sum_{(\actionR,\actionS)\in\actions}\sum_{\state<\prime> \in \ecStates} \upperBound<\star>(\state<\prime>) \cdot  \transitions(\state, \actionR, \actionS)(\state<\prime>) \cdot \strategy(\actionR) \cdot \strategy*(\actionS) \\
			&+ \sum_{\state<\prime\prime> \in \states \setminus \ecStates} \left( \Delta^{\max}+\valR(\state<\prime\prime>)\right) \cdot  \transitions(\state, \actionR, \actionS)(\state<\prime\prime>) \cdot \strategy(\actionR) \cdot \strategy*(\actionS) \tag{$\Delta$ for states outside $\ecStates$ is strictly smaller than $\Delta^{\max}$}\\
			&= \upperBound<\star>(\state) \tag{Everything sums-up to 1}
		\end{align*}
		Thus, we get a contradiction to the assumption that $\upperBound<\star>$ is a fixpoint.
	\end{description}

\end{proof}

\section{Counter-examples for previous works}\newtarget{sec:mistakes}
\subsection{Mistake in \cite{Chatterjee2012}}\label{counter-example:chaterjee}

\begin{figure}[tb]
	\centering
	\resizebox{0.7\textwidth}{!}{\begin{tikzpicture}[
	state/.style={draw,circle, minimum size = 0.5cm, align=center},
	stochSt/.style={draw,circle, fill=black, minimum size = 0.1cm, inner sep=0pt},
	transition/.style={->, black, -{Stealth[length=2mm]}, align=center},
	exit/.style={->, dashed, black, -{Stealth[length=2mm]}, align=center},
	invalid/.style={draw,rectangle,fill=black!30,minimum width=2cm,minimum height=1cm},
	every label/.append style = {font=\small}
	]
	
	\node (start) at (5,-1) {};
	\node[state] (S4) at (0,0) {\scalebox{1.3}{$s_4$}};
	\node[state,right of=S4, xshift=2cm] (S3)  {\scalebox{1.3}{$s_3$}};
	\node[state,left of=S4, xshift=-2cm] (S5)  {\scalebox{1.3}{$s_5$}};
	\node[state,right of=S3, xshift=2cm] (S0)  {\scalebox{1.3}{$s_0$}};
	\node[state, fill=lipicsYellow, right of=S0, xshift=2cm,yshift=-1.5cm] (S1)  {\scalebox{1.3}{$s_1$}};
	\node[state,right of=S0, xshift=2cm,yshift=1.5cm] (S2)  {\scalebox{1.3}{$s_2$}};
	\node[stochSt, right of=S0, xshift=0.5cm, yshift=0cm] (stoch) {};

	\draw[transition] (start) to (S0);
	\draw[transition] (S4) edge[bend left=20] node[above,align=center, yshift=0.1cm]{\scalebox{1.3}{$(\square, \mathsf{c})$}} (S3);
	\draw[transition] (S3) edge[bend left=20]  node[below,yshift=-0.1cm] {\scalebox{1.3}{$(\mathsf{b},\square)$}} (S4);
	\draw[transition] (S4) edge[bend left=0] node[above,align=center, yshift=0.1cm]{\scalebox{1.3}{$(\square, \mathsf{d})$}} (S5);
	\draw[transition] (S3) edge[bend left=0] node[above,align=center, yshift=0.1cm]{\scalebox{1.3}{$(\mathsf{a}, \square)$}} (S0);
	\draw[-] (S0) edge[bend left=0] node[above,align=center, yshift=0.1cm]{\scalebox{1.3}{$(\mathsf{a}, \mathsf{d})$}} (stoch);
	\draw[transition] (stoch) edge[bend left=50] (S0);
	\draw[transition] (stoch) edge[bend left=20] (S1);
	\draw[transition] (S0) edge[bend right=40] node[left,align=center, xshift=-0.3cm]{\scalebox{1.3}{$(\mathsf{a}, \mathsf{c})$, $(\mathsf{b},\mathsf{d})$}} (S1);
	\draw[transition] (S0) edge[bend left=40] node[left,align=center, xshift=-0.3cm]{\scalebox{1.3}{$(\mathsf{b}, \mathsf{c})$}} (S2);

	\draw[transition] (S2) edge[in=60,out=10,looseness=8] node[above]{\scalebox{1.3}{$(\square, \square)$}} (S2);
	\draw[transition] (S1) edge[in=60,out=10,looseness=8] node[above]{\scalebox{1.3}{$(\square, \square)$}} (S1);
	\draw[transition] (S5) edge[in=160,out=110,looseness=8] node[above]{\scalebox{1.3}{$(\square, \square)$}} (S5);
\end{tikzpicture}}
	\captionof{figure}{Counter example for the \BVI provided in \cite{Chatterjee2009}.}\label{apxfig:counter-example3}
\end{figure}
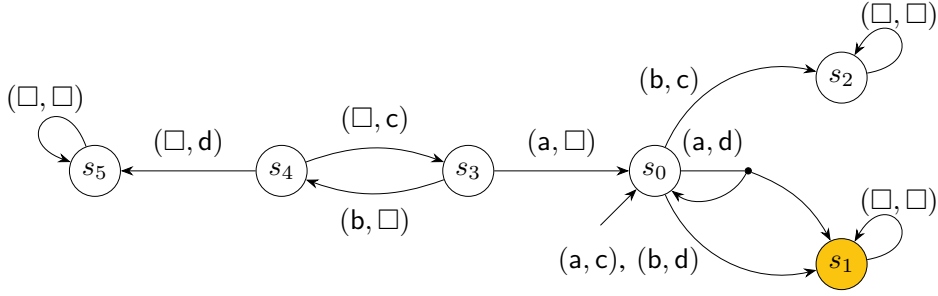

In \cite{Chatterjee2012}, the authors provide an exemplary \CSG that illustrates the incorrectness of the \BVI algorithm presented in \cite{Chatterjee2009}. \Cref{apxfig:counter-example3} shows the \CSG. The value attainable at $\state[5]$ is 0.6 while the value attainable at $\state[0]$ is $2-\sqrt{2}$. In our algorithm (\cref{alg:deflate_mecs}) the set $\{\state[4], \state[3]\}$ does not constitute a \BEC for the valuation $\upperBound(\state[4])=\upperBound(\state[3]) = \upperBound(\state[1])=1$ and $\upperBound(\state[2])=0$. Consequently, the while-loop of \cref{alg:deflate_mecs}, which is responsible for deflating, is not executed and thus the values of all states are updated only using the Bellman operator. 
This yields the correct values by \cref{theo:BVInoEC}. Therefore, \cref{alg:deflate_mecs} correctly sets the valuation of states $\state[4]$ and $\state[3]$ to the value 0.6. In contrast, the algorithm presented in \cite{Chatterjee2009} correctly sets the value of state $\state[4]$ to 0.6 but reduces the value of state $\state[3]$ to $2-\sqrt{2} < 0.6$.

\subsection{Mistake in \cite{eisentrautStoppingCriteriaValue2019}}\label{counter-example:eisentraut}

There was an attempt to fix the above problem in \cite{eisentrautStoppingCriteriaValue2019}, however, also this solution is incorrect. An exemplary \CSG that illustrates the incorrectness of the approach presented in \cite{eisentrautStoppingCriteriaValue2019} is the \CSG illustrated in \Cref{fig:non_trivial_ec}. While our approach correctly deflates the value of state $\state[0]$, the best exit as defined in \cite{eisentrautStoppingCriteriaValue2019} is falsely 1, i.e. the value of $\state[0]$ is never reduced.

\printglossary
\end{document}